\newtheorem{lemma}{Lemma}
\newtheorem{proposition}{Proposition}
\newtheorem{definition}{Definition}
\newcommand{\myexpect}[1]{\mathbb{E}\left( #1 \right)}
\newcommand{\myprobability}[1]{\mathrm{Pr}\left( #1 \right)}
\newcommand{\sigmaone}{\sigma^2_{\mathrm{cov}}}
\newcommand{\sigmatwo}{\sigma^2_{\mathrm{rec}}}
\newcommand{\sigmaonesqrt}{\sigma_{\mathrm{cov}}}
\newcommand{\sigmatwosqrt}{\sigma_{\mathrm{rec}}}
\newcommand{\myomegagen}{\Omega_{\mathrm{gen}}}
\newcommand{\myomegagennew}{\breve{\Omega}_{\mathrm{gen}}}
\newcommand{\mypower}{\mathcal{P}}
\newcommand{\myP}{\mathcal{P}}
\newcommand{\Ei}{\mathrm{Ei}}
\newcommand{\snrcov}{\mathsf{SNR}_{\mathrm{coh}}}
\newcommand{\snrrec}{\mathsf{SNR}_{\mathrm{non-coh}}}
\newcommand{\snr}{\mathsf{SNR}}
\begin{document}
\title{A Novel Receiver Design with Joint Coherent and Non-Coherent Processing}
\author{Wanchun Liu, Xiangyun Zhou, Salman Durrani and Petar Popovski   \thanks{\setlength{\baselineskip}{13pt} \noindent Wanchun Liu, Xiangyun Zhou and Salman Durrani are with Research School of Engineering, The Australian National University, Canberra, ACT 2601, Australia (emails: \{wanchun.liu, xiangyun.zhou, salman.durrani\}@anu.edu.au). 
Petar Popovski is with the Department of Electronic Systems, Aalborg University, Aalborg 9220, Denmark (email: petarp@es.aau.dk). }}
\maketitle

\begin{abstract}
In this paper, we propose a novel splitting receiver,
which involves joint processing of coherently and non-coherently received signals.
Using a passive RF power splitter, the received signal at each receiver antenna is split into two streams which are then processed by a conventional coherent detection (CD) circuit and a power-detection (PD) circuit, respectively.
The streams of the signals from all the receiver antennas are then jointly used for information detection.
We show that the splitting receiver creates a three-dimensional received signal space, due to the joint coherent and non-coherent processing.
We analyze the achievable rate of a splitting receiver, which shows that the splitting receiver provides a rate gain of $3/2$ compared to either the conventional (CD-based) coherent receiver or the PD-based non-coherent receiver in the high SNR regime.
We also analyze the symbol error rate (SER) for practical modulation schemes, which shows that the splitting receiver achieves asymptotic SER reduction by a factor of at least $\sqrt{M}-1$ for $M$-QAM compared to either the conventional (CD-based) coherent receiver or the PD-based non-coherent receiver.
\end{abstract}

{\hspace{6.8cm} \textbf{\small Index Terms}}\par
{\small Receiver~design; energy/power-detection based communications; joint coherent and non-coherent processing; 
	\\ \hspace*{0.4cm}RF signal splitting.}
\newpage
\section{Introduction}
\subsection{Background and Motivation}
Wireless communications has witnessed several major theoretical advancements in the last few decades, which have been quickly incorporated into communication standards, e.g., multiple input multiple output (MIMO) systems and orthogonal frequency division multiple access (OFDMA)~\cite{BOOKTse}.
The baseband receiver design, underlying these technologies, has been the receiver based on a coherent detection (CD), i.e., coherent receiver for short, which has been adopted exclusively in nearly all the popular wireless communication standards.
The coherent receiver design has {\color{black}stayed} virtually unchanged throughout the evolution of wireless communication systems to date.

The recent trend towards a large number of antennas at the transmitter or receiver (such as in massive MIMO and millimeter wave systems~\cite{MagazineRobert,Jeffrey}) provides incentive to rethink the coherent receiver design for future communication systems.
This is because provision of accurate channel state information (CSI) becomes challenging in scenarios with a massive number of antennas.
In this regard, \emph{power/intensity-detection} (PD) based \emph{non-coherent} receivers have been proposed~\cite{Goldsmith16} and~\cite{Petar16}.
Although, in general, non-coherent receivers suffer from a performance loss compared to coherent receivers,
their low cost and low power consumption makes them attractive for systems with a large number of antennas.
It is proved in~\cite{Goldsmith16} that PD-based non-coherent modulation in a massive single input multiple output (SIMO) system can achieve a scaling law which is the same as the coherent modulation scheme with an increasing number of antennas.
It is also shown in \cite{Petar16} that the performance of the PD-based non-coherent modulation asymptotically
approaches the performance of coherent detection in high SNR regimes.
These studies show that PD-based non-coherent receivers can offer a performance comparable to coherent receivers in future wireless systems.

Another motivation for considering PD-based non-coherent receivers comes from the recent interest in simultaneous wireless information and power transfer (SWIPT)~\cite{Bi15,Huang15,Krikidis_survey,XiaoLu}.
In such systems, a user is assumed to be equipped with an energy receiver which is based on non-coherent RF (radio frequency)-to-DC (direct current) conversion, and a conventional coherent (information) receiver.
In SWIPT systems, the user employs the energy receiver and the coherent receiver \emph{separately}:
(i) In time-switching, the user switches between the two receivers, depending on whether it is in RF energy harvesting (EH) mode or information detection mode, or
(ii) in power-splitting, the user splits the received signal into two streams, and then sends one stream to the energy receiver and the other to the coherent receiver.

Although the PD-based non-coherent receiver has received more attention in wireless communication systems recently, it has played an important role in optical communications for a long time.
For the low-cost wireless infrared communication system~\cite{InfraredMag}, \emph{intensity modulation} (IM) is the most commonly adopted modulation scheme.
For IM-based wireless infrared communication, information is carried by the instantaneous power of the carrier, and the receiver uses a photodetector to produce a current proportional to the received instantaneous power directly, i.e., PD-based non-coherent demodulation.
The wireless infrared communication channel is usually referred to as the intensity channel or the \emph{non-coherent additive white Gaussian noise (AWGN) channel}, echoing the coherent AWGN channel.
The modulation schemes and the capacity of the non-coherent AWGN channel were studied in~\cite{OpticModu} and~\cite{OpticalTIT}, respectively.

\subsection{Novel Contributions}
Motivated by the recent interest in the PD-based non-coherent receiver,
we consider a basic point-to-point communication system and revisit the design of the communication receiver.
Rather than focusing on an improved design for either coherent receiver or non-coherent receiver alone,
we consider a receiver with joint coherent and PD-based non-coherent processing.
To the best of our knowledge, this is an open problem in the literature and it is not immediately clear whether joint processing will be better than either coherent or PD-based non-coherent processing alone.
In this work, we show that it can in fact significantly improve the achievable rate and also reduce the symbol error rate (SER).

The main contributions of the paper are summarized as follows:
\begin{enumerate} [1)]
\item We propose a novel information receiver architecture for a $K$-antenna receiver called \emph{splitting receiver}.
The received signal at each antenna is split into two streams by a passive power splitter with a certain \emph{splitting ratio}. 
One stream is processed by a conventional (coherent) CD circuit, and the other is processed by a (non-coherent) PD circuit, and then the $2 K$ streams of processed signal are jointly used for information detection. 
	
\item As a variant of the splitting receiver, we also propose a simplified receiver where no power splitters are required and a fixed number of antennas are connected to CD circuits and the remaining antennas are connected to PD circuits. 
Analytically, the simplified receiver can be treated as a special case of a splitting receiver, where the splitting ratio at each antenna can only take $1$ or $0$.

\item We show that the splitting receiver (and also the simplified receiver), increases the dimension of the received signal space, since the noise adds linearly to the signal in the coherent receiver part and the noise adds to the squared amplitude of the signal in the PD-based non-coherent receiver part.
This results in improved communication performance.

\item From an information-theoretic perspective, we model the channel introduced by the splitting receiver as a splitting channel.
Assuming a Gaussian input to the splitting channel, in the high signal-to-noise-ratio (SNR) regime, 
we show analytically that: 
(i) The asymptotic maximum mutual information of the splitting channel is $3/2$ times that of either 
the coherent AWGN channel or the non-coherent AWGN channel, under the same average received signal power constraint.
(ii) For a splitting receiver with a single receiver antenna, the asymptotic optimal power splitting ratio is $1/3$. 
(iii) For the simplified receiver with a large number of receiver antennas, connecting half the antennas to the CD circuits and the other half to the PD circuits is the optimal strategy. 

\item For transmissions based on practical modulations, we analyze the symbol decision region and the symbol error rate (SER) at the splitting receiver. 
Considering high SNR regime, we derive the SER expression for a general modulation scheme.
The analytical results show that, compared with the conventional coherent receiver, the splitting receiver achieves asymptotic SER reduction by a factor of $M-1$ for $M$-PAM (pulse amplitude modulation) and $\sqrt{M} -1$ for $M$-QAM (quadrature amplitude modulation).

\end{enumerate}

\subsection{Paper Organization and Notation}
This paper is organized as follows. 
Section II presents the system model, the proposed receiver architectures and the splitting channel. 
Section III analyzes the mutual information of the splitting channel with a Gaussian input. 
Section IV presents the received signal constellation at the splitting receiver for practical modulation schemes. 
Section V shows the SER results of practical modulation schemes. 
Finally, Section VI concludes the paper.

\underline{Notation:} 
$\tilde{\cdot}$ denotes a complex number. 
$(\cdot)^*$ and $\vert \cdot \vert$ denote the conjugate and the absolute-value norm of a complex number, respectively.
$\mathrm{Real}\{\cdot\}$ and $\mathrm{Imag}\{\cdot\}$ denotes the real part and the imaginary part of a complex number, respectively.
$\myprobability{\cdot}$ denotes the probability of an event. 
$h(\cdot)$, $h(\cdot,\cdot)$, $h(\cdot \vert \cdot)$ denote the differential entropy, joint and conditional differential entropy, respectively. $\mathcal{I}(\cdot;\cdot)$ denotes the mutual information.
Random variables and their realizations are denoted by upper and lower case letters, respectively.
$\mathrm{erfc}(\cdot)$ is the complementary error function, and
$Q(x) \triangleq \frac{1}{2} \mathrm{erfc}(\frac{x}{\sqrt{2}})$ is the Q-function.

\section{System Model}
Consider the communication between a single-antenna transmitter and a $K$-antenna receiver.
The average received signal power at each antenna is denoted by $\mypower$.
The channel coefficient at the $k$th receiver antenna is denoted by $\tilde{h}_k$.

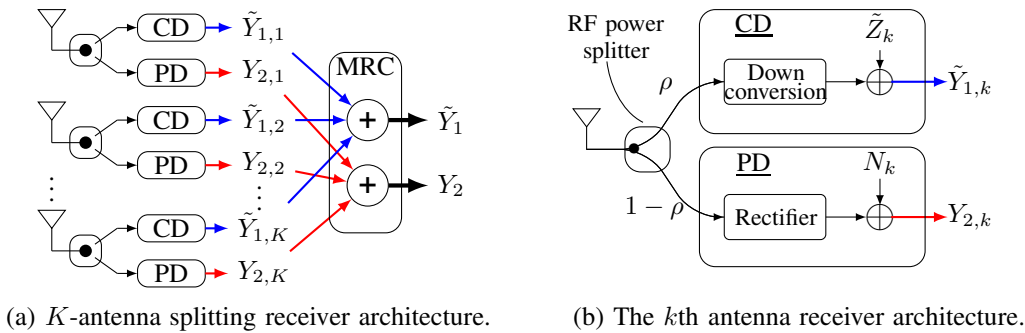
\begin{figure}[t]	
	\small
	\renewcommand{\captionlabeldelim}{ }
	\renewcommand{\captionfont}{\small} \renewcommand{\captionlabelfont}{\small}
	\centering
	\usetikzlibrary{arrows}
	\usetikzlibrary{arrows}
\vspace{-0.5cm}
\begin{tikzpicture} [scale=0.6]
\draw [-latex,rounded corners=10pt] (-4.5,0) -- (-3.15,0) -- (-2.15,1.5) -- (1.7,1.5) node (v1) {};
\draw [-latex,rounded corners=10pt] (-4.5,0) -- (-3.15,0) -- (-2.15,-1.5) -- (1.7,-1.5) node (v2) {};

\draw [-latex,rounded corners=10pt] (-4.5,0) -- (-3.15,0) -- (-2.15,1.5) -- (-1.45,1.5) ;
\draw [-latex,rounded corners=10pt] (-4.5,0) -- (-3.15,0) -- (-2.15,-1.5) -- (-1.45,-1.5) ;

\node [scale=1.5] at (2,1.5) {$\oplus$};
\node [scale=1.5] at (2,-1.5) {$\oplus$};

\draw [-latex,blue,thick](2.25,1.5) -- (3.2,1.5) -- (3.5,1.5) ;
\draw [-latex,red,thick](2.25,-1.5) -- (3.2,-1.5) -- (3.5,-1.5);

\draw [-latex,rounded corners=6pt] (-3.65,0.5) rectangle (-2.65,-0.5);
\draw [-latex](2,2.2) -- (2,1.75);
\draw [-latex](2,-0.65) -- (2,-1.25);

\node at (2,2.7) {$\tilde{Z}_k$};
\node at (2,-0.3) {$N_k$};

\node [align=center,font=\footnotesize] at (-3.85,2.45) { RF power\\[-2mm] splitter};

\draw [rounded corners=2pt,fill=white] (-1.45,2) rectangle (0.8,1);
\draw [rounded corners=2pt,fill=white] (-1.45,-1) rectangle (0.8,-2);
\node [align=center,font=\footnotesize] at (-0.3,1.5) { Down\\[-3mm] conversion};
\node [font=\footnotesize] at (-0.3,-1.5) { Rectifier};
\node at (-3.45,0) {$\bullet$};

\node at (-2.75,1.4) {$\rho$};
\node at (-3,-1.3) {$1-\rho$};
\draw  [rounded corners=6pt](-2,3.1) rectangle (3,0.35);
\draw  [rounded corners=6pt](-2,0.05) rectangle (3,-2.6);
\node at (-0.8,2.75) {\underline{CD}};
\node at (-0.8,-0.35) {\underline{PD}};


\node [circle,draw,scale=0.7,font=\Large] (v11) at (-9.35,0.65) {\bf +};
\node [circle,draw,scale=0.7,font=\Large] (v12) at (-9.35,-0.8) {\bf +};

\draw (-16.35,2.2) node (v7) {} -- (-15.6,2.2) node (v4) {};
\draw (v4) -- (-14.9,2.7) {};
\draw (v4) -- (-14.9,1.7) {};
\draw [-latex](-14.9,2.7) -- (-14.45,2.7);
\draw [-latex](-14.9,1.7) -- (-14.45,1.7);
\draw [rounded corners=4pt] (-14.45,3) rectangle (-12.95,2.4);
\draw [rounded corners=4pt] (-14.45,2) rectangle (-12.95,1.4);
\node at (-13.7,2.7) {CD};
\node at (-13.7,1.7) {PD};
\node [fill,circle,scale=0.5] at (-15.6,2.2) {};
\draw [rounded corners=4pt] (-15.95,2.55) rectangle (-15.25,1.85);
\node (v5) at (-11.65,2.7) {$\tilde{Y}_{1,1}$};
\node (v6) at (-11.65,1.7) {$Y_{2,1}$};
\draw [-latex,blue,thick] (-12.95,2.7) -- (v5);
\draw [-latex,red,thick] (-12.95,1.7) -- (v6);
\draw [-latex,blue,thick](v5) -- (v11);
\draw [-latex,red,thick](v6) -- (v12);

\draw (-16.35,0.15) -- (-15.6,0.15) node (v4) {};
\draw (v4) -- (-14.9,0.65) {};
\draw (v4) -- (-14.9,-0.35) {};
\draw [-latex](-14.9,0.65) -- (-14.45,0.65);
\draw [-latex](-14.9,-0.35) -- (-14.45,-0.35);
\draw [rounded corners=4pt] (-14.45,0.95) rectangle (-12.95,0.35);
\draw [rounded corners=4pt] (-14.45,-0.05) rectangle (-12.95,-0.65);
\node at (-13.7,0.65) {CD};
\node at (-13.7,-0.35) {PD};
\node [fill,circle,scale=0.5] at (-15.6,0.15) {};
\draw [rounded corners=4pt] (-15.95,0.5) rectangle (-15.25,-0.2);
\node (v5) at (-11.65,0.65) {$\tilde{Y}_{1,2}$};
\node (v6) at (-11.65,-0.35) {$Y_{2,2}$};
\draw [-latex,blue,thick] (-12.95,0.65) -- (v5);
\draw [-latex,red,thick] (-12.95,-0.35) -- (v6);
\draw [-latex,blue,thick](v5) -- (v11);
\draw [-latex,red,thick](v6)  -- (v12);

\draw (-16.35,-2.25) -- (-15.6,-2.25) node (v4) {};
\draw (v4) -- (-14.9,-1.75) {};
\draw (v4) -- (-14.9,-2.75) {};
\draw [-latex](-14.9,-1.75) -- (-14.45,-1.75);
\draw [-latex](-14.9,-2.75) -- (-14.45,-2.75);
\draw [rounded corners=4pt] (-14.45,-1.45) rectangle (-12.95,-2.05);
\draw [rounded corners=4pt] (-14.45,-2.45) rectangle (-12.95,-3.05);
\node at (-13.7,-1.75) {CD};
\node at (-13.7,-2.75) {PD};
\node [fill,circle,scale=0.5] at (-15.6,-2.25) {};
\draw [rounded corners=4pt] (-15.95,-1.9) rectangle (-15.25,-2.6);
\node (v5) at (-11.65,-1.75) {$\tilde{Y}_{1,K}$};
\node (v6) at (-11.65,-2.75) {$Y_{2,K}$};
\draw [-latex,blue,thick] (-12.95,-1.75) -- (v5);
\draw [-latex,red,thick] (-12.95,-2.75) -- (v6);
\draw [-latex,blue,thick](v5)-- (v11);
\draw [-latex,red,thick](v6) -- (v12);

\draw (-16.35,2.2) -- (-16.35,2.7) -- (-16.65,3.1) -- (-16.05,3.1) -- (-16.35,2.7);
\draw (-16.35,0.15) -- (-16.35,0.65) -- (-16.65,1.05) -- (-16.05,1.05) -- (-16.35,0.65);
\draw (-16.35,-2.25) -- (-16.35,-1.75) -- (-16.65,-1.35) -- (-16.05,-1.35) -- (-16.35,-1.75); 
\node at (-16.35,-0.7) {$\vdots$};
\node at (-11.75,-0.9) {$\vdots$};

\node (v8) at (-7.5,0.65) {$\tilde{Y}_1$};
\node (v9) at (-7.5,-0.8) {$Y_2$};
\draw [-latex,ultra thick](v11) -- (v8);
\draw [-latex,ultra thick](v12) -- (v9);

\node at (4,1.5) {$\tilde{Y}_{1,k}$};
\node at (4,-1.5) {$Y_{2,k}$};

\draw [rounded corners=6pt] (-10.2,2.15) rectangle (-8.6,-1.85);
\node at (-9.4,1.85) {MRC};
\node at (-12.0,-3.75) {(a) $K$-antenna splitting receiver architecture.};
\node at (0.2,-3.75) {(b) The $k$th antenna receiver architecture.};



\draw (-4.5,0) -- (-4.5,0.5) -- (-4.8,0.9) -- (-4.2,0.9) -- (-4.5,0.5); 

\draw (-4.05,1.95) .. controls (-4,1.5) and (-3.75,0.85) .. (-3.25,0.6);
\end{tikzpicture}
\vspace{-0.5cm}	
	\caption{The proposed splitting receiver architecture.}
	\label{fig:splitting_receiver}	
	\vspace{-0.5cm}
\end{figure}

\subsection{Proposed Receiver Architecture}
\emph{Splitting receiver:} The proposed splitting receiver architecture is illustrated in Figs.~\ref{fig:splitting_receiver}(a) and~(b).
In the first stage, the received signal at each antenna is split into two streams by an ideal \emph{passive RF power splitter}. 
We assume there is no power loss or noise introduced during the splitting process~\cite{split_circuit,datasheet,Xunzhou13}.
{\color{black}One stream is sent to the (conventional) CD circuit and the other to the PD circuit.
	The signals in the CD and PD circuits are first converted to the baseband signals and then sampled and digitized by the analog-to-digital converters (ADCs) accordingly, for further processing.
	Specifically, the rectifier-based PD circuit converts the RF signal into a DC signal with a conversion efficiency $\eta$.}
In the second stage, all the $2K$ streams of signal of the $K$ antennas are jointly used for information detection.\footnote{Although the CD and PD circuits may have different detection sensitivity level in practice~\cite{WanchunICC16}, we assume both the circuits are able to detect arbitrarily small power signal for tractability.}
Note that although we focus on the wireless communication application in this paper, the proposed splitting receiver with single-antenna ($K=1$) is also applicable to cable and fibre-optical communication systems.

\textit{Simplified receiver:} 
We also propose a simplified receiver, as a variant of the splitting receiver, where no power splitters are required.
This is illustrated in Fig.~\ref{fig:mixed_receiver}.
In the simplified receiver, $K_1$ antennas ($1 \leq K_1 < K$) are connected to the CD circuit and the remaining antennas are connected to the PD circuits.
This is illustrated in Fig.~\ref{fig:mixed_receiver}.
We assume that the connections are determined offline, hence do not depend on the instantaneous channel coefficients at each antenna.

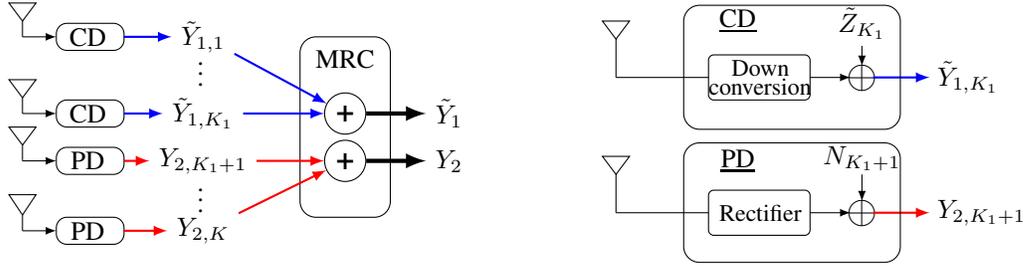
\begin{figure}[t]	
	\small
	\renewcommand{\captionlabeldelim}{ }
	\renewcommand{\captionfont}{\small} \renewcommand{\captionlabelfont}{\small}
	\centering
	\usetikzlibrary{arrows}
	\usetikzlibrary{arrows}
	\vspace{-0.5cm}
	\begin{tikzpicture} [scale=0.6]

\node (v10) at (9.85,2.4) {$\tilde{Y}_{1,1}$};
\draw (5.85,2.4) -- (5.85,2.75) -- (5.55,3.15) -- (6.15,3.15) -- (5.85,2.75);
\draw [-latex] (5.85,2.4) -- (6.6,2.4);
\draw [rounded corners=4pt] (6.6,2.7) rectangle (8.1,2.1);
\node at (7.3,2.4) {CD};
\draw [-latex,blue,thick] (8.1,2.4) -- (v10);

\node (v11) at (9.85,0.7) {$\tilde{Y}_{1,K_1}$};
\draw (5.85,0.7) -- (5.85,1.05) -- (5.55,1.45) -- (6.15,1.45) -- (5.85,1.05);
\draw [-latex] (5.85,0.7) -- (6.6,0.7);
\draw [rounded corners=4pt] (6.6,1) rectangle (8.1,0.4);
\node at (7.3,0.7) {CD};
\draw [-latex,blue,thick] (8.1,0.7) -- (v11);

\node (v12) at (9.85,-0.35) {$Y_{2,K_1 +1}$};
\draw (5.85,-0.35) -- (5.85,0) -- (5.55,0.4) -- (6.15,0.4) -- (5.85,0);
\draw [-latex] (5.85,-0.35) -- (6.6,-0.35);
\draw [rounded corners=4pt] (6.6,-0.05) rectangle (8.1,-0.65);
\node at (7.3,-0.35) {PD};
\draw [-latex,red,thick] (8.1,-0.35) -- (v12);

\node (v13) at (9.85,-1.9) {$Y_{2,K}$};
\draw (5.85,-1.9) -- (5.85,-1.55) -- (5.55,-1.1) -- (6.15,-1.1) -- (5.85,-1.55);
\draw [-latex] (5.85,-1.9) -- (6.6,-1.9);
\draw [rounded corners=4pt] (6.6,-1.6) rectangle (8.1,-2.2);
\node at (7.3,-1.9) {PD};
\draw [-latex,red,thick] (8.1,-1.9) -- (v13);

\node at (9.8,1.75) {$\vdots$};
\node at (9.8,-1.05) {$\vdots$};

\node [circle,draw,scale=0.7,font=\Large] (v111) at (13,0.7) {\bf +};
\node [circle,draw,scale=0.7,font=\Large] (v222) at (13,-0.35) {\bf +};

\draw [rounded corners=6pt] (12,2.4) rectangle (14,-1.6);
\draw [-latex,blue,thick](v10) -- (v111);
\draw [-latex,blue,thick](v11) -- (v111);
\draw [-latex,red,thick](v12) -- (v222);
\draw [-latex,red,thick](v13) -- (v222);
\node at (13,1.9) {MRC};

\node (v8) at (15.3,0.7) {$\tilde{Y}_1$};
\node (v9) at (15.3,-0.35) {$Y_2$};
\draw [-latex,ultra thick](v111) -- (v8);
\draw [-latex,ultra thick](v222) -- (v9);


\draw [-latex] (19,1.5) -- (19.35,1.5) -- (20.35,1.5) -- (24.2,1.5) node (v1) {};
\draw [-latex] (19,-1.5) -- (19.35,-1.5) -- (20.35,-1.5) -- (24.2,-1.5) node (v2) {};

\node [scale=1.5] at (24.45,1.5) {$\oplus$};
\node [scale=1.5] at (24.45,-1.5) {$\oplus$};


\draw [-latex,blue,thick](24.7,1.5) -- (25.65,1.5) -- (25.95,1.5) ;
\draw [-latex,red,thick](24.7,-1.5) -- (25.65,-1.5) -- (25.95,-1.5);

\draw [-latex](24.45,2.2) -- (24.45,1.75);
\draw [-latex](24.45,-0.65) -- (24.45,-1.25);

\node at (24.45,2.7) {$\tilde{Z}_{K_1}$};
\node at (24.45,-0.3) {$N_{K_1+1}$};


\draw [rounded corners=2pt,fill=white] (21.05,2) rectangle (23.3,1);
\draw [rounded corners=2pt,fill=white] (21.05,-1) rectangle (23.3,-2);
\node [align=center,font=\footnotesize] at (22.2,1.5) { Down\\[-3mm] conversion};
\node [font=\footnotesize] at (22.2,-1.5) { Rectifier};

\draw  [rounded corners=6pt](20.5,3.1) rectangle (25.3,0.35);
\draw  [rounded corners=6pt](20.5,0.05) rectangle (25.3,-2.6);
\node at (21.7,2.75) {\underline{CD}};
\node at (21.7,-0.35) {\underline{PD}};

\draw (19,1.5) -- (19,2.35) -- (18.7,2.75) -- (19.3,2.75) -- (19,2.35);
\draw (19,-1.5) -- (19,-0.65) -- (18.7,-0.25) -- (19.3,-0.25) -- (19,-0.65);
\node at (26.8,1.5) {$\tilde{Y}_{1,K_1}$};
\node at (27.1,-1.5) {$Y_{2,K_1+1}$};


\node at (9.9,-3.5) {(a) $K$-antenna simplified receiver architecture.};
\node at (24,-3.5) {(b) The $K_1$th and the $(K_1+1)$th antenna receiver architectures.};
	\end{tikzpicture}
	\vspace{-0.5cm}
	\caption{The proposed simplified receiver architecture.}
	\label{fig:mixed_receiver}	
	\vspace{-0.5cm}
\end{figure}

\subsection{Signal Model}
In this section, we present the signal model for the splitting receiver.
Note that the simplified receiver can be analytically treated as a special case of the splitting receiver with power splitting ratios taking binary values only, i.e., $\rho_k \in \{0,1\}$, for all $k=1,2,..., K$.

Based on~\cite{Xunzhou13,OpticalTIT}, the output signals from the CD and PD circuits at the $k$th antenna are given by, respectively,
\begin{align}
\tilde{Y}_{1,k}& = \sqrt{\rho_k \myP} \tilde{h}_k \tilde{X} + \tilde{Z}_k, \label{receive_signal_1}\\
Y_{2,k}&=  \eta (1-\rho_k)  \vert \tilde{h}_k \vert^2 \myP \vert \tilde{X} \vert^2 + N'_k,  \label{receive_signal_2'}
\end{align}
where $\rho_k \in \left[0,1\right]$ is the power splitting ratio.
$\tilde{X}$ is the transmitted signal with normalized variance and $\tilde{X} \in \mathcal{X}$, where $\mathcal{X}$ denotes the set of all possible transmitted signals.
{\color{black}$\tilde{Z}_k$ is the post-processing complex AWGN of the CD circuit with the mean of zero and the variance of $\sigmaone$, which includes both the RF band to baseband conversion noise and the ADC noise. 	
	$N'_k$ is the post-processing noise of the PD circuit which is also assumed to be real Gaussian noise~\cite{OpticalTIT}, which includes both the rectifier noise and the ADC noise.}
Note that we only consider the post-processing noise $\tilde{Z}_k$ and $N'_k$, i.e., we ignore the pre-processing noise, such as the antenna noise which is almost at the thermal noise level and is much smaller than the post-processing noise~\cite{Xunzhou13}.

Without loss of generality, scaling \eqref{receive_signal_2'} by $\eta $, the received signal {\color{black}$Y_{2,k}$} can be rewritten~as
\begin{align}
Y_{2,k} &= (1-\rho_k) \vert \tilde{h}_k \vert^2  \myP \vert \tilde{X} \vert^2 + {\color{black}N_k} \label{receive_signal_2},
\end{align}
where $N_k \triangleq N'_k/(\eta )$ is the equivalent rectifier conversion AWGN with the mean of zero and the variance $\sigmatwo$.
%


\subsection{Maximal Ratio Combining of Splitting Receiver}
To detect the transmitted signal $\tilde{X}$, similar with a conventional SIMO receiver, the optimal method is the maximal ratio combining (MRC).
We assume that the receiver has perfect channel state information (CSI), i.e., knowledge of $\tilde{h}_k$.
Since the $K$-antenna received signals $\tilde{Y}_{1,k}$ and $Y_{2,k}$, $k=1,2,...,K$, lie in different signal spaces, we use MRC for coherently processed signals (i.e., $\tilde{Y}_{1,k}$) and non-coherent signals (i.e., $Y_{2,k}$) separately.
Based on~\eqref{receive_signal_1} and~\eqref{receive_signal_2}, the combined coherently and non-coherently processed signals are given by, respectively,
\begin{equation} \label{MRC_1}
\begin{aligned}
\tilde{Y}_1 &=  \left(\sum\limits_{k=1}^{K} \rho_k \vert \tilde{h}_k \vert^2 \right) \sqrt{\myP} \tilde{X} + \sum\limits_{k=1}^{K} \sqrt{\rho_k}  \tilde{h}^*_k  \tilde{Z}_k, \\
Y_2 &= \left(\sum\limits_{k=1}^{K} (1-\rho_k)^2 \vert \tilde{h}_k \vert^4  \right) \myP \vert \tilde{X} \vert^2 +  \sum\limits_{k=1}^{K} (1-\rho_k) \vert \tilde{h}_k \vert^2 N_k.
\end{aligned}
\end{equation}
For convenience of analysis, after linear scaling, \eqref{MRC_1} can be rewritten as
\begin{equation} \label{MRC_signal}
\begin{aligned}
\tilde{Y}_1 &=  \sqrt{\Theta_1} \sqrt{\myP} \tilde{X} + \tilde{Z}, \ 
Y_2 = \sqrt{\Theta_2} \myP \vert \tilde{X} \vert^2 +  N,
\end{aligned}
\end{equation}
where
\begin{equation} \label{my_theta}
\begin{aligned}
\Theta_1 &=  {\sum\limits_{k=1}^{K} \rho_k \vert \tilde{h}_k \vert^2 } , \ 
\Theta_2 = {\sum\limits_{k=1}^{K} (1-\rho_k)^2 \vert \tilde{h}_k \vert^4},
\end{aligned}
\end{equation}
and $\tilde{Z}$ and $N$ follow the same distributions as $\tilde{Z}_k$ and $N_k$, respectively.
The two-dimensional signal $\tilde{Y}_1$ and the one-dimensional signal $Y_2$ form a triple $(\tilde{Y}_1, Y_2)$, which is the equivalent received signal of the $K$-antenna splitting receiver.

\textnormal{It is interesting to see that since the two-dimensional signal $\tilde{Y}_1$ lies on the \emph{in-phase-quadrature} (I-Q) plane and the one-dimensional signal $Y_2$ lines on the \emph{power} (P)-axis, the equivalent received signal $(\tilde{Y}_1, Y_2)$ lies in the three-dimensional I-Q-P space. 
This is different from the conventional coherent (two-dimensional) and non-coherent (one-dimensional) receiver signal spaces.
Thus, the splitting receiver expands the received signal space and fundamentally changes the way in which the signal is processed compared with the conventional receivers.}


Considering the noiseless signal, i.e., letting $\tilde{Z}$ and $N = 0$ in \eqref{MRC_signal}, we have $Y_2 = \frac{\sqrt{\Theta_2}}{\Theta_1}\vert\tilde{Y}_1\vert^2$ from \eqref{MRC_signal}, which is a paraboloid equation. 
From a geometric point of view, defining $\vec{\rho} \triangleq [\rho_1, \rho_2,...\ \rho_K]$, $\vec{1} \triangleq \underbrace{[1, 1, \cdots, 1]}_K$, and $\vec{0} \triangleq \underbrace{[0, 0, \cdots, 0]}_{K}$,
the splitting receiver is actually bending the noiseless received signal space into a paraboloid with $\vec{\rho}$ as illustrated in Fig.~\ref{fig:Gaussian_shape}. 
When $\vec{\rho} = \vec{1}$, i.e., a non-splitting case, the splitting receiver degrades to the coherent receiver.
As the parameter $\sqrt{\Theta_2}/\Theta_1$ increases, e.g., each element of $\vec{\rho}$ decreases, the splitting receiver bends the signal plane to a paraboloid, which is taller and thinner with a larger $\sqrt{\Theta_2}/\Theta_1$ , e.g., a smaller $\vec{\rho}$.
When $\vec{\rho} = \vec{0}$, the splitting receiver degrades to the PD-based non-coherent receiver.

In this paper, PD-based non-coherent receiver is named as \emph{non-coherent receiver} for short, and we refer to both the $K$-antenna coherent receiver (i.e., $\vec{\rho} = \vec{1}$) and the $K$-antenna non-coherent receiver (i.e., $\vec{\rho} = \vec{0}$) as the \emph{conventional receivers}.

\begin{figure}[t]
	\small
	\renewcommand{\captionlabeldelim}{ }
	\renewcommand{\captionfont}{\small} \renewcommand{\captionlabelfont}{\small}	
	\centering 
	\vspace{-1.3cm}
	\includegraphics[scale=0.8]{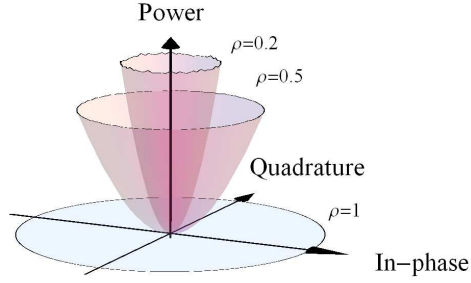}	
	\vspace*{-0.5cm}
	\caption{\small Illustration of the signal space of the splitting receiver, $\rho= \ 0.2,\ 0.5,\ 1$.}	
	\label{fig:Gaussian_shape}
	\vspace*{-0.5cm}
\end{figure}
\subsection{Splitting Channel}
From an information theory perspective, \eqref{MRC_signal} can be rewritten as
\begin{equation} \label{receive_signal}
\begin{bmatrix} \tilde{Y}_1 \\ Y_2 \end{bmatrix} 
= \begin{bmatrix} 1 & 0 \\ 0 & \vert \cdot \vert^2 \end{bmatrix} 
\begin{bmatrix} \sqrt{\Theta_1} \\  \sqrt[4]{\Theta_2} \end{bmatrix} \sqrt{\myP}\tilde{X} 
+ \begin{bmatrix} \tilde{Z} \\ N \end{bmatrix},
\end{equation}
where $\vert \cdot \vert^2$ is the squared magnitude operator. 
We name \eqref{receive_signal} as the \emph{splitting channel},
and the input and output of the splitting channel regarded as random variables, are $\sqrt{\myP} \tilde{X}$ and $(\tilde{Y}_1,Y_2)$, respectively.

The splitting channel can be treated as a SIMO channel, since the channel has one input $\sqrt{\myP}\tilde{X}$ and two outputs $\tilde{Y}_1$ and $Y_2$.
It can also be treated as a degraded (due to the power splitting) SISO channel with the output $\tilde{Y}_1$ and a side information $Y_2$.

\subsection{Performance Metrics}
We study the mutual information between the input and output of the splitting channel for an ideal Gaussian input, and study the SER performance for practical modulation schemes.

For convenience of analysis, we define the operating SNR as 
\begin{equation}
\snr \triangleq \min\left\lbrace \snrcov, \snrrec  \right\rbrace,
\end{equation}
where 
\begin{equation}
\snrcov \triangleq H_2 \frac{\myP}{\sigmaone},\ 
\snrrec \triangleq \sqrt{H_4} \frac{\myP}{\sigmatwosqrt},\ 
H_2 \triangleq \sum\limits_{k=1}^{K} \vert \tilde{h}_k \vert^2,\ 
H_4 \triangleq \sum\limits_{k=1}^{K} \vert \tilde{h}_k \vert^4.
\end{equation}
$\snrcov$ and $\snrrec$ are the SNRs of the conventional receivers, i.e., $\vec{\rho} =\vec{1}$ for the coherent receiver and $\vec{\rho} = \vec{0}$ for the non-coherent receiver, respectively.
Specifically, the definition of $\snrrec$ is consistent with \cite{OpticalTIT}. 
Note that although $\sqrt{H_4} \myP$ and $\sigmatwosqrt$ correspond to the standard deviation (not variance) of the signal $\sqrt{H_4} \myP \vert\tilde{X} \vert^2$ and the noise $N$ at the PD receiver, respectively, $\sqrt{H_4} \myP$ still has the physical meaning of ``power''. Thus, the signal-to-noise ratio is defined as $\sqrt{H_4} \frac{\myP}{\sigmatwosqrt}$ not $H_4\frac{\myP^2}{\sigmatwo}$.

In the following, 
we refer to the high SNR regime as $\snr \rightarrow \infty$ which is obtained by letting $\myP\rightarrow \infty$.
Our analysis will focus on the splitting receiver which includes the simplified receiver as a special case.

\section{Splitting Channel: Mutual Information}
In this section, we study the mutual information of the splitting channel to determine the gain due to the joint coherent and non-coherent processing.
We also provide a discussion to intuitively explain this processing gain.

Based on \eqref{receive_signal}, the mutual information between the input and outputs of the splitting channel with the splitting ratio $\vec{\rho}$ is 
\begin{equation} \label{first_mutual_info}
\begin{aligned}
&\mathcal{I}(\sqrt{\myP}\tilde{X};\tilde{Y}_1,Y_2) = h(\tilde{Y}_1,Y_2) - h(\tilde{Y}_1,Y_2 \vert \sqrt{\myP} \tilde{X})\\
&{=} h(\tilde{Y}_1,Y_2) - h(\tilde{Z}, N \vert \sqrt{\myP} \tilde{X})
{=} h(\tilde{Y}_1,Y_2) - h(\tilde{Z}, N)
{=} h(\tilde{Y}_1,Y_2) - h(\tilde{Z}) - h(N)\\
&= - \int_{Y_2} \int_{\tilde{Y}_1} f_{\tilde{Y}_1, Y_2} (\tilde{y}_1,y_2) \log_2(f_{\tilde{Y}_1, Y_2} (\tilde{y}_1,y_2)) \mathrm{d}\tilde{y}_1 \mathrm{d}y_2 - \log_2(\pi e \sigmaone) - \frac{1}{2} \log_2(2\pi e \sigmatwo).
\end{aligned}
\end{equation}
The joint probability density function (pdf) of $(\tilde{Y}_1, Y_2)$ is
\begin{equation}
f_{\tilde{Y}_1, Y_2} (\tilde{y}_1,y_2) = \int_{\tilde{X}} f_1(\sqrt{\Theta_1 \myP} \tilde{x},\tilde{y}_1)  f_2(\sqrt{\Theta_2} \myP \vert \tilde{x} \vert^2,y_2) f_{\tilde{X}} (\tilde{x})  \mathrm{d} \tilde{x},
\end{equation}
where  
$f_{\tilde{X}}(\tilde{x})$ is the pdf of $\tilde{X}$, and 
$f_1(\sqrt{\Theta_1 \myP} \tilde{x},\cdot)$ and $f_2(\sqrt{\Theta_2} \myP \vert \tilde{x} \vert^2, \cdot)$ are the pdfs of the distributions $\mathcal{CN}(\sqrt{\Theta_1 \myP} \tilde{x},\sigmaone)$ and $\mathcal{N}(\sqrt{\Theta_2}\myP \vert \tilde{x} \vert^2,\sigmatwo)$, respectively.

The mutual information expression in~\eqref{first_mutual_info} needs five integrals to evaluate, which is cumbersome and thus the maximal mutual information theoretically achieved by the optimal distribution of $\tilde{X}$, cannot be obtained.
%

\subsection{Mutual Information and Joint Processing Gain}
For tractability, in the following analysis, we consider the mutual information with a Gaussian input, i.e., $\tilde{X} \sim \mathcal{CN}(0,1)$, and we have:
\begin{enumerate} [1)]
	\item Letting $\vec{\rho} = \vec{1}$, the splitting channel is degraded to the coherent AWGN channel, and the mutual information is well-known~as~\cite{BookInfo}
	\begin{equation} \label{rho_1}
	 \mathcal{I}(\sqrt{\myP}\tilde{X};\tilde{Y}_1,Y_2) 
	 = h(\tilde{Y}_1) -h(\tilde{Z})
	 = \log_2 \left(1 + H_2\frac{\myP}{\sigmaone}\right),
	\end{equation}
	which is exactly the capacity of the coherent AWGN channel, i.e., $\mathcal{C}(\vec{\rho} = \vec{1})$.
	\item Letting $\vec{\rho} = 0$, the splitting channel is degraded to the conventional intensity channel in free-space optical communications~\cite{OpticalTIT}. Recall that we refer to the intensity channel as the non-coherent AWGN channel, echoing the coherent AWGN channel in this paper\footnote{Note that in this paper, the non-coherent channel refers to the intensity channel, and it does not refer to the kind of channel without CSI at the transmitter or the receiver.}. The mutual information of the non-coherent AWGN channel is~\cite{OpticalTIT}
	\begin{align}
		 \mathcal{I}(\sqrt{\myP}\tilde{X};\tilde{Y}_1,Y_2)
		 &= h(Y_2) -h(N)
		 = -\int\limits_{-\infty}^{\infty} f_{Y_2}(y_2) \log_2(f_{Y_2}(y_2)) \mathrm{d}y_2 - \frac{1}{2} \log_2(2 \pi e \sigmatwo) \nonumber\\
		 &\stackrel{(a)}{\geq} \frac{1}{2} \log_2\left(1 + H_4 \frac{\myP^2 e}{2 \pi \sigmatwo}\right),\label{rho_0}
	\end{align}
	where $Y_2 = \sqrt{H_4} \myP \vert \tilde{X}\vert^2 + N$ follows an exponential modified Gaussian distribution~\cite{EMG}: 
	\begin{equation}
	 f_{Y_2}(y_2) = \frac{1}{2 \sqrt{H_4} \myP} \exp\left(\frac{1}{2 \sqrt{H_4} \myP}\left(\frac{\sigmatwo}{\sqrt{H_4} \myP} -2 y_2\right)\right) \mathrm{erfc}\left(\frac{\frac{\sigmatwo}{\sqrt{H_4} \myP} - y_2}{\sqrt{2}\sigmatwosqrt}\right).
	\end{equation}
	The inequality $(a)$ is given by~\cite{OpticalTIT}, and \eqref{rho_0} is the asymptotic mutual information in the high SNR regime,
	which is also the asymptotic capacity (with gap less than $\vert \frac{1}{2} \log_2\left(\frac{e}{2 \pi}\right) \vert$ bits) of the non-coherent AWGN channel, i.e., $\mathcal{C}(\vec{\rho} = \vec{0})$.
\end{enumerate}

Comparing \eqref{rho_1} and \eqref{rho_0}, it is easy to see that as $\snr \rightarrow \infty$, 
\emph{the coherent and non-coherent AWGN channels have the same asymptotic capacity}, i.e., $\lim_{\snr \rightarrow \infty} {\mathcal{C}(\vec{\rho}=\vec{1})}/{\mathcal{C}(\vec{\rho}=\vec{0})} = 1$.
%
In the following, we will show that the splitting receiver with $\vec{\rho} \neq \vec{0} \text{ nor } \vec{1}$ provides a gain in the mutual information compared with the conventional receivers.
Firstly, 
we need the following definition.
\begin{definition} \label{def:splitting_gain_MI}
The joint processing gain of the splitting receiver is 
	\begin{equation}
	G \triangleq  \frac{\sup\{\mathcal{I}(\sqrt{\myP}\tilde{X};\tilde{Y}_1,Y_2): \vec{\rho} \in \left[0,1\right]^K\}}{\max\{\mathcal{I}(\sqrt{\myP}\tilde{X};\tilde{Y}_1,Y_2)\vert_{\vec{\rho} = \vec{0}},\mathcal{I}(\sqrt{\myP}\tilde{X};\tilde{Y}_1,Y_2)\vert_{\vec{\rho} = \vec{1}}\}},
	\end{equation}
	where $\sup\{\cdot\}$ denotes for the supremum, and $\left[0,1\right]^K$ is the $K$-product space generated by the interval $\left[0,1\right]$.
\end{definition}

If the joint processing gain $G > 1$, the splitting receiver achieves higher mutual information compared with the conventional receivers.
If the joint processing gain $G = 1$ which means the joint coherent and non-coherent processing is unnecessary, the splitting receiver should be degraded to either one of the conventional receivers.

Due to the complicated form of \eqref{first_mutual_info},
it is not possible to accurately evaluate the mutual information\footnote{A lower bound and an upper bound of $\mathcal{I}(\sqrt{\myP}\tilde{X};\tilde{Y}_1, Y_2)$ with explicit expressions can be found based on the basic inequalities $\mathcal{I}(\sqrt{\myP}\tilde{X};\tilde{Y}_1, Y_2) >\mathcal{I}(\sqrt{\myP}\tilde{X};\tilde{Y}_1)$, $\mathcal{I}(\sqrt{\myP}\tilde{X};\tilde{Y}_1, Y_2) >\mathcal{I}(\sqrt{\myP}\tilde{X};Y_2)$ and $\mathcal{I}(\sqrt{\myP}\tilde{X};\tilde{Y}_1, Y_2) < \mathcal{I}(\sqrt{\myP}\tilde{X};\tilde{Y}_1) +\mathcal{I}(\sqrt{\myP}\tilde{X};Y_2)$~\cite{BookInfo}.
Since the bounds are loose, we do not pursue them here.} for $\vec{\rho} \in \left[0,1 \right]^K$ and prove whether $G$ is greater than $1$ or not. 
Hence, we first use Monte Carlo based histogram method to simulate the results.
In Fig.~\ref{fig:mutual_new}, considering the $K=1$ case, it is observed that when $\snr$ is reasonably high, e.g., $\myP = 10$, $\sigmaone = 1$ and $\sigmatwosqrt =1$, the joint processing gain $G$ is greater than $1$. 
Inspired by this, we will focus on the analysis on the mutual information in \eqref{first_mutual_info} and the joint processing gain in Definition~\ref{def:splitting_gain_MI} in the high SNR regime in the following subsection.

\begin{figure}[t]
	\small
	\renewcommand{\captionlabeldelim}{ }
	\renewcommand{\captionfont}{\small} \renewcommand{\captionlabelfont}{\small}	
	\centering 
	\vspace*{-0.7cm}
	\includegraphics[scale=0.6]{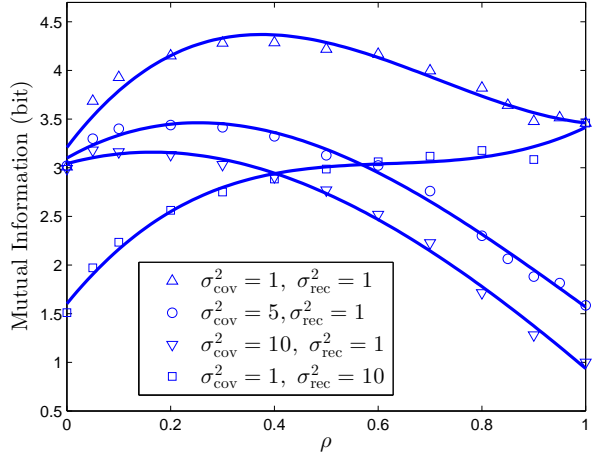}	
	\vspace*{-0.5cm}
	\caption{\small $\mathcal{I}(\sqrt{\myP}\tilde{X};\tilde{Y}_1,Y_2)$ versus $\rho$, $\myP = 10$, $K=1$, $\vert \tilde{h}_1 \vert = 1$. The simulation results are marked with `o's, and are curve fitted by polynomials of degree of $3$.}	
	\label{fig:mutual_new}
	\vspace*{-0.5cm}
\end{figure}

\subsection{High SNR Analysis}
\begin{lemma} \label{theory:high_snr}
In the high SNR regime, 
$\mathcal{I}(\sqrt{\myP}\tilde{X};\tilde{Y}_1,Y_2)$ with $\vec{\rho} \in \left[0,1 \right]^K \backslash\{\vec{0},\vec{1}\}$ is given by
\begin{subequations}
\begin{alignat}{2}
 \label{theory1_1}
	\mathcal{I}(\sqrt{\myP}\tilde{X};\tilde{Y}_1,Y_2) 
&\approx \log_2( \frac{\Theta_1 \myP}{\sigmaone}) +\frac{1}{2 \log(2)} \exp\left(\frac{\Theta_1 \sigmatwo }{\Theta_2 2 \sigmaone \myP }\right) \Ei \left(\frac{\Theta_1 \sigmatwo}{\Theta_2 2 \sigmaone \myP }\right)\\
 \label{theory1_2}
&\approx \log_2(\frac{\sqrt{2} \myP^{\frac{3}{2}} \sqrt{\Theta_1 \Theta_2}}{\sigmaonesqrt \sigmatwosqrt}) - \frac{\gamma}{2 \ln 2},
\end{alignat}
\end{subequations}
where $\Ei(x) \triangleq \int_{x}^{\infty} \frac{e^{-t}}{t} \mathrm{d} t$ is the exponential integral function, and  $\gamma$ is Euler's constant.
\end{lemma}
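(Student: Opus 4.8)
The plan is to compute the joint differential entropy $h(\tilde Y_1, Y_2)$ in the high-SNR limit by exploiting the chain rule $h(\tilde Y_1, Y_2) = h(\tilde Y_1) + h(Y_2 \mid \tilde Y_1)$, and then subtract the two noise entropies as in \eqref{first_mutual_info}. First I would observe that, with $\tilde X \sim \mathcal{CN}(0,1)$, the coherent output $\tilde Y_1 = \sqrt{\Theta_1 \myP}\,\tilde X + \tilde Z$ is itself complex Gaussian, $\mathcal{CN}(0, \Theta_1 \myP + \sigmaone)$, so $h(\tilde Y_1) = \log_2(\pi e (\Theta_1\myP + \sigmaone)) \approx \log_2(\pi e \Theta_1 \myP)$ as $\myP \to \infty$. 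The real work is the conditional term $h(Y_2 \mid \tilde Y_1)$. Given $\tilde Y_1 = \tilde y_1$, in the high-SNR regime the posterior of $\tilde X$ concentrates sharply around $\tilde y_1 / \sqrt{\Theta_1\myP}$, so $\vert\tilde X\vert^2$ is approximately $\vert \tilde y_1\vert^2/(\Theta_1 \myP)$ plus a small fluctuation; propagating this through $Y_2 = \sqrt{\Theta_2}\,\myP\,\vert\tilde X\vert^2 + N$ shows that conditionally $Y_2$ is approximately Gaussian, and one must track the conditional variance, which is a sum of the residual signal-fluctuation variance and $\sigmatwo$. A short computation (conditioning a complex Gaussian on the channel output and linearizing $\vert\cdot\vert^2$) gives the conditional variance as roughly $\frac{\Theta_2}{\Theta_1}\sigmaone \vert\tilde y_1\vert^2 \cdot(\text{const}/\myP) + \sigmatwo$ — i.e., it scales with $\vert\tilde y_1\vert^2$. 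Then $h(Y_2\mid\tilde Y_1) \approx \tfrac12 \mathbb{E}_{\tilde Y_1}\!\left[\log_2\!\big(2\pi e(\text{conditional variance})\big)\right]$, and since $\vert\tilde Y_1\vert^2/(\Theta_1\myP + \sigmaone)$ is exponentially distributed, this expectation becomes a one-dimensional integral of the form $\int_0^\infty e^{-t}\log_2(a t + b)\,dt$, which evaluates to a logarithm plus an exponential-integral term — precisely the $\exp(\cdot)\,\Ei(\cdot)$ structure in \eqref{theory1_1}.

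Concretely, the key steps in order are: (i) write $\mathcal{I} = h(\tilde Y_1) + h(Y_2\mid\tilde Y_1) - \log_2(\pi e\sigmaone) - \tfrac12\log_2(2\pi e\sigmatwo)$; (ii) evaluate $h(\tilde Y_1)$ exactly and take its $\myP\to\infty$ form; (iii) derive the high-SNR conditional law of $Y_2$ given $\tilde Y_1$, identifying its conditional variance $v(\tilde y_1) \approx \alpha\vert\tilde y_1\vert^2/\myP + \sigmatwo$ for the appropriate constant $\alpha$ involving $\Theta_1,\Theta_2,\sigmaone$; (iv) compute $h(Y_2\mid\tilde Y_1) \approx \tfrac12\mathbb{E}[\log_2(2\pi e\, v(\tilde Y_1))]$ using the exponential distribution of $\vert\tilde Y_1\vert^2$, yielding the identity $\int_0^\infty e^{-u}\ln(u + c)\,du = \ln c + e^{c}\Ei(c)$ (with $c = \frac{\Theta_1\sigmatwo}{\Theta_2\,2\sigmaone\myP}$ after tracking constants); (v) collect terms to get \eqref{theory1_1}; and (vi) take a further $\myP\to\infty$ expansion using $e^{c}\Ei(c) \to -\gamma - \ln c$ as $c\to 0^+$ to obtain the simplified form \eqref{theory1_2}.

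I expect the main obstacle to be step (iii): rigorously justifying the Gaussian approximation of the posterior of $Y_2$ given $\tilde Y_1$ and, especially, correctly identifying the leading-order conditional variance. The subtlety is that the map $\tilde X \mapsto \vert\tilde X\vert^2$ is nonlinear, so conditioning on $\tilde Y_1$ and then pushing forward does not simply give a Gaussian; one has to argue that at high SNR the posterior for $\tilde X$ is tightly concentrated (width $O(1/\sqrt{\myP})$), linearize $\vert\tilde X\vert^2 \approx \vert\tilde x_0\vert^2 + 2\,\mathrm{Real}\{\tilde x_0^*(\tilde X - \tilde x_0)\}$ around the MAP point $\tilde x_0$, and check that the neglected quadratic term contributes only to lower order in the entropy. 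Getting the constant $\alpha$ right — it should come out so that $c = \frac{\Theta_1\sigmatwo}{2\Theta_2\sigmaone\myP}$ matches the stated formula — is where the bookkeeping is delicate, since it mixes the coherent noise $\sigmaone$ (which governs the posterior spread of $\tilde X$) with the geometry factor $\sqrt{\Theta_2}/\Theta_1$ of the paraboloid. Once the conditional variance is pinned down, steps (iv)–(vi) are standard special-function manipulations and the asymptotic $e^c\Ei(c) = -\gamma - \ln c + O(c\ln c)$.
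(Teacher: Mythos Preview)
Your approach is correct and reaches the same formula, but it is genuinely different from the paper's proof. The paper does not use the chain rule $h(\tilde Y_1,Y_2)=h(\tilde Y_1)+h(Y_2\mid\tilde Y_1)$; instead it rescales so that all three real noise coordinates share variance $\sigmaone/(2\myP)$, and then changes to a \emph{paraboloid--normal} coordinate system in the I--Q--P space, exploiting that the noiseless pair $(\tilde X_1,X_2)$ lies exactly on a paraboloid $\mathcal U$. In these coordinates the signal is purely on-surface, so $\mathcal I\approx h(\tilde A_X)-h(\tilde A_X+\tilde A_{\tilde Z,N}\mid\tilde A_X)$: the first term is a surface integral over $\mathcal U$ (polar coordinates produce the $\exp(\cdot)\,\Ei(\cdot)$ term directly), and the second is simply the entropy of a two-dimensional Gaussian, since isotropic three-dimensional noise projected onto a tangent plane is again Gaussian with the same per-coordinate variance. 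The geometric route makes the paraboloid-area interpretation of the $3/2$ gain explicit and renders the conditional-entropy step trivial, at the cost of the coordinate machinery; your chain-rule route is more elementary and self-contained, with the work shifted to the Laplace-type approximation in step~(iii), exactly as you anticipated. One small correction to your bookkeeping there: the conditional variance carries no explicit $1/\myP$ factor --- linearization gives $\mathrm{Var}(Y_2\mid\tilde Y_1=\tilde y_1)\approx \tfrac{2\Theta_2\sigmaone}{\Theta_1^{2}}\vert\tilde y_1\vert^{2}+\sigmatwo$, and the $\myP$ enters only after substituting $\vert\tilde Y_1\vert^{2}\approx\Theta_1\myP\,T$ with $T$ exponential, which then yields your stated $c=\frac{\Theta_1\sigmatwo}{2\Theta_2\sigmaone\myP}$. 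Step~(vi) is handled identically in the paper via the series expansion of~$\Ei$.
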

\begin{proof}
	See Appendix A.
\end{proof}
{\color{black}From Lemma~1, it is clear that the mutual information of the splitting channel increases linearly with $\log_2(\myP)$ and decreases linearly with $\log_2(\sigmaonesqrt)$ and  $\log_2(\sigmatwosqrt)$ in the high SNR regime. 
Moreover, since the mutual information depends on the power splitting ratio $\vec{\rho}$, which is contained in the term $\Theta_1 \Theta_2$, it is interesting to find the optimal $\vec{\rho}$ that maximizes the mutual information.}

Based on Lemma~\ref{theory:high_snr}, the following optimization problem is proposed to obtain the optimal splitting ratio $\vec{\rho}$ in the high SNR regime:
\begin{equation} \label{first_P1}
\textrm{(P1)} \ \max_{\vec{\rho} \in \left[0,1 \right]^K\backslash\{\vec{0},\vec{1}\}}  \Theta_1  \Theta_2 
\Leftrightarrow  \max_{\vec{\rho} \in \left[0,1 \right]^K\backslash\{\vec{0},\vec{1}\}} 
\sum\limits_{k=1}^{K} \rho_k \vert \tilde{h}_k \vert^2
\sum\limits_{k=1}^{K} \left(1-\rho_k\right)^2 \vert \tilde{h}_k \vert^4.
\end{equation}
It can be shown  that (P1) is not a convex optimization problem. Thus, the optimal splitting ratio can be obtained by numerical methods.
In what follows, we first focus on two special scenarios and then discuss the joint processing gain for a general splitting receiver.\par
\subsubsection{Splitting receiver with single receiver antenna}
When $K=1$, $\Theta_1 = \rho_1 \vert \tilde{h}_1 \vert^2$ and $\Theta_2 = (1-\rho_1)^2 \vert \tilde{h}_1 \vert^4$,
and thus, 
the optimal power-splitting ratio is obtained by solving the equation
$\frac{\partial \rho^{\frac{1}{3}} (1-\rho)^{\frac{2}{3}}}{\partial \rho} = 0.$
It is straightforward to obtain the following results.
\begin{proposition} \label{coro:opti_rho}
	For the splitting receiver with single receiver antenna, the optimal splitting ratio in the high SNR regime is
	\begin{equation}
	\rho^\star = \frac{1}{3},
	\end{equation}
	and the maximal mutual information is given by
	\begin{equation} \label{optimal_MI}
	\begin{aligned}
	\mathcal{I}(\sqrt{\myP}\tilde{X};\tilde{Y}_1,Y_2) \vert_{\rho^\star}
	\approx \log_2 \left(\frac{2 \sqrt{2}}{3\sqrt{3}} \frac{\vert \tilde{h}_1 \vert^3 \myP^{\frac{3}{2}}}{\sigmaonesqrt \sigmatwosqrt}\right)- \frac{\gamma}{2 \ln 2}.
	\end{aligned}
	\end{equation} 
\end{proposition}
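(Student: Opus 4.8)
The plan is to reduce the claim to an elementary single-variable optimization by invoking Lemma~\ref{theory:high_snr}. By \eqref{theory1_2}, in the high SNR regime and for $\vec{\rho}\in\left[0,1\right]^K\backslash\{\vec{0},\vec{1}\}$ (here $K=1$, so $\vec{\rho}$ is the scalar $\rho$), we have $\mathcal{I}(\sqrt{\myP}\tilde{X};\tilde{Y}_1,Y_2)\approx\log_2\!\big(\sqrt{2}\,\myP^{3/2}\sqrt{\Theta_1\Theta_2}/(\sigmaonesqrt\sigmatwosqrt)\big)-\gamma/(2\ln 2)$, so, because $\log_2(\cdot)$ is strictly increasing and the prefactor $\sqrt{2}\,\myP^{3/2}/(\sigmaonesqrt\sigmatwosqrt)$ does not depend on $\vec{\rho}$, maximizing the asymptotic mutual information is exactly problem (P1), i.e., maximizing $\Theta_1\Theta_2$. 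First I would specialize \eqref{my_theta} to $K=1$: $\Theta_1=\rho\vert\tilde{h}_1\vert^2$ and $\Theta_2=(1-\rho)^2\vert\tilde{h}_1\vert^4$, hence $\Theta_1\Theta_2=\vert\tilde{h}_1\vert^6\,\rho(1-\rho)^2$. Since $\vert\tilde{h}_1\vert^6>0$ is a constant, the problem reduces to maximizing $g(\rho)\triangleq\rho(1-\rho)^2$ over $\rho\in\left(0,1\right)$ (equivalently, the cube root $\rho^{1/3}(1-\rho)^{2/3}$, whose stationarity condition is the one stated just before the Proposition, has the same maximizer).

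Next I would carry out this optimization. Differentiating, $g'(\rho)=(1-\rho)^2-2\rho(1-\rho)=(1-\rho)(1-3\rho)$, so the stationary points in $\left[0,1\right]$ are $\rho=1/3$ and $\rho=1$; the latter lies outside the feasible set $\left(0,1\right)$ (it is the degenerate coherent receiver $\vec{\rho}=\vec{1}$). Since $g$ is continuous on the compact interval $\left[0,1\right]$ with $g(0)=g(1)=0$ and $g(1/3)=\tfrac13\cdot(\tfrac23)^2=\tfrac{4}{27}>0$ (and, if one prefers a local check, $g''(\rho)=6\rho-4$ gives $g''(1/3)=-2<0$), the unique maximizer on $\left(0,1\right)$ is $\rho^\star=1/3$. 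This proves the first assertion.

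Finally I would substitute $\rho^\star=1/3$ back into \eqref{theory1_2}. At $\rho^\star$ one has $\Theta_1\Theta_2=\vert\tilde{h}_1\vert^6\cdot\tfrac{4}{27}$, so $\sqrt{\Theta_1\Theta_2}=\tfrac{2}{3\sqrt{3}}\vert\tilde{h}_1\vert^3$ (using $\sqrt{27}=3\sqrt{3}$), and therefore $\mathcal{I}(\sqrt{\myP}\tilde{X};\tilde{Y}_1,Y_2)\big|_{\rho^\star}\approx\log_2\!\big(\sqrt{2}\,\myP^{3/2}\cdot\tfrac{2}{3\sqrt{3}}\vert\tilde{h}_1\vert^3/(\sigmaonesqrt\sigmatwosqrt)\big)-\gamma/(2\ln 2)=\log_2\!\big(\tfrac{2\sqrt{2}}{3\sqrt{3}}\,\vert\tilde{h}_1\vert^3\myP^{3/2}/(\sigmaonesqrt\sigmatwosqrt)\big)-\gamma/(2\ln 2)$, which is exactly \eqref{optimal_MI}.

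The analytical content here is light, so there is no substantial obstacle; the only point requiring care is conceptual, namely that we optimize the high-SNR surrogate \eqref{theory1_2} rather than the exact mutual information \eqref{first_mutual_info}. This is legitimate because Lemma~\ref{theory:high_snr} is itself an asymptotic ($\snr\to\infty$) statement, so $\rho^\star=1/3$ and \eqref{optimal_MI} are to be read as the asymptotically optimal splitting ratio and the corresponding asymptotic maximal mutual information. It is also worth recalling that (P1) is non-convex in general (as noted after \eqref{first_P1}), but for the single-antenna case the one-variable argument above resolves it completely without any convexity assumption.
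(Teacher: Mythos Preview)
Your proposal is correct and follows essentially the same approach as the paper: specialize \eqref{my_theta} to $K=1$, reduce the asymptotic optimization via Lemma~\ref{theory:high_snr} and (P1) to maximizing $\rho(1-\rho)^2$ (equivalently $\rho^{1/3}(1-\rho)^{2/3}$) on $(0,1)$, solve the stationarity condition to obtain $\rho^\star=1/3$, and substitute back into \eqref{theory1_2} to get \eqref{optimal_MI}. Your write-up is in fact more careful than the paper's one-line justification, including the second-order check and the explicit evaluation of $\sqrt{\Theta_1\Theta_2}$.
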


\subsubsection{Simplified receiver with a large number of antennas} 
For the simplified receiver with a large number of antennas, \eqref{first_P1} can be rewritten as 
\begin{equation}
\max_{1 \leq K_1 <K} 
\sum\limits_{k =1}^{K_1}\vert \tilde{h}_k \vert^2
\sum\limits_{k = K_1+1}^{K}  \vert \tilde{h}_k \vert^4.
\end{equation}
Assuming that $\tilde{h}_k$, $k =1,2,... K$, are independent and identically distributed  (i.i.d.) random variables, i.e., \emph{\color{black}the uncorrelated scenario}, due to the law of large numbers when $K$ is sufficiently large, we have
\begin{equation} \label{antennas}
\lim\limits_{K \rightarrow \infty} \sum\limits_{k = 1}^{K_1}\vert \tilde{h}_k \vert^2
\!\!\!\sum\limits_{k = K_1+1}^{K} \!\!\!\! \vert \tilde{h}_k \vert^4
= \lim\limits_{K \rightarrow \infty} K_1 K_2 {\sum\limits_{k = 1}^{K_1}\vert \tilde{h}_k \vert^2}/{K_1} 
\!\!\!\sum\limits_{k = K_1+1}^{K} \!\!\!\! \vert \tilde{h}_k \vert^4/{K_2}
\stackrel{(a)}{=} K_1 K_2 \myexpect{\vert \tilde{h}_k \vert^2}\myexpect{\vert \tilde{h}_k \vert^4},
\end{equation}
where $K_2 \triangleq K-K_1$, and $(a)$ is because both $K_1$ and $K_2$ are sufficiently large.
{\color{black}Assuming that $\vert \tilde{h}_k \vert $, $k =1,2,... K$, are identical with each other, i.e., \emph{the free-space scenario which is also a fully-spatially-correlated scenario}, we have the same expression with~\eqref{antennas}.}
Thus, $K_1 K_2$ is maximized when $K_1 = K_2 = K/2$, and we have the following proposition.
\begin{proposition} \label{prop:large_antenna}
For the simplified receiver with a large number of antennas, {\color{black}the optimal strategy for the spatially-uncorrelated channel or the fully-spatially-correlated channel (i.e., the free-space scenario)} in the high SNR regime is to connect half of the antennas to the CD circuits and the other half to the PD circuits, and the maximum mutual information is given by
\begin{equation}
\mathcal{I}(\sqrt{\myP}\tilde{X};\tilde{Y}_1,Y_2) \vert_{\vec{\rho}^\star}
	\approx 
 \log_2\left(\frac{K \myP^{\frac{3}{2}} \sqrt{ \myexpect{\vert \tilde{h}_k \vert^2}\myexpect{\vert \tilde{h}_k \vert^4}}}{\sqrt{2} \sigmaonesqrt \sigmatwosqrt}\right) - \frac{\gamma}{2 \ln 2}.
\end{equation}
\end{proposition}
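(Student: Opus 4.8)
The plan is to read the result off Lemma~\ref{theory:high_snr} after specializing the general optimization~\eqref{first_P1} to the simplified receiver. First I would set $\rho_k=1$ for the $K_1$ antennas wired to CD circuits and $\rho_k=0$ for the remaining $K_2\triangleq K-K_1$ antennas wired to PD circuits; substituting these binary ratios into~\eqref{my_theta} gives $\Theta_1=\sum_{k=1}^{K_1}\vert\tilde{h}_k\vert^2$ and $\Theta_2=\sum_{k=K_1+1}^{K}\vert\tilde{h}_k\vert^4$, since $(1-\rho_k)^2=1$ whenever $\rho_k=0$. Hence (P1) collapses to the one-parameter problem $\max_{1\le K_1<K}\Theta_1\Theta_2$. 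I would then invoke the reduction already recorded in~\eqref{antennas}: for the spatially-uncorrelated (i.i.d.) model the law of large numbers gives $\Theta_1\Theta_2\approx K_1K_2\,\myexpect{\vert\tilde{h}_k\vert^2}\myexpect{\vert\tilde{h}_k\vert^4}$ once $K$ (and hence both $K_1,K_2$) is large, while for the fully-spatially-correlated / free-space model the very same identity holds \emph{exactly}, because $\vert\tilde{h}_k\vert$ is common to all antennas so that $\myexpect{\vert\tilde{h}_k\vert^2}=\vert\tilde{h}_1\vert^2$ and $\myexpect{\vert\tilde{h}_k\vert^4}=\vert\tilde{h}_1\vert^4$.

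With the channel-dependent factor thus fixed, maximizing $\Theta_1\Theta_2$ is equivalent to maximizing the integer parabola $K_1K_2=K_1(K-K_1)$, whose maximum over $1\le K_1<K$ is at $K_1=K_2=K/2$ (for odd $K$ one takes $\lceil K/2\rceil$ and $\lfloor K/2\rfloor$, which does not change the leading-order asymptotics), so that $K_1K_2=K^2/4$ and $\sqrt{\Theta_1\Theta_2}\approx\tfrac{K}{2}\sqrt{\myexpect{\vert\tilde{h}_k\vert^2}\myexpect{\vert\tilde{h}_k\vert^4}}$. Note that $K_1=K/2$ satisfies $1\le K_1<K$ for every $K\ge2$, hence $\vec{\rho}^\star\notin\{\vec{0},\vec{1}\}$ and Lemma~\ref{theory:high_snr} indeed applies. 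Plugging $\sqrt{\Theta_1\Theta_2}$ into the high-SNR expression~\eqref{theory1_2} and using $\sqrt{2}\cdot\tfrac{K}{2}=K/\sqrt{2}$ then produces exactly the claimed formula for $\mathcal{I}(\sqrt{\myP}\tilde{X};\tilde{Y}_1,Y_2)\vert_{\vec{\rho}^\star}$.

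I do not expect a genuine obstacle: the proposition is essentially a corollary of Lemma~\ref{theory:high_snr} together with~\eqref{antennas}. The only point that warrants a line of care is the legitimacy of optimizing the discrete quantity $K_1K_2$ \emph{through} the large-$K$ approximation of~\eqref{antennas}: the estimate $\Theta_1\Theta_2\approx K_1K_2\,\myexpect{\vert\tilde{h}_k\vert^2}\myexpect{\vert\tilde{h}_k\vert^4}$ is only meaningful when both groups grow with $K$, but any partition with $K_1$ (or $K_2$) bounded yields a product of order $K$, which is negligible next to the order-$K^2$ value attained at the balanced split, so those corner cases may be discarded; in the fully-correlated model even this remark is unnecessary because the reduction is exact. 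Finally I would note, for contrast with Proposition~\ref{coro:opti_rho}, that although the optimal \emph{fraction} of CD antennas ($1/2$) and the optimal single-antenna power-splitting ratio ($1/3$) look analogous, they optimize different objectives --- $K_1(K-K_1)$ versus $\rho(1-\rho)^2$ --- which is precisely why the two optima differ.
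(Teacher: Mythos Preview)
Your proposal is correct and follows essentially the same route as the paper: reduce (P1) to $\max_{1\le K_1<K}\Theta_1\Theta_2$ for binary $\rho_k$, invoke the large-$K$ reduction~\eqref{antennas} (law of large numbers in the i.i.d.\ case, exact identity in the free-space case) to factor out $K_1K_2$, maximize the parabola at $K_1=K_2=K/2$, and substitute into~\eqref{theory1_2}. Your additional remarks on odd $K$, on the corner cases where one of $K_1,K_2$ stays bounded, and on why $\vec{\rho}^\star\notin\{\vec{0},\vec{1}\}$ are careful touches that the paper leaves implicit, but they do not alter the argument.
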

{\color{black}Note that for the general spatially-correlated scenario, the optimal strategy in the high SNR regime is not immediately clear. We will investigate this scenario for future study.}


\subsubsection{Joint processing gain of splitting receiver with $K$ receiver antennas}
We assume that $\vec{\rho} \neq \vec{0} \text{ nor } \vec{1}$, thus, $\Theta_1 \neq 0$ and $\Theta_2 \neq 0$.
Then, based on \eqref{theory1_2} of Lemma~\ref{theory:high_snr}, \eqref{rho_1} and \eqref{rho_0}, we can show that 
\begin{equation}
\lim\limits_{\snr \rightarrow \infty} \frac{\mathcal{I}(\sqrt{\myP}\tilde{X};\tilde{Y}_1,Y_2)\vert_{\vec{\rho} \in \left[0,1\right]^K \backslash \{\vec{0},\vec{1}\}}}{\max\{\mathcal{I}(\sqrt{\myP}\tilde{X};\tilde{Y}_1,Y_2)\vert_{\vec{\rho} = \vec{0}},\mathcal{I}(\sqrt{\myP}\tilde{X};\tilde{Y}_1,Y_2)\vert_{\vec{\rho} = \vec{1}}\}} = \frac{3}{2}.
\end{equation}
In other words, 
the asymptotic gain is the same no matter what value $\vec{\rho}$ takes, as long as $\vec{\rho} \neq \vec{0} \text{ nor } \vec{1}$. 
Therefore, the asymptotic optimal splitting ratio $\vec{\rho}^\star \in \left[0,1\right]^K \backslash \{\vec{0},\vec{1}\}$, and we have the following result based on Definition~\ref{def:splitting_gain_MI}.
%
%

\begin{proposition} \label{splitting_gain}
In the high SNR regime, the asymptotic joint processing gain for a splitting receiver with $K$ receiver antennas is
\begin{equation}
G = \lim\limits_{\snr \rightarrow \infty} \frac{\mathcal{I}(\sqrt{\myP}\tilde{X};\tilde{Y}_1,Y_2)\vert_{\vec{\rho}^{\star}}}{\max\{\mathcal{I}(\sqrt{\myP}\tilde{X};\tilde{Y}_1,Y_2)\vert_{\vec{\rho} = \vec{0}},\mathcal{I}(\sqrt{\myP}\tilde{X};\tilde{Y}_1,Y_2)\vert_{\vec{\rho} = \vec{1}}\}} = \frac{3}{2}.
\end{equation}
\end{proposition}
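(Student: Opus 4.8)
The plan is to reduce the claim to a comparison of the \emph{growth rates in $\myP$} of the three mutual informations entering Definition~\ref{def:splitting_gain_MI}: all of them are available in closed asymptotic form --- \eqref{theory1_2} of Lemma~\ref{theory:high_snr} for an interior splitting ratio, \eqref{rho_1} for $\vec{\rho}=\vec{1}$, and \eqref{rho_0} for $\vec{\rho}=\vec{0}$ --- and, since $\snr\to\infty$ is driven by $\myP\to\infty$, each equals a fixed multiple of $\log_2\myP$ plus a bounded remainder, the multiple being $3/2$ in the interior case and $1$ in the two conventional cases. Once the supremum in the numerator of $G$ is shown to grow like $\tfrac32\log_2\myP$, dividing kills every bounded term in the limit and leaves precisely the ratio $3/2$ of the leading coefficients. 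Throughout, write $\mathcal{I}\vert_{\vec{\rho}}$ for $\mathcal{I}(\sqrt{\myP}\tilde{X};\tilde{Y}_1,Y_2)$ evaluated at splitting ratio $\vec{\rho}$.

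First I would treat a fixed $\vec{\rho}\in[0,1]^K\setminus\{\vec{0},\vec{1}\}$, for which $\Theta_1,\Theta_2$ of \eqref{my_theta} are strictly positive constants independent of $\myP$. Then \eqref{theory1_2} reads $\mathcal{I}\vert_{\vec{\rho}}=\tfrac32\log_2\myP+c(\vec{\rho})+o(1)$ with bounded $c(\vec{\rho})=\log_2\!\big(\sqrt{2}\,\sqrt{\Theta_1\Theta_2}/(\sigmaonesqrt\sigmatwosqrt)\big)-\gamma/(2\ln 2)$, while \eqref{rho_1} gives $\mathcal{I}\vert_{\vec{1}}=\log_2\myP+O(1)$ and \eqref{rho_0} gives $\mathcal{I}\vert_{\vec{0}}=\log_2\myP+O(1)$ (the $\myP^{2}$ under the $\tfrac12\log_2$ again yielding leading coefficient $1$). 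Hence $\mathcal{I}\vert_{\vec{\rho}}/\max\{\mathcal{I}\vert_{\vec{0}},\mathcal{I}\vert_{\vec{1}}\}\to 3/2$ for \emph{every} interior $\vec{\rho}$, while $\vec{0}$ or $\vec{1}$ in the numerator gives the ratio $1$; this is the limit asserted just above the proposition. I would also pin down the $\vec{\rho}^\star$ of the statement as a maximizer of $\Theta_1\Theta_2$ over $[0,1]^K$ (a solution of \eqref{first_P1}): the leading term of \eqref{theory1_2} is strictly increasing in $\Theta_1\Theta_2$, which is continuous on the compact cube, vanishes on $\{\vec{0},\vec{1}\}$ and is positive elsewhere, so its maximum $(\Theta_1\Theta_2)^\star>0$ is attained at some $\vec{\rho}^\star\notin\{\vec{0},\vec{1}\}$ --- equal to $\rho^\star=1/3$ of Proposition~\ref{coro:opti_rho} when $K=1$, and to $K_1=K_2=K/2$ in the large-antenna regime of Proposition~\ref{prop:large_antenna}. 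In particular $\mathcal{I}\vert_{\vec{\rho}^\star}=\tfrac32\log_2\myP+O(1)$.

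The step I expect to be the main obstacle is showing $\sup_{\vec{\rho}\in[0,1]^K}\mathcal{I}\vert_{\vec{\rho}}=\tfrac32\log_2\myP+O(1)$, because \eqref{theory1_2} is established only for a fixed interior $\vec{\rho}$ and fails to be uniform as $\vec{\rho}\to\vec{0}$ or $\vec{\rho}\to\vec{1}$, so one must exclude the supremum being inflated near those corners. I would settle this with a direct \emph{uniform} upper bound: applying the chain rule $h(\tilde{Y}_1,Y_2)\le h(\tilde{Y}_1)+h(Y_2\mid\tilde{Y}_1)$ in \eqref{first_mutual_info} gives $\mathcal{I}\vert_{\vec{\rho}}\le[h(\tilde{Y}_1)-h(\tilde{Z})]+[h(Y_2\mid\tilde{Y}_1)-h(N)]$; the first bracket equals $\log_2(1+\Theta_1\myP/\sigmaone)$, and for the second, conditioning on $\tilde{Y}_1$ makes $\tilde{X}$ complex Gaussian with posterior variance $\tau^2=\sigmaone/(\Theta_1\myP+\sigmaone)$, whence $\mathbb{E}_{\tilde{Y}_1}[\mathrm{Var}(\vert\tilde{X}\vert^2\mid\tilde{Y}_1)]\le 2\tau^2$ and, by Gaussian maximum entropy and Jensen, $h(Y_2\mid\tilde{Y}_1)-h(N)\le\tfrac12\log_2\!\big(1+2\Theta_2\myP^2\tau^2/\sigmatwo\big)$. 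Setting $u=\Theta_1\myP+\sigmaone$ and adding, the bound collapses to $\mathcal{I}\vert_{\vec{\rho}}\le\tfrac12\log_2 u-\log_2\sigmaone-\tfrac12\log_2\sigmatwo+\tfrac12\log_2\!\big(u\sigmatwo+2\Theta_2\myP^2\sigmaone\big)$, and using $\Theta_1\le H_2$ and $\Theta_2\le H_4$ (so $u=O(\myP)$ and the last argument is $O(\myP^2)$) this is $\le\tfrac32\log_2\myP+O(1)$ \emph{uniformly} in $\vec{\rho}\in[0,1]^K$. The matching lower bound is immediate, $\sup_{\vec{\rho}}\mathcal{I}\vert_{\vec{\rho}}\ge\mathcal{I}\vert_{\vec{\rho}^\star}=\tfrac32\log_2\myP+O(1)$; hence $\sup_{\vec{\rho}}\mathcal{I}\vert_{\vec{\rho}}=\tfrac32\log_2\myP+O(1)$ and, in fact, $\sup_{\vec{\rho}}\mathcal{I}\vert_{\vec{\rho}}/\mathcal{I}\vert_{\vec{\rho}^\star}\to 1$, so the asymptotic optimum really is at $\vec{\rho}^\star$.

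Finally, dividing the numerator $\mathcal{I}\vert_{\vec{\rho}^\star}=\tfrac32\log_2\myP+O(1)$ (equivalently $\sup_{\vec{\rho}}\mathcal{I}\vert_{\vec{\rho}}$) by the denominator $\max\{\mathcal{I}\vert_{\vec{0}},\mathcal{I}\vert_{\vec{1}}\}=\log_2\myP+O(1)$ from \eqref{rho_1} and \eqref{rho_0}, and letting $\snr\to\infty$, gives $G=\lim\big(\tfrac32\log_2\myP+O(1)\big)/\big(\log_2\myP+O(1)\big)=3/2$, which is the claim. Apart from the uniform upper bound, everything is routine bookkeeping on top of Lemma~\ref{theory:high_snr}, \eqref{rho_1}, and \eqref{rho_0}.
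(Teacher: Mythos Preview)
Your core argument coincides with the paper's: the paper's entire justification for Proposition~\ref{splitting_gain} is the short paragraph preceding it, which simply invokes \eqref{theory1_2} of Lemma~\ref{theory:high_snr} together with \eqref{rho_1} and \eqref{rho_0}, observes that the ratio tends to $3/2$ for \emph{every} fixed interior $\vec{\rho}$, and concludes that the asymptotically optimal $\vec{\rho}^\star$ lies in $[0,1]^K\setminus\{\vec{0},\vec{1}\}$, hence $G=3/2$. Your first two steps (extracting the leading $\tfrac32\log_2\myP$ versus $\log_2\myP$ coefficients and identifying $\vec{\rho}^\star$ as a maximizer of $\Theta_1\Theta_2$ via \eqref{first_P1}) are exactly this.

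Where you go beyond the paper is in the uniform upper bound $\sup_{\vec{\rho}}\mathcal{I}\vert_{\vec{\rho}}\le\tfrac32\log_2\myP+O(1)$. The paper does not address this at all: it implicitly treats the pointwise limit as sufficient and never worries that the supremum in Definition~\ref{def:splitting_gain_MI} might be inflated near $\vec{0}$ or $\vec{1}$ where Lemma~\ref{theory:high_snr} loses uniformity. Your chain-rule/max-entropy/Jensen bound, together with the posterior-variance calculation $\mathbb{E}_{\tilde{Y}_1}[\mathrm{Var}(|\tilde{X}|^2\mid\tilde{Y}_1)]\le 2\tau^2$, is correct and closes that gap cleanly; it is a genuine strengthening of the paper's argument rather than an alternative route. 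So: same approach, but your version is the rigorous one.
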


\subsection{Explanation of the Joint Processing Gain}
The result of Proposition~\ref{splitting_gain} shows that in the high SNR regime, since $G=3/2 > 1$, the splitting receiver provides a processing gain.
Note that although the joint processing gain at any given SNR depends on the specific value of the received signal power $\myP$, the noise variances $\sigmaone$ and $\sigmatwo$,
the asymptotic joint processing gain is independent of the specific noise variances at the CD and PD circuits in the high SNR regime.
\emph{This implies that 
the reason for the performance improvement lies in the joint coherent and non-coherent processing.}
This is explained in detail using intuitive and geometric arguments as follows.
%

\emph{Intuitive explanation of the rate improvement:}
\textnormal{Since the degrees of freedom of a channel is commonly defined as the dimension of the received signal space~\cite{BOOKTse}, the coherent AWGN channel has two degrees of freedom (I-Q plane) while the non-coherent AWGN channel has one degree of freedom (P-axis). For the splitting channel created by jointly utilizing both the coherent and non-coherent AWGN channels, the received signals are spread into a three-dimensional space, i.e., the I-Q-P space. 
	Thus, the splitting channel can be treated as a channel with three degrees of freedom. 
	Therefore, the splitting channel with a properly designed splitting ratio can take better advantage of the I-Q-P space, and achieve a better channel rate performance compared with either the coherent or non-coherent AWGN channel.}
	
\textnormal{We would like to highlight that a `splitting receiver', which splits the received signal at each antenna into two streams and sends both streams to CD circuits (i.e., two coherent AWGN channels), does not provide any rate improvement. After MRC, it is straightforward to see that the received signal space still lies on the I-Q plane. Thus, the received signal space is the same as for the conventional coherent receiver.
For instance, consider a single-antenna receiver for ease of illustration. It can be proved that the best `splitting' strategy is to send the entire signal to the CD circuit with~a smaller noise variance, instead of splitting and sending signals to both CD circuits~\cite{BOOKTse}.
Therefore, there is no joint processing gain by using two coherent AWGN channels, i.e., $G = 1$. 
The same argument holds for a `splitting receiver' which splits the received signal at each antenna into two streams and sends them to two PD circuits (i.e., two non-coherent AWGN channels).
}

\textnormal{Therefore, the key to the rate improvement is the increased dimension of the received signal space achieved by joint coherent and non-coherent processing, where the coherent channel adds noise linearly to the signal, and the noncoherent channel adds noise to the square amplitude of the signal.	
	}

\emph{A geometric explanation of the asymptotic gain:}
\textnormal{As discussed in Sec.~II.C, a splitting receiver with the splitting ratio $\vec{\rho}$ maps the noiseless received signal space, i.e., the I-Q plane, to a paraboloid in the I-Q-P space with parameter ${\sqrt{\Theta_2}}/{\Theta_1}$ which depends on $\vec{\rho}$. 
Considering a disk with radius $R$ and center $(0,0)$ in the I-Q plane, the area of the disk is $\pi R^2$, {\color{black}where $R$ is proportional to $\sqrt{P}$ in this paper.} 
After the mapping, the disk is converted into a paraboloid with parameter ${\sqrt{\Theta_2}}/{\Theta_1}$ which is restricted by the condition that the projection of the paraboloid in the I-Q plane should lie within the disk with radius $\sqrt{\Theta_1}R$.
When $R$ is sufficiently large, the area of the paraboloid can be shown to be approximated by $3 \pi \sqrt{\Theta_1 \Theta_2} R^3$ for $\vec{\rho} \neq \vec{0} \text{ nor } \vec{1}$.
It is well known that the optimal constellation design for the I-Q space is equivalent to a sphere-packing problem, i.e., packing two-dimensional spheres (disks) with a certain radius, which is related to the detection error rate, on the surface of the disk (i.e., the disk on the I-Q plane). 
The number of spheres that can be packed is proportional to the area of the disk.
Thus, the communication rate can be written as $\mathcal{O}\left(\log({\pi R^2})\right)\sim 2\mathcal{O}({\color{black}\log{R}})$.
Similarly, for the paraboloid, the number of three-dimensional spheres\footnote{Note that sphere-packing is considered only if  $\sigmaone = 2 \sigmatwo$, i.e., a uniform three-dimensional noise sphere, otherwise, it is ellipsoid-packing. Here we use sphere-packing for ease of illustration.} (balls) that can be packed on the surface is proportional to the paraboloid area, and the rate can be written as $\mathcal{O}\left(\log({3 \pi \sqrt{\Theta_1 \Theta_2} R^3})\right)\sim 3\mathcal{O}({\color{black}\log{R}})$.
Therefore, it is straightforward to see that there is a $3/2$ fold rate gain provided by the splitting receiver when $R$ is sufficiently large.
To sum up, bending the signal space from a two-dimensional plane to a three-dimensional paraboloid increases the effective area of the signal space which boosts the communication rate.}

%
%

\emph{The complexity of splitting receiver:}
{\color{black}Although the splitting receiver is able to provide a performance gain, it is clear that for the information detection in the digital domain, the splitting receiver requires a three-dimensional detection, while the conventional CD/PD receiver only needs a two/one-dimensional detection, respectively. 
	Specifically, when applying the minimum distance detection for practical modulation, the splitting receiver needs to calculate the distance between two signal points in the three-dimensional space, while the conventional CD/PD receiver only needs to calculate the distance in the two/one-dimensional space.
	Thus, the splitting receiver requires a higher computation complexity to achieve the performance gain.
	Regarding the circuit complexity, for each antenna branch, the splitting receiver requires two detection circuits, while the conventional CD/PD receiver only needs one detection circuit.
	On the other hand, the proposed simplified receiver has a lower complexity than the CD receiver and a higher complexity than the PD receiver.
	%
	Therefore, we should consider both the performance gain and the complexity (and the cost) when adopting a splitting receiver in practical systems.}

\subsection{Numerical Results}
In the last two subsections, we have shown and explained that the splitting receiver achieves a $3/2$ fold rate gain compared with the non-splitting channels in the high SNR regime.
This suggests that a notable performance improvement can be found within a moderate SNR range, which is verified as follows.
{\color{black}Also, we verify the tightness of the asymptotic analytical results presented in Sec.~III.B.}

\subsubsection{Single-antenna scenario}
We set the channel power gain $\vert\tilde{h}_1 \vert^2  =1$ for simplicity.

Fig.~\ref{fig:high_SNR_lower_bound} depicts the mutual information approximation given in \eqref{theory1_1} and also the simulated mutual information with different received signal power.
We see that the approximation and simulation results have the same general trend, and the percentage difference between the approximation and simulation results decreases as $\myP$ increase (e.g., from $\myP = 10$ to $100$).
{\color{black}Also we see that the optimal splitting ratios are (almost) the same for both the approximation and simulation results.
When $\snr$ is sufficiently large, e.g., $20$~dB, $\rho=0.33$ makes the mutual information at least $20\%$ larger than that of the conventional cases (i.e., $\rho = 0$ or $1$), and the joint processing gain is shown to be $G \approx 1.3$. 
When $\snr=30$~dB (i.e., $\myP=1000$), the approximation is tight, and the joint processing gain with $\rho = 0.33$ is close to $1.5$.
Thus, the tightness of the mutual information expressions in Lemma~1 and Proposition~1 (which is obtained by taking $\rho=1/3$ into Lemma~1) and also the asymptotic joint processing gain given by Proposition~3 are verified.}

\begin{figure*}[t]
	\small
	\renewcommand{\captionlabeldelim}{ }	
	\renewcommand{\captionfont}{\small} \renewcommand{\captionlabelfont}{\small}
	\minipage{0.47\textwidth}
	\centering
	\vspace*{-0.7cm}
	\includegraphics[width=\linewidth]{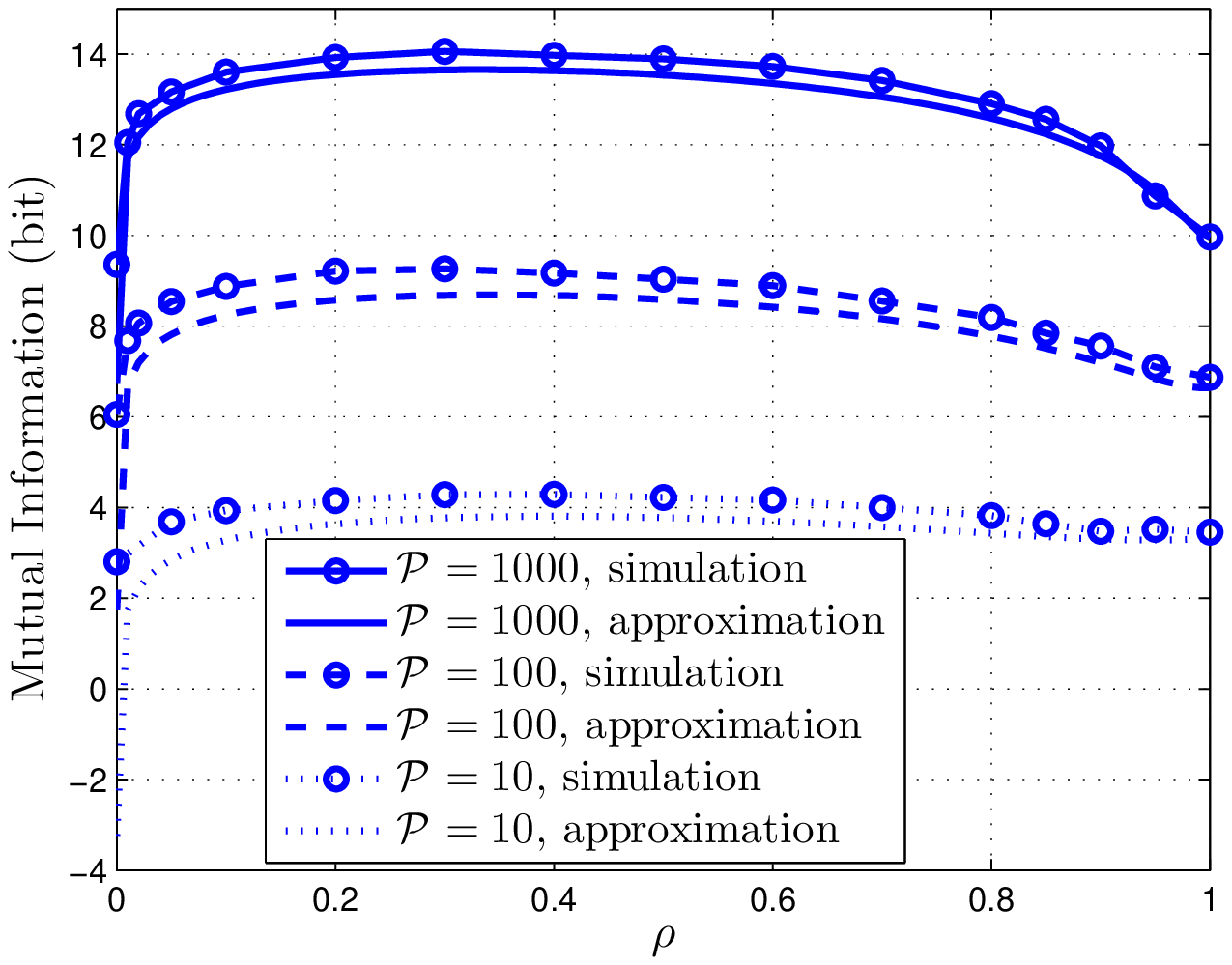}	
	\vspace*{-1.3cm}
	\caption{\small Mutual information versus $\rho$,\! $\sigmaone\! = \sigmatwo\! =1$.}	
	\label{fig:high_SNR_lower_bound}
	\endminipage
	\hspace{0.5cm}
	\minipage{0.47\textwidth}
	\centering
	\vspace*{-0.7cm}
	\includegraphics[width=\linewidth]{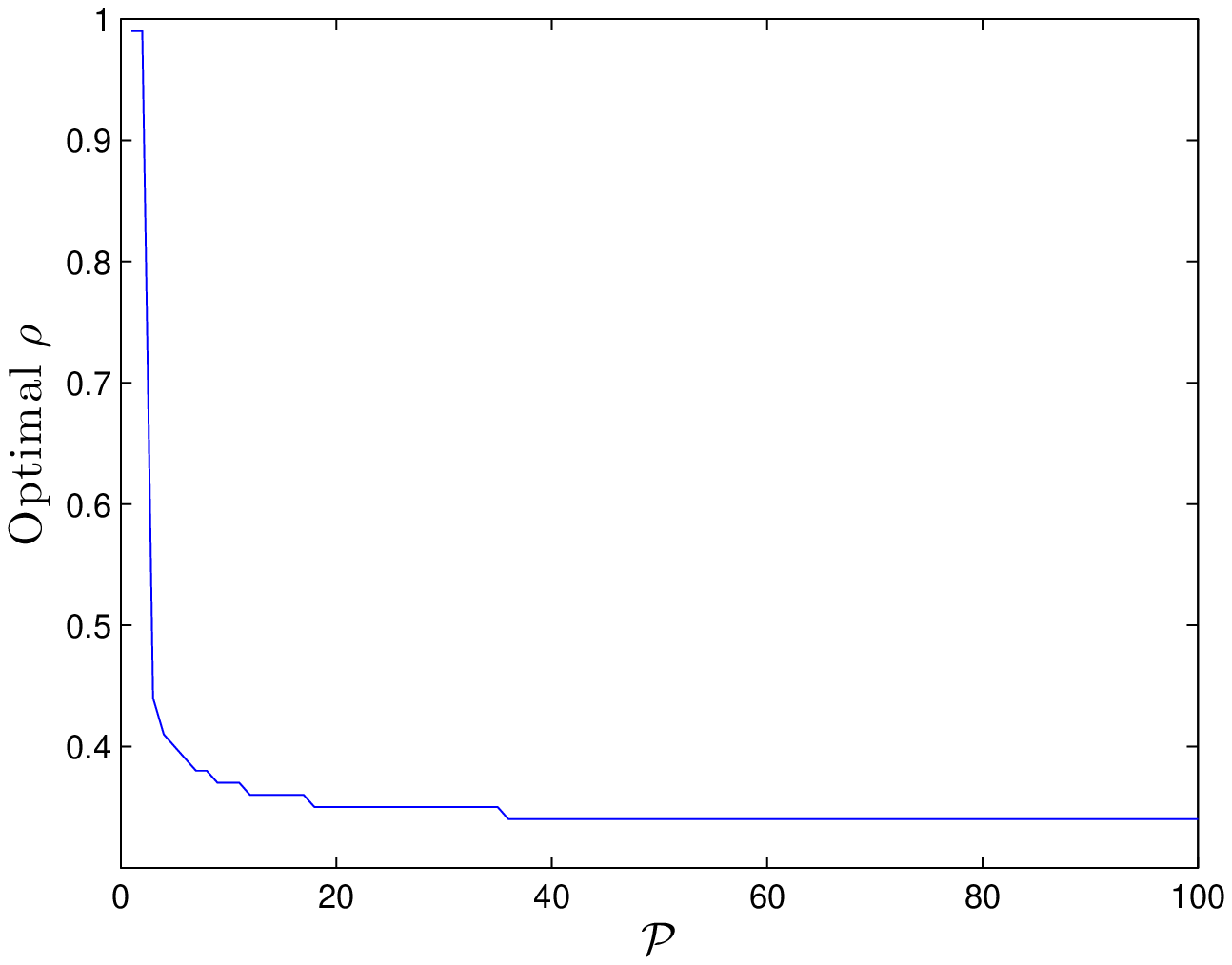}	
	\vspace*{-1.3cm}
	\caption{\small Optimal $\rho$ versus $\myP$, $\sigmaone = \sigmatwo = 1$.}	
	\label{fig:high_SNR_opti_rho}
	\endminipage
	\vspace*{-0.7cm}
\end{figure*}


Fig.~\ref{fig:high_SNR_opti_rho} depicts the optimal splitting ratio $\rho$ (obtained by simulation) versus the received signal power $\myP$. 
{\color{black}It is observed that the optimal splitting ratio approaches $1/3$ quickly as $\myP$ increases, i.e., $\rho=1/3$ when $\myP > 35$. \emph{Thus, 
	$\rho = 1/3$ is a near-optimal choice even at moderate SNRs,
	and the tightness of the asymptotic optimal splitting ratio in Proposition~1 is verified.}}


Fig.~\ref{fig:high_SNR_lower_bound_power} depicts the approximation of the splitting channel mutual information given in \eqref{theory1_1} with $\rho = 1/3$, and the optimal non-splitting channel mutual information, i.e., $\max\{\eqref{rho_1},\eqref{rho_0}\}$. 
It is observed that the splitting channel mutual information increases much faster w.r.t. $\myP$ as compared with the coherent and non-coherent AWGN channels.
When $\snr > 20$~dB (e.g., $\myP>100$, $\sigmaone =1$ and $\sigmatwo=0.1$, or $\myP>1000$, $\sigmaone =1$ and $\sigmatwo=10$), one can clearly see the mutual information improvement due to splitting. 

\begin{figure*}[t]
	\renewcommand{\captionfont}{\small} \renewcommand{\captionlabelfont}{\small}
	\minipage{0.47\textwidth}
	\includegraphics[width=\linewidth]{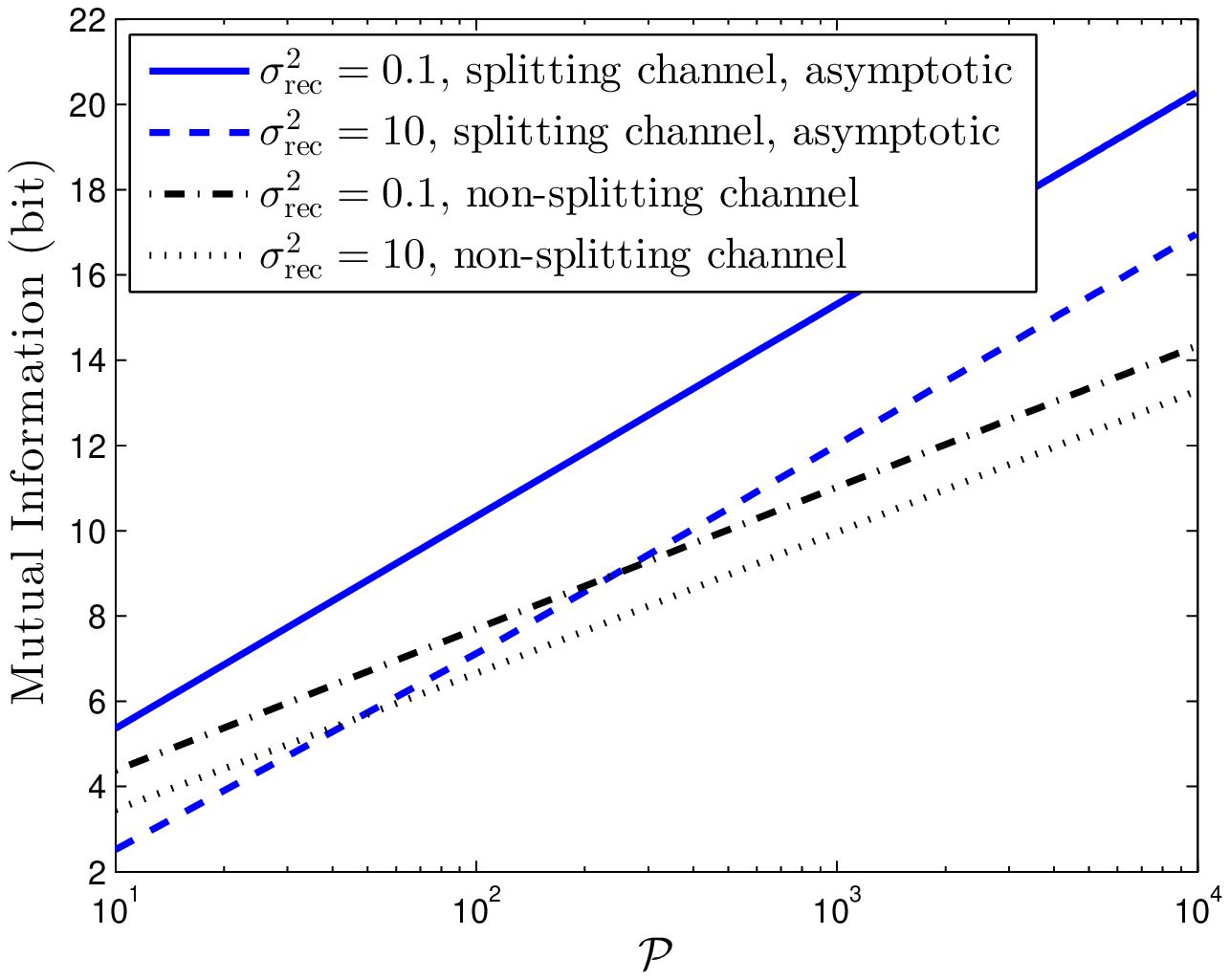}
	\vspace*{-1.3cm}
	\caption{Mutual information of the splitting channel and the non-splitting channel versus $\myP$, $\sigmaone = 1$.}\label{fig:high_SNR_lower_bound_power}
	\endminipage
	\hspace{0.5cm}
	\minipage{0.47\textwidth}
	\includegraphics[width=\linewidth]{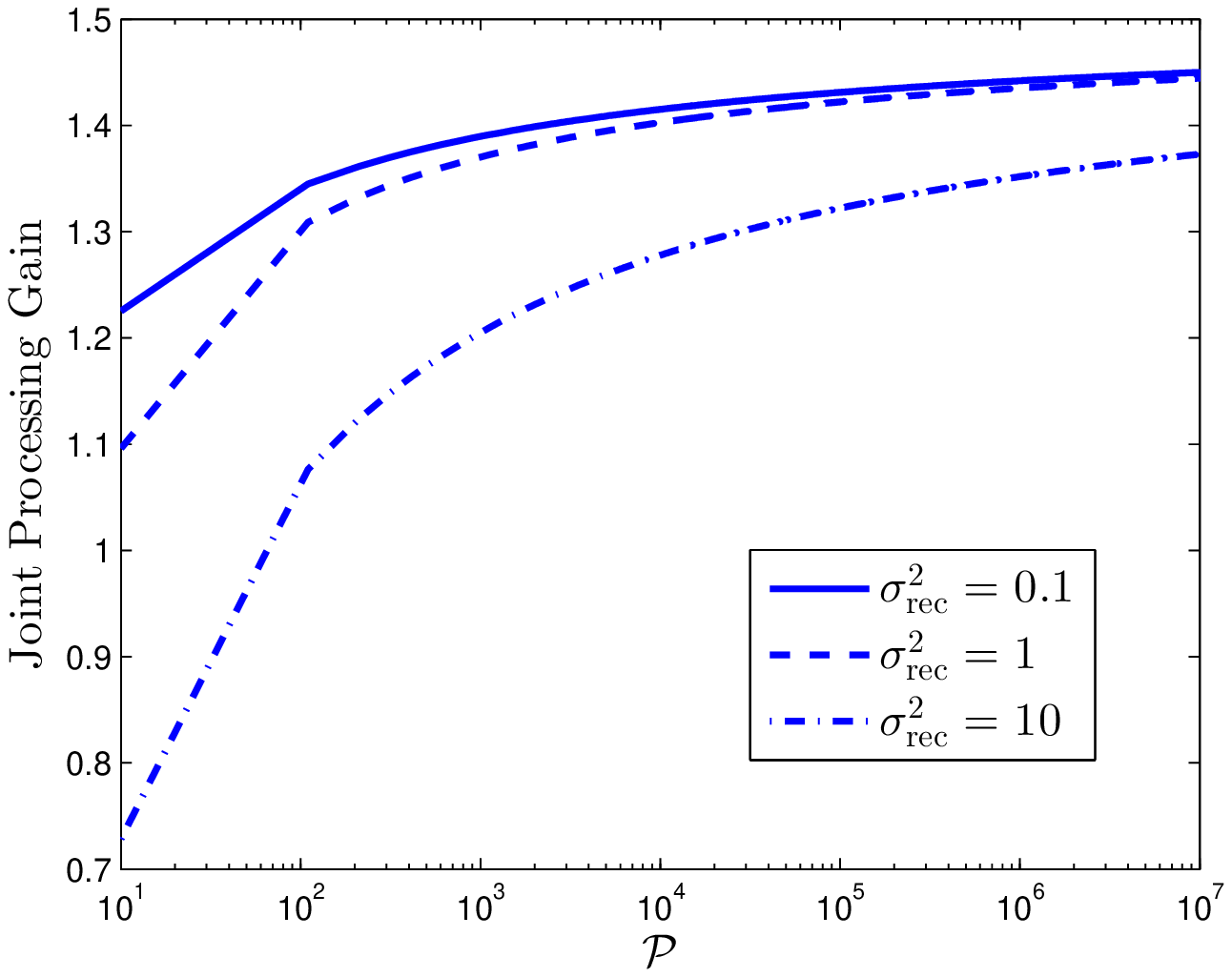}
	\vspace*{-1.3cm}
	\caption{Joint processing gain versus $\myP$, $\sigmaone = 1$.}\label{fig:splitting gain}
	\endminipage
\end{figure*}

Fig.~\ref{fig:splitting gain} depicts the joint processing gain obtained by taking \eqref{theory1_1} into  Definition~\ref{def:splitting_gain_MI}.
It is observed that the joint processing gain increases with $\myP$ and slowly approaches the constant~$3/2$. 
The gain at a practically high SNR, e.g., $30$~dB, is notable, e.g., $1.2 \sim 1.4$.
\subsubsection{Multi-antenna scenario}
Fig.~\ref{fig:K_4} depicts the average mutual information over $10^3$ channel realizations using \eqref{theory1_2}, where the channel power gain $\vert \tilde{h}_k \vert^2$ is assumed to follow an exponential distribution with the mean of $1$, and $\myP = 100$.
Three splitting strategies are considered:
(i) the numerically searched optimal splitting ratios by solving (P1) for every channel realization (i.e., optimal splitting), 
(ii) the simplified receiver with the strategy in Proposition~\ref{prop:large_antenna}, and 
(iii) $\rho_k = 1/3$ for all $k=1,2,...,K$.
It is observed that the splitting receiver with the optimal splitting strategy is better than the simplified receiver.
On the other hand, the splitting receiver can perform worse than the simplified receiver if some sub-optimal splitting strategy is used, e.g., $\rho = 1/3$, which is the optimal for a single-antenna receiver, but generally not necessarily optimal for a multi-antenna receiver.

Fig.~\ref{fig:large_K} depicts the optimal ratio of antennas allocated for coherent processing for a simplified receiver obtained by simulation using $10^4$ random channel realizations.
{\color{black}\emph{It shows that the optimal ratio is within the range of $(0.45,0.55)$ when $K>40$, and the optimal ratio converges to $1/2$ as $K$ increases further, which verifies Proposition~2.}}

\begin{figure*}[t]
	\renewcommand{\captionfont}{\small} \renewcommand{\captionlabelfont}{\small}
	\minipage{0.47\textwidth}
		\vspace*{-0.7cm}
	\includegraphics[width=\linewidth]{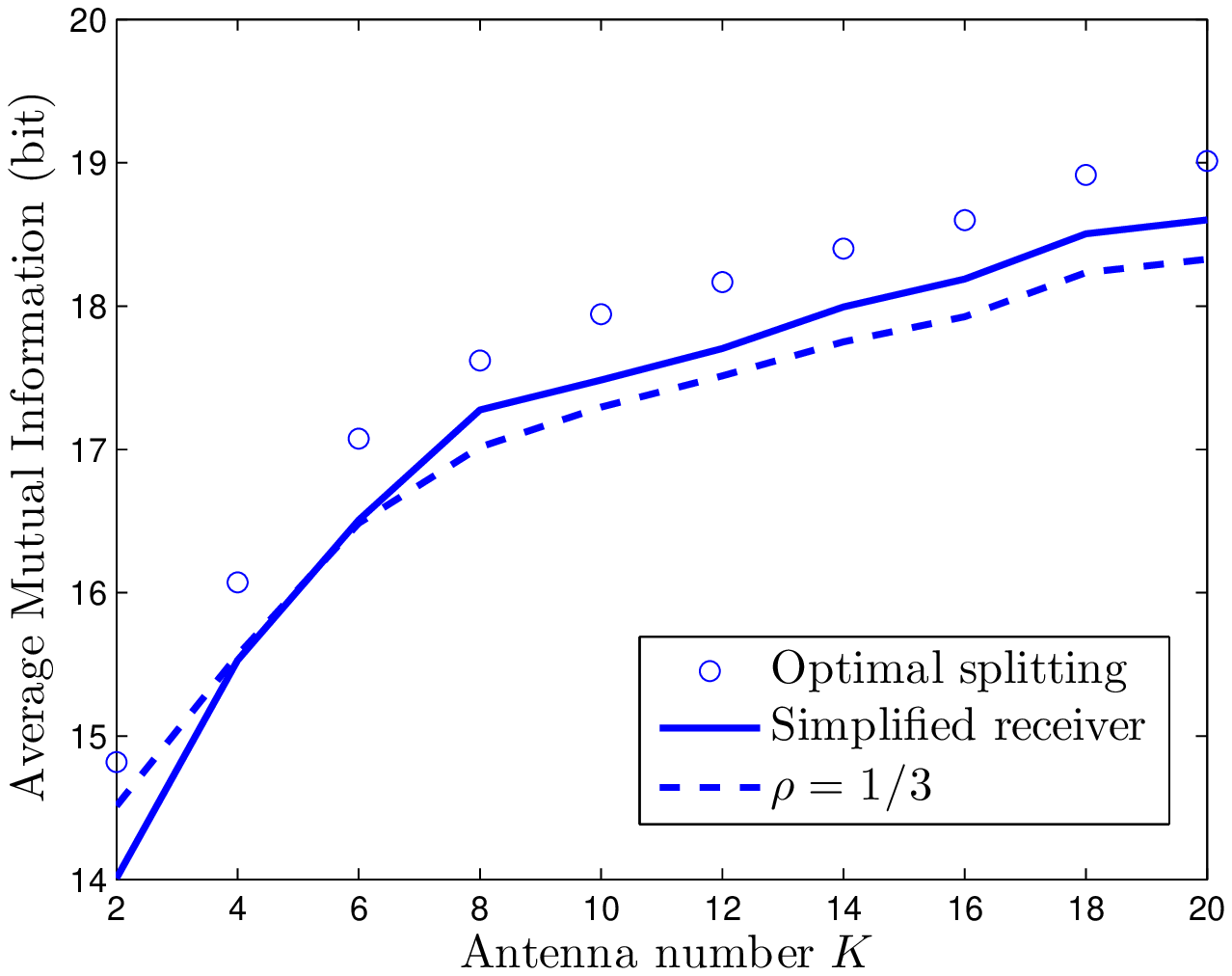}
	\vspace*{-1.3cm}
	\caption{Mutual information of the splitting channel with different splitting strategies, $\sigmaone = \sigmatwo = 1$.}\label{fig:K_4}
	\endminipage
	\hspace{0.5cm}
	\minipage{0.47\textwidth}
		\vspace*{-0.7cm}
	\includegraphics[width=\linewidth]{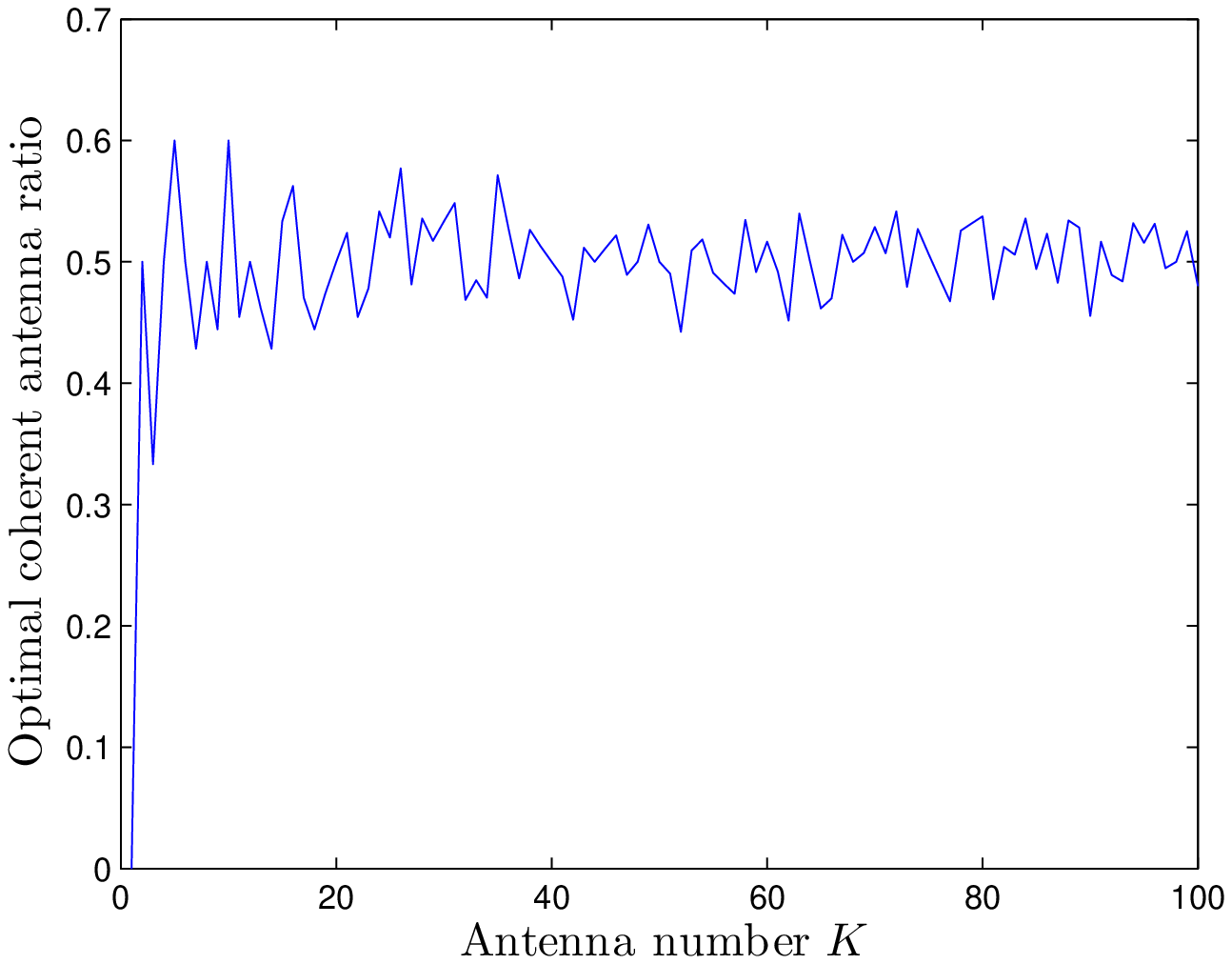}
	\vspace*{-1.3cm}
	\caption{Average optimal ratio of antennas allocated for coherent processing versus $K$.}\label{fig:large_K}
	\endminipage
	\vspace*{-0.7cm}
\end{figure*}

\section{Splitting Receiver: Practical Modulation}
In this section, we consider commonly used modulation schemes and assume that each signal of the constellation is transmitted with the same probability.
Note that, in this section, 
$x$ and $y$ denote the in-phase and quadrature signals in the CD circuit, respectively, and $z$ denotes the signal in the PD circuit, which are different from the notation in Sec.~III.
This change of notation is adopted for ease of presentation of results.

\subsection{Transmitted Signal Constellation}
We consider the transmitted signal constellation of a general $M$-ary modulation scheme is $\myomegagen$, which is a two-dimensional constellation placed on the I-Q plane.
The $i$th symbol is denoted by the tuple $({x}_i,{y}_i)$ on the I-Q plane, $i = 1,2,...,M$.
Specifically: 
\begin{enumerate} [(i)]
\item For the $M$-PAM scheme~\cite{BOOKTse}, which is a one-dimensional modulation scheme on the I-axis, we have ${x}_i= 2 i -1$ for $i=1,2,...,M/2$, and ${x}_i= - x_{i-M/2}$ for $i=M/2+1,...M$, and $y_i = 0$ for all $i$.
\item For the $M$-QAM scheme, which is a two-dimensional modulation scheme on the I-Q plane, we have ${x}_i= 2 \left(i \mod{\frac{\sqrt{M}}{2}} \right) - 1$, 
${y}_i= 2 \left\lfloor \frac{i-1}{\sqrt{M}/2}\right\rfloor  -1$,
$i=1,2,...,M/4$, which are the first quadrant symbols on the I-Q plane. Due to the symmetry property of $M$-QAM, the other quadrant symbol expressions are omitted for brevity.
\item For the $M$-IM scheme, which is a one-dimensional modulation scheme on the positive I-axis {\color{black}where} the information is carried by the signal power but not phase, we have $x_i = \sqrt{2(i-1)}$ and $y_i = 0$, $i=1,2,...,M$.
\end{enumerate}

\subsection{Noiseless Received Signal Constellation}
Based on the received signal expression after MRC in~\eqref{MRC_signal}, the average signal power of the coherently and non-coherently processed signals are $\Theta_1 \myP$ and $\sqrt{\Theta_2} \myP$, respectively.
Thus, with such average power constraints, we define the noiseless received signal constellation $\myomegagennew$,
and the $i$th symbol in $\myomegagennew$ is denoted by the tuple $(\breve{x}_i, \breve{y}_i, \breve{z}_i)$, and  
$\breve{x}_i = k_1 \sqrt{\Theta_1 \myP} {x}_i$, $\breve{y}_i = k_1 \sqrt{\Theta_1 \myP} {y}_i$, $\breve{z}_i =k_2 \sqrt{\Theta_2} \myP \left(x^2_i + y^2_i\right)$. Here $k_1$ and $k_2 \triangleq k^2_1$ are the power normalization parameters determined only by the geometric property of a certain modulation scheme. 
Specifically, we have 
\begin{equation} \label{geo_parameters}
k_1 = \left\lbrace 
\begin{aligned}
& \sqrt{\frac{3}{M^2-1}}, & M\text{-PAM},\\
& \sqrt{\frac{3}{2 (M-1)} }, & M\text{-QAM},\\
& \sqrt{\frac{1}{M-1}}, & M\text{-IM}.
\end{aligned}
\right.
\end{equation}
For the single-antenna scenario, 
Figs.~\ref{fig:PAM_rho_shape} and~\ref{fig:PPM_rho_shape}
show that the splitting ratio ${\rho} \in (0,1)$ bends the received signal constellation from the I-axis to a paraboloid in the I-P plane, for $4$-PAM and $4$-IM, respectively.
Fig.~\ref{fig:QAM_shape} shows that the splitting ratio ${\rho} \in (0,1)$ bends the received signal constellation from the I-Q plane to a paraboloid in the I-Q-P space.
\begin{figure}[t]
	\small
	\renewcommand{\captionlabeldelim}{ }
	\renewcommand{\captionfont}{\small} \renewcommand{\captionlabelfont}{\small}		
	\centering     
	\vspace{-0.7cm}
	\hspace{-0.5cm}
	\subfigure[$4$-PAM on the I-P plane.]{\label{fig:PAM_rho_shape}\includegraphics[scale=0.46]{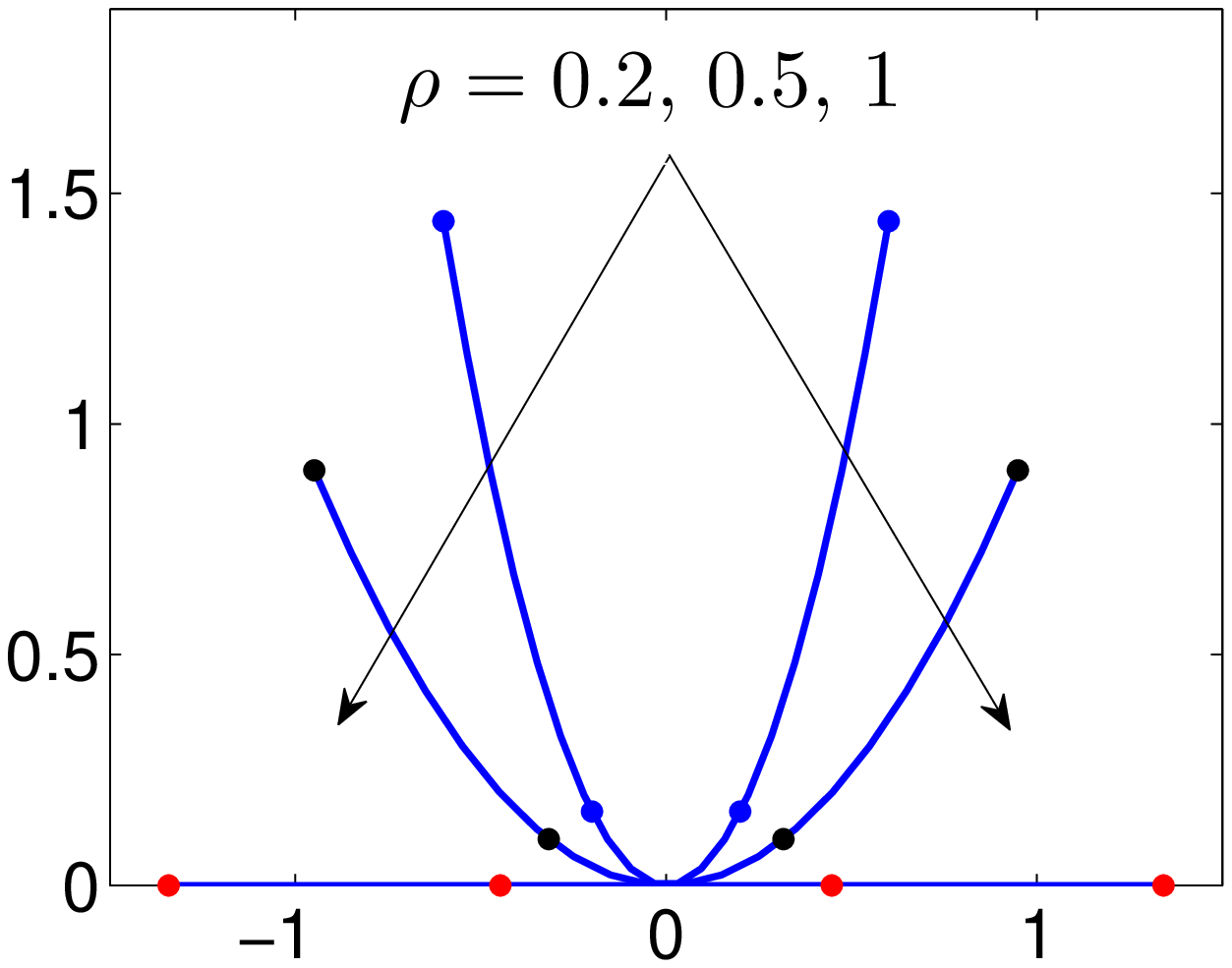}}
	\hspace{-0.6cm}
	\subfigure[$16$-QAM on the I-Q-P space, $\rho=0.2,\ 0.5,\ 1$.]{\label{fig:QAM_shape}\includegraphics[scale=0.72]{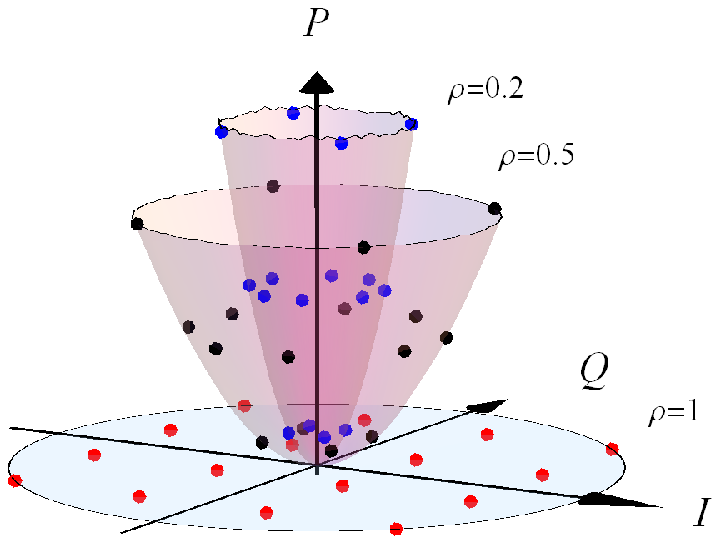}}
	\hspace{-0.4cm}
	\subfigure[$4$-IM on the I-P plane.]{\label{fig:PPM_rho_shape}\includegraphics[scale=0.5]{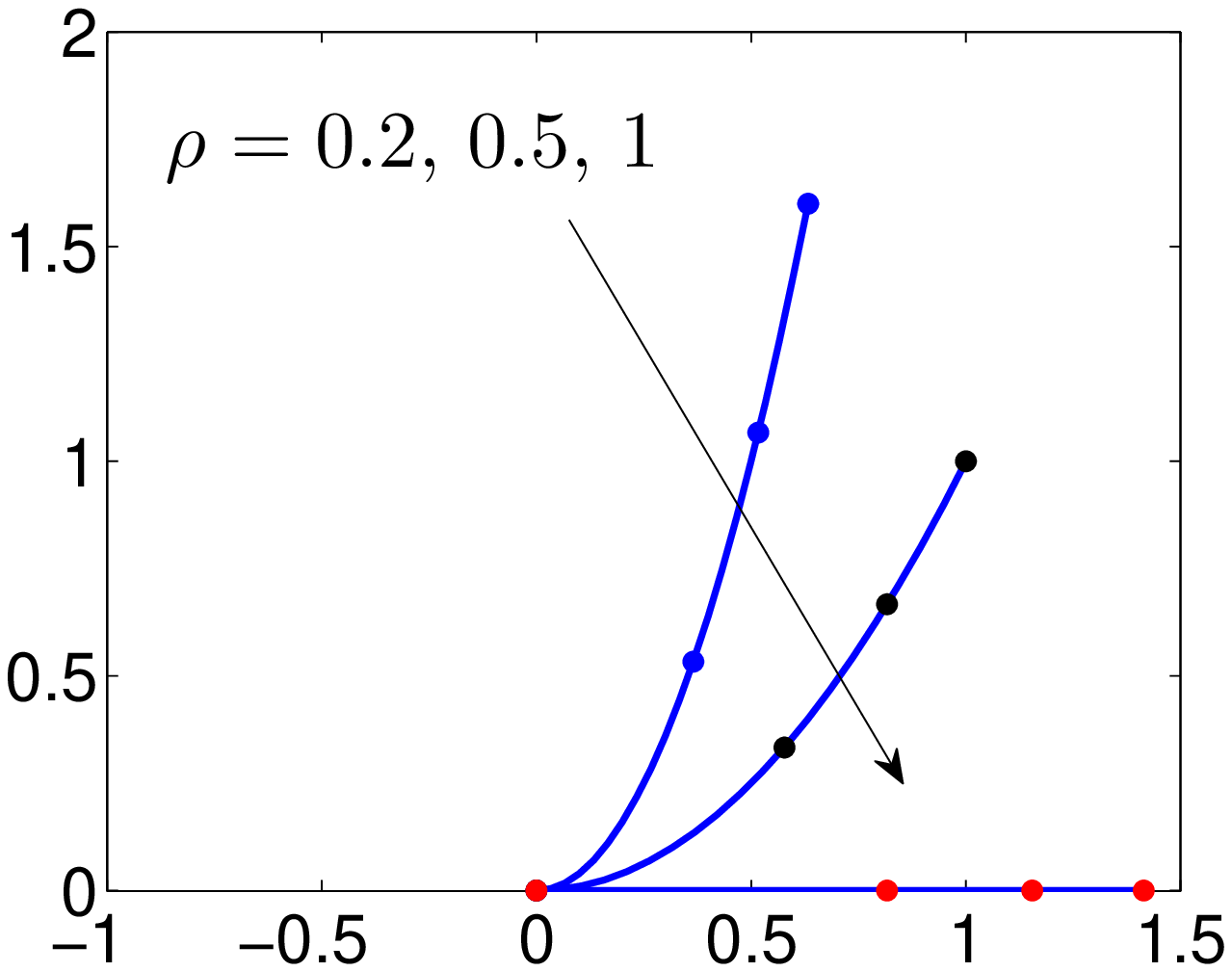}}	
	\caption{\small Noiseless received signal constellation with different $\vec{\rho}$, $K=1$, $\vert\tilde{h}_1\vert = 1$, $\myP = 10$.}
	\vspace{-0.5cm}
\end{figure}

\subsection{Decision Region}
Since all the transmitted symbols are of equal probability, the optimal signal detection method is the maximum likelihood (ML) method~\cite{BOOKTse}. The decision region for the $i$th symbol is defined~as
\begin{equation} \label{first_V}
\mathcal{V}_i \triangleq  \left\lbrace {\bm v} \vert f ({\bm v} \vert i) \geq f ({\bm v} \vert j) , \forall j \neq i, {\bm v} \in \mathbb{R}^3 \right \rbrace,
\end{equation}
where ${\bm v} \triangleq (x,y,z)$ is the three-dimensional post-processing (noise added) signal in the splitting receiver, and $f(\cdot \vert \cdot)$ is the conditional pdf.

From Sec. III, since both the CD and PD circuits introduce additive Gaussian noise,
the received signal is surrounded by the noise sphere (ellipsoid) in the I-Q-P space, and $f({\bm v}\vert i)$ is thus given by 
\begin{equation}
f({\bm v} \vert i) = \frac{1}{\sigmaone \pi \sqrt{2 \sigmatwo \pi}} \exp\left(
- \frac{(x-x_i)^2}{\sigmaone} 
- \frac{(y-y_i)^2}{\sigmaone}
- \frac{(z-z_i)^2}{2\sigmatwo}
\right).
\end{equation}
Therefore, \eqref{first_V} is rewritten as
\begin{equation} \label{second_V}
\mathcal{V}_i \triangleq  \left\lbrace (x,y,z):\ d_i(x,y,z) \leq d_j(x,y,z), \forall j\neq i\right\rbrace,
\end{equation}
where 
\begin{equation} \label{first_distance}
d_j(x,y,z) \triangleq \frac{\left(x-x_j\right)^2}{\sigmaone/2} + \frac{\left(y-y_j\right)^2}{\sigmaone/2} 
+  \frac{\left(z-z_j\right)^2}{\sigmatwo}.
\end{equation}
From \eqref{second_V} and \eqref{first_distance}, after simplification, the decision region of the $i$th symbol, $\mathcal{V}_i$ is given by
\begin{equation} \label{QAM_region}
\mathcal{V}_i= \left\lbrace (x,y,z):\  
\frac{\breve{x}_j-\breve{x}_i}{\sigmaone}\ x 
+\frac{\breve{y}_j-\breve{y}_i}{\sigmaone}\ y
+\frac{\breve{z}_j-\breve{z}_i}{2\sigmatwo}\ z
\leq \frac{\breve{x}^2_j+\breve{y}^2_j-\breve{x}^2_i-\breve{y}^2_i}{2\sigmaone} + \frac{\breve{z}^2_j-\breve{z}^2_i}{4 \sigmatwo}, \forall j \neq i
\right\rbrace,
\end{equation}
where $\breve{x}_i$, $\breve{y}_i$ and $\breve{z}_i$ are defined in Sec. IV.B above \eqref{geo_parameters}.
It is easy to see that $\mathcal{V}_i$ is bounded by planes. The plane implied in \eqref{QAM_region}, which divides the decision region between the $i$th and $j$th receive symbols, is given by
\begin{equation}
\mathcal{A}_{i-j} \triangleq \left\lbrace 
(x,y,z):\  
\frac{\breve{x}_j-\breve{x}_i}{\sigmaone}\ x 
+\frac{\breve{y}_j-\breve{y}_i}{\sigmaone}\ y
+\frac{\breve{z}_j-\breve{z}_i}{2\sigmatwo}\ z
= \frac{\breve{x}^2_j+\breve{y}^2_j-\breve{x}^2_i-\breve{y}^2_i}{2\sigmaone} + \frac{\breve{z}^2_j-\breve{z}^2_i}{4 \sigmatwo}
\right\rbrace.
\end{equation}

The decision regions for $8$-PAM, $36$-QAM (only for the symbols within the first quadrant of the I-Q-P space) and $4$-IM are illustrated in Figs.~\ref{fig:PAM_region},~\ref{fig:QAM_region}, and~\ref{fig:PPM_region}, respectively.

\subsection{Joint Processing Gain in SER}
To quantify the reduction in SER by the splitting receiver, we define the joint processing gain in terms of SER as:
\begin{definition}[Joint processing gain in SER] \label{def:splitting_gain}
Given a certain modulation scheme, the joint processing gain of the splitting receiver is
	\begin{equation}
	G \triangleq \frac{\min_{\vec{\rho} = \vec{0}, \vec{1}} P_e }{\inf\{P_e: \vec{\rho} \in \left[0,1 \right]^K\} },
	\end{equation}
	where $\inf\{\cdot\}$ denotes for the infimum, and $P_e$ is the SER for a given $\vec{\rho}$.
\end{definition}
The joint processing gain represents the maximum SER reduction provided by the splitting receiver, compared with the best of the conventional receivers.

\section{Splitting Receiver: SER Analysis}
In this section, we derive the SER at a splitting receiver for practical modulation schemes with the transmitted signal constellation $\myomegagen$ in the I-Q plane and the received signal constellation $\myomegagennew$ in the I-Q-P space.

The SER can be written as
\begin{equation}
P_{e} = \frac{1}{M}\sum\limits_{i=1}^{M} \left(1 - P_i\right), 
\end{equation}
where $P_i$ is the symbol success probability of the $i$th symbol, which is given by
\begin{equation} \label{QAM_SER}
P_i = \iiint_{\mathcal{V}_i} \exp\left(
-\frac{\left(x-\breve{x}_j\right)^2}{\sigmaone/2} - \frac{\left(y-\breve{y}_j\right)^2}{\sigmaone/2} 
-  \frac{\left(z-\breve{z}_j\right)^2}{\sigmatwo}  
\right) \mathrm{d}x\mathrm{d}y\mathrm{d}z.
\end{equation}

\begin{figure}[t]
	\small
	\renewcommand{\captionlabeldelim}{ }
	\renewcommand{\captionfont}{\small} \renewcommand{\captionlabelfont}{\small}		
	\centering     
	\vspace{-1cm}
	\subfigure[$8$-PAM decision regions in the I-P plane.]{\label{fig:PAM_region}\includegraphics[scale=0.35]{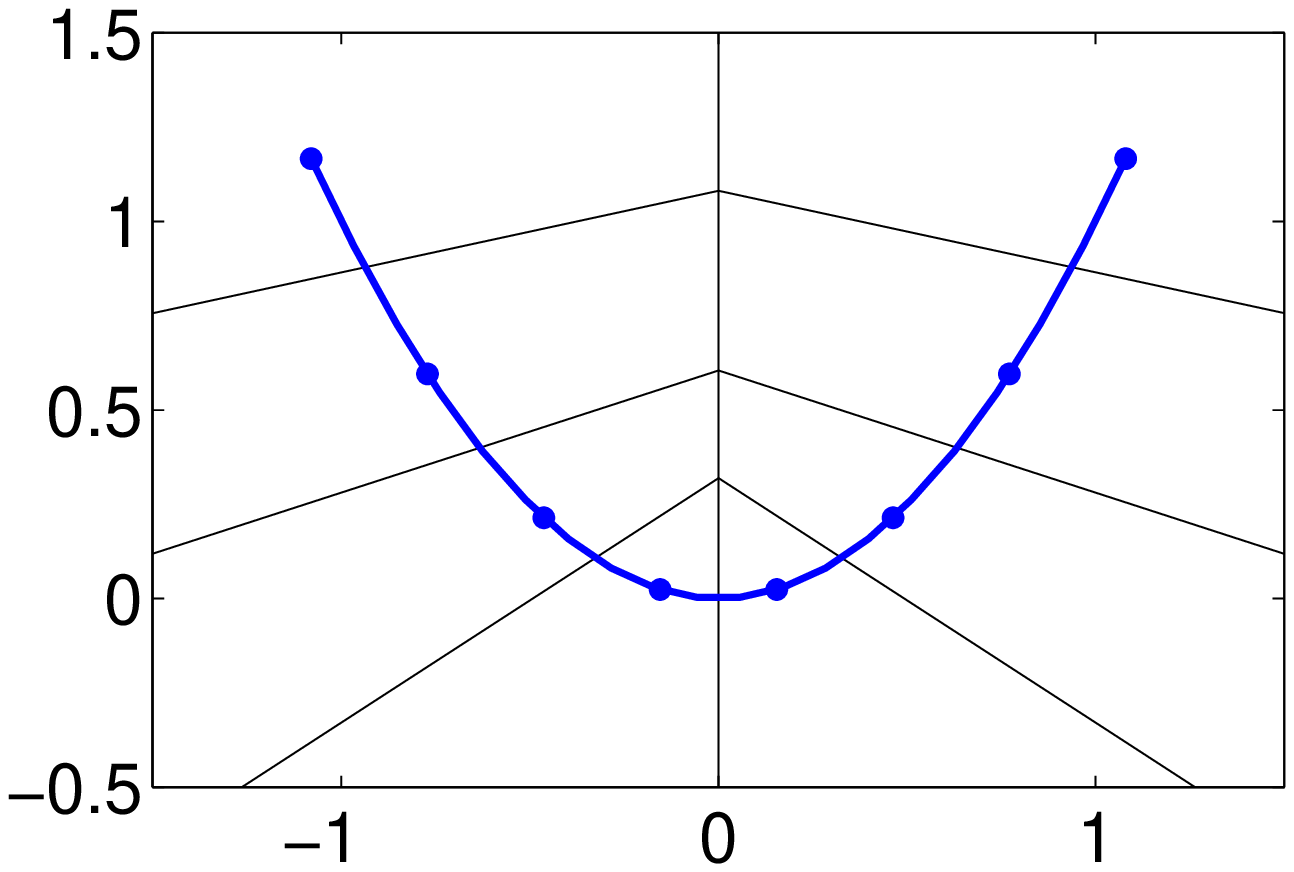}}
	\hspace{0.3cm}
	\subfigure[Illustration of the decision regions for $36$-QAM in the first quadrant, i.e., $x > 0$ and $y>0$.]{\label{fig:QAM_region}\includegraphics[scale=0.37]{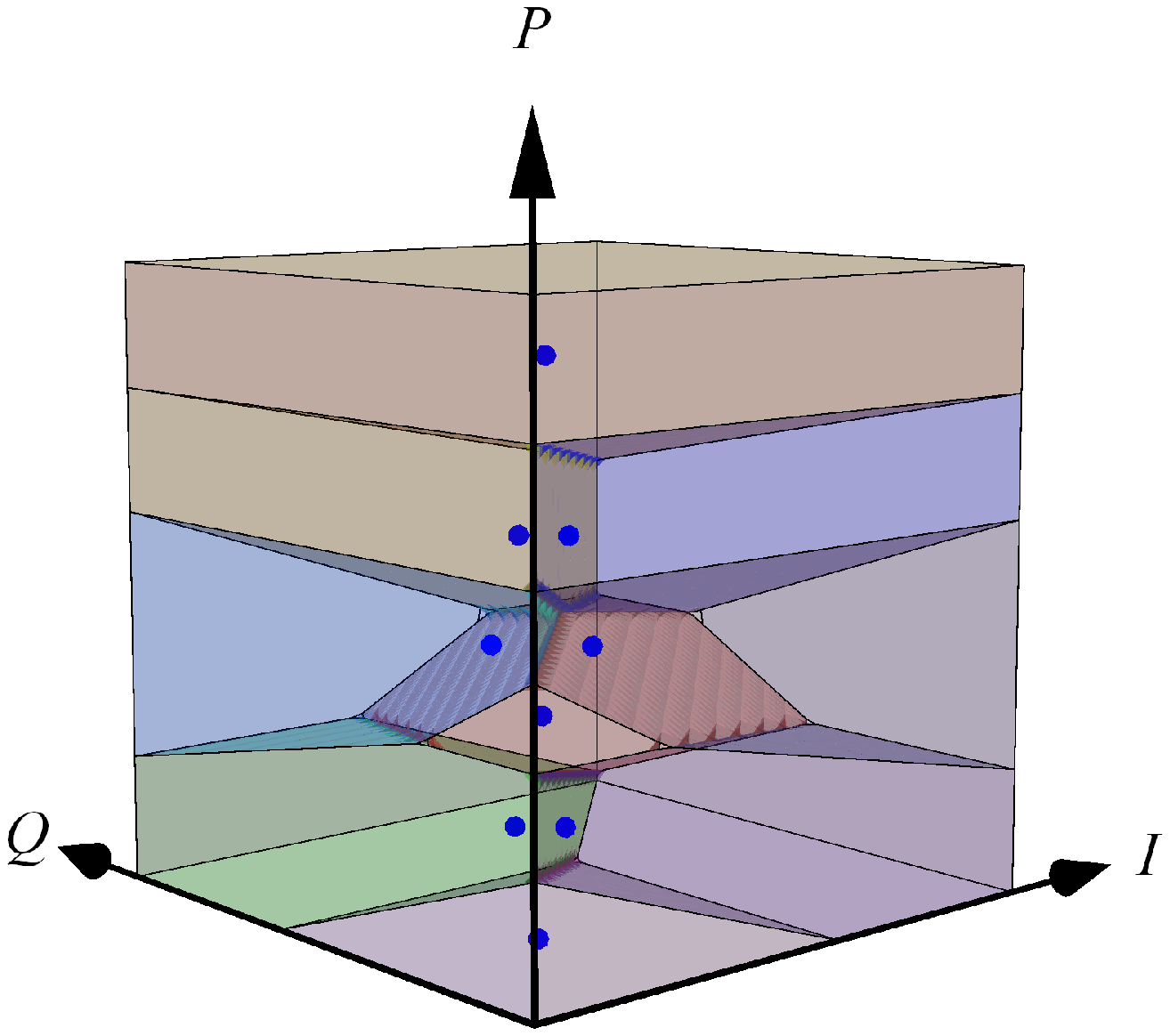}}
	\hspace{0.3cm}
	\subfigure[$4$-IM decision regions in the I-P plane.]{\label{fig:PPM_region}\includegraphics[scale=0.34]{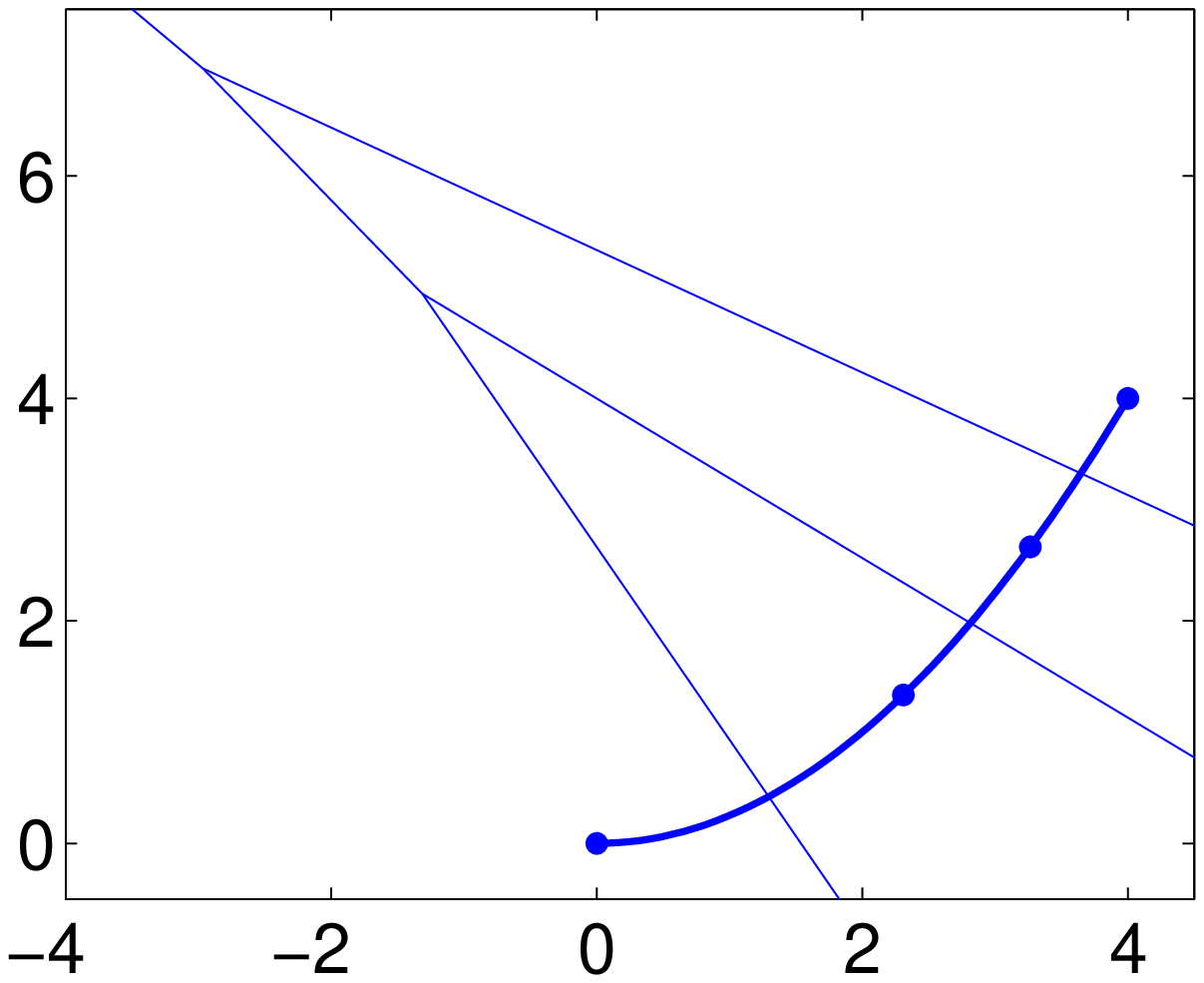}}	
	\caption{\small Decision regions for $8$-PAM, $36$-QAM and $4$-IM, $\myP = 10$, $\sigmaone =2$, $\sigmatwo =1$.}
	\label{fig:shapes}
	\vspace{-0.8cm}
\end{figure}

Based on Sec.~IV.B, 
when $\snr \rightarrow \infty$ and $\vec{\rho} \in \left[0,1 \right]^K\backslash\{\vec{0},\vec{1}\}$, 
the {\color{black}received} symbols $(\breve{x}_i,\breve{y}_i,\breve{z}_i)$ and $(\breve{x}_j,\breve{y}_j,\breve{z}_j)$ belonging to different power tiers, i.e., $\breve{z}_i \neq \breve{z}_j$, are easily distinguished because they are separated by a distance proportional to $\myP$ in the power domain.
In contrast, the symbols belonging to the same power tier are only separated with a distance proportional to $\sqrt{\myP}$ on the I-Q plane.
Thus, the intra-tier detection error probability dominates the overall SER in the high SNR regime.

Therefore, there are basically two cases for the SER analysis in the high SNR regime:
\begin{enumerate}
	\item For $\myomegagen$ having symbols that belong to the same tier as illustrated in Fig.~\ref{fig:general_constellation}(a), such as $M$-PAM, $M$-QAM and $M$-PSK (phase-shift keying), the intra-tier detection error probability is dominant. 
	Moreover, the detection error caused by the pair symbols with the minimum distance on the I-Q plane is dominant (see Fig.~\ref{fig:general_constellation}(a)).
	\item For $\myomegagen$ in which every symbol belongs to a different tier as illustrated in Fig.~\ref{fig:general_constellation}(b), such as a $M$-IM, the inter-tier detection error probability is dominant. Moreover, the detection error caused by the pair of symbols with the minimum distance in the P-axis (power domain) is dominant. 
\end{enumerate}

Consider a transmitted signal constellation $\myomegagen$ with $W$ pairs of dominant symbols as mentioned above and the minimum distance being $d_{\text{min}}$.
The approximated SER is calculated as~\cite{BOOKTse}
\begin{equation} \label{general_SER}
P_e \approx \frac{1}{M} \sum_{i=1}^{W} 2 Q\left( \frac{d_{\text{min}}}{2\sigma}\right),
\end{equation}
where $\sigma = \sqrt{\sigmaone/2}$ or $\sigmatwosqrt$ for cases 1) and 2), respectively.

{It is straightforward to see that}:
For $M$-PAM, we have $W=1$, i.e., the pair of symbols with the lowest power having the minimum distance given by $d_{\text{min}} = 2 \breve{x}_1$.
For $M$-QAM, we have $W=2 \sqrt{M}$, as illustrated in Fig.~\ref{fig:general_constellation}(c), and $d_{\text{min}}= 2 \breve{x}_1 $.
For $M$-IM, we have $W=M-1$, as illustrated in Fig.~\ref{fig:general_constellation}(b) and $d_{\text{min}} = \breve{x}_2 -\breve{x}_1$.
Then, based on \eqref{general_SER}, we can obtain the following results.

%

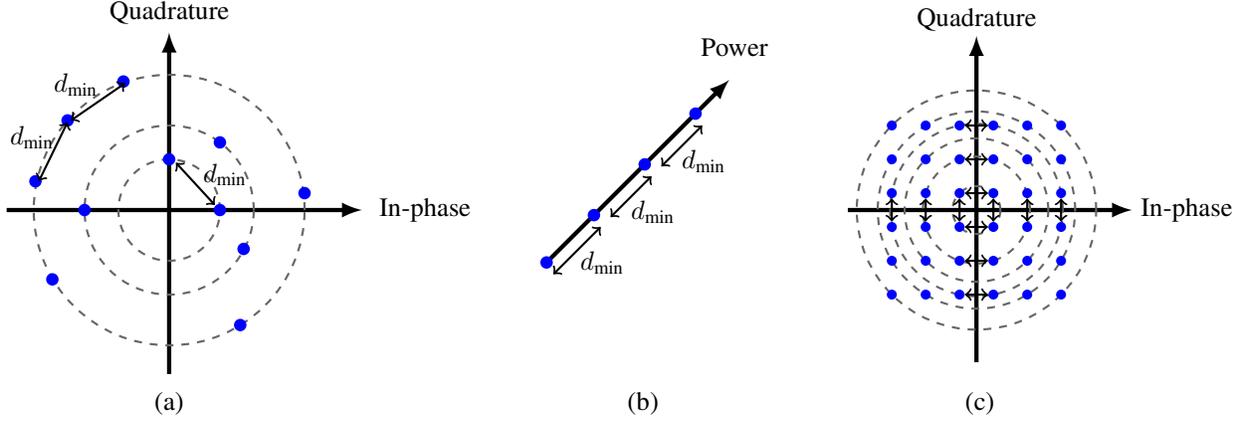
\begin{figure}[t]	
	\small
	\renewcommand{\captionlabeldelim}{ }
	\renewcommand{\captionfont}{\small} \renewcommand{\captionlabelfont}{\small}
	\centering
	\vspace{-0.7cm}
	\usetikzlibrary{arrows}
	\begin{tikzpicture}[scale = 0.45]
\draw [-latex, ultra thick](-4.8,0) -- (5.7,0);
\draw [-latex, ultra thick](0,-4.85) -- (0,5.25);

\draw [thick,dashed,black!60] (0,0) ellipse (4 and 4);
\draw [thick,dashed,black!60] (0,0) ellipse (2.5 and 2.5);
\draw [thick,dashed,black!60] (0,0) ellipse (1.5 and 1.5);

\node [circle,blue,fill,scale=0.5] at (1.5,2) {};
\node [circle,blue,fill,scale=0.5] at (2.2,-1.15) {};
\node [circle,blue,fill,scale=0.5] at (-2.5,0) {};
\node [circle,blue,fill,scale=0.5] at (1.5,0) {};
\node [circle,blue,fill,scale=0.5] at (0,1.5) {};
\node [circle,blue,fill,scale=0.5] at (-1.35,3.8) {}; 
\node [circle,blue,fill,scale=0.5] at (-3.45,-2.05) {};
\node [circle,blue,fill,scale=0.5] at (4,0.5) {};
\node [circle,blue,fill,scale=0.5] at (-3.95,0.85) {}; 
\node [circle,blue,fill,scale=0.5] at (-3,2.65) {}; 
\node [circle,blue,fill,scale=0.5] at (2.1,-3.4) {};
\node at (0,5.85) {Quadrature};
\node at (7.55,0) {In-phase};

\draw [-latex,<->,thick](-2.9,2.65) -- (-1.3,3.75);
\draw [-latex,<->,thick](-3.85,0.85) -- (-3,2.6);
\draw [-latex,<->,thick](0.2,1.4) -- (1.35,0.15);
\node at (-4.05,2.15) {$d_{\text{min}}$};
\node at (-2.75,3.7) {$d_{\text{min}}$};
\node at (1.65,1.05) {$d_{\text{min}}$};


\draw [-latex, ultra thick](11.15,-1.55) -- (16.55,3.85);
\node [circle,blue,fill,scale=0.5] at (11.15,-1.55) {};
\node [circle,blue,fill,scale=0.5] at (12.55,-0.15) {};
\node [circle,blue,fill,scale=0.5] at (14.05,1.35) {};
\node [circle,blue,fill,scale=0.5] at (15.55,2.85) {};

\node at (16.7,4.8) {Power};
\draw [-latex,<->,thick](11.35,-1.9) -- (12.75,-0.5) node (v1) {};
\draw [-latex,<->,thick](v1) -- (14.25,1) node (v2) {};
\draw [-latex,<->,thick](v2) -- (15.75,2.5) node (v3) {};

\node at (12.8,-1.55) {$d_{\text{min}}$};
\node at (14.3,-0.05) {$d_{\text{min}}$};
\node at (15.8,1.45) {$d_{\text{min}}$};


\node at (0,-5.7) {(a)};
\node at (14,-5.7) {(b)};

\draw [-latex, ultra thick](20.05,0) -- (28.45,0);
\draw [-latex, ultra thick](23.85,-4.5) -- (23.85,5.15);
\draw [thick,dashed,black!60] (23.85,0) ellipse (0.707 and 0.707);
\draw [thick,dashed,black!60] (23.85,0) ellipse (1.581 and 1.581);
\draw [thick,dashed,black!60] (23.85,0) ellipse (2.549 and 2.549);

\draw [thick,dashed,black!60] (23.85,0) ellipse ( 2.1213 and 2.1213);
\draw [thick,dashed,black!60] (23.85,0) ellipse (  2.9155 and 2.9155);
\draw [thick,dashed,black!60] (23.85,0) ellipse ( 3.5355 and 3.5355);

\node [circle,blue,fill,scale=0.4] (v11) at (21.35,0.5) {};
\node [circle,blue,fill,scale=0.4] (v9) at (22.35,0.5) {};
\node [circle,blue,fill,scale=0.4] (v1) at (23.35,0.5) {};
\node [circle,blue,fill,scale=0.4] (v7) at (26.35,0.5) {};
\node [circle,blue,fill,scale=0.4] (v5) at (25.35,0.5) {};
\node [circle,blue,fill,scale=0.4] (v2) at (24.35,0.5) {};

\node [circle,blue,fill,scale=0.4] at (21.35,1.5) {};
\node [circle,blue,fill,scale=0.4] at (22.35,1.5) {};
\node [circle,blue,fill,scale=0.4] (v13) at (23.35,1.5) {};
\node [circle,blue,fill,scale=0.4] at (26.35,1.5) {};
\node [circle,blue,fill,scale=0.4] at (25.35,1.5) {};
\node [circle,blue,fill,scale=0.4] (v21) at (24.35,1.5) {};

\node [circle,blue,fill,scale=0.4] at (21.35,2.5) {};
\node [circle,blue,fill,scale=0.4] at (22.35,2.5) {};
\node [circle,blue,fill,scale=0.4] (v14) at (23.35,2.5) {};
\node [circle,blue,fill,scale=0.4] at (26.35,2.5) {};
\node [circle,blue,fill,scale=0.4] at (25.35,2.5) {};
\node [circle,blue,fill,scale=0.4] (v22) at (24.35,2.5) {};

\node [circle,blue,fill,scale=0.4] at (21.35,-2.5) {};
\node [circle,blue,fill,scale=0.4] at (22.35,-2.5) {};
\node [circle,blue,fill,scale=0.4] (v17) at (23.35,-2.5) {};
\node [circle,blue,fill,scale=0.4] at (26.35,-2.5) {};
\node [circle,blue,fill,scale=0.4] at (25.35,-2.5) {};
\node [circle,blue,fill,scale=0.4] (v18) at (24.35,-2.5) {};

\node [circle,blue,fill,scale=0.4] at (21.35,-1.5) {};
\node [circle,blue,fill,scale=0.4] at (22.35,-1.5) {};
\node [circle,blue,fill,scale=0.4] (v15) at (23.35,-1.5) {};
\node [circle,blue,fill,scale=0.4] at (26.35,-1.5) {};
\node [circle,blue,fill,scale=0.4] at (25.35,-1.5) {};
\node [circle,blue,fill,scale=0.4] (v16) at (24.35,-1.5) {};

\node [circle,blue,fill,scale=0.4] (v12) at (21.35,-0.5) {};
\node [circle,blue,fill,scale=0.4] (v10) at (22.35,-0.5) {};
\node [circle,blue,fill,scale=0.4] (v3) at (23.35,-0.5) {};
\node [circle,blue,fill,scale=0.4] (v8) at (26.35,-0.5) {};
\node [circle,blue,fill,scale=0.4] (v6) at (25.35,-0.5) {};
\node [circle,blue,fill,scale=0.4] (v4) at (24.35,-0.5) {};

\node at (23.85,5.65) {Quadrature};
\node at (30.05,0) {In-phase};

\draw [<->,thick](v1) -- (v2);
\draw [<->,thick](v3) -- (v1);
\draw [<->,thick](v3) -- (v4);
\draw [<->,thick](v2) -- (v4);
\draw [<->,thick](v5) -- (v6);
\draw [<->,thick](v7) -- (v8);
\draw [<->,thick](v9) -- (v10);
\draw [<->,thick](v11) -- (v12);
\draw [<->,thick](v13) -- (v21);
\draw [<->,thick](v14) -- (v22);
\draw [<->,thick](v15) -- (v16);
\draw [<->,thick](v17) -- (v18);

\node at (23.95,-5.7) {(c)};
	\end{tikzpicture}
	\vspace{-0.5cm}		
	\caption{(a) and (b) Two transmitted signal constellation maps with $3$ pairs of symbols that are dominant on detection error probability, plotted on the I-Q plane and the P-axis, respectively.
	(c) The transmitted signal constellation for $36$-QAM, where $12$ pairs of symbol that are dominant on detection error probability are illustrated.}
	\label{fig:general_constellation}	
	\vspace{-0.5cm}
\end{figure}

\subsection{$M$-PAM}
\begin{proposition}  \label{lemma_opti_rho}
	For the $M$-PAM scheme and $\vec{\rho} \in \left[0,1 \right]^K\backslash\{\vec{0},\vec{1}\}$, the SER in the high SNR regime is given by
	\begin{equation}
	{P}_e 
	\approx
	\frac{2}{M} Q\left(
	\frac{\sqrt{2}\breve{x}_1}{\sigmaonesqrt}
	\right).
	\end{equation}
	\end{proposition}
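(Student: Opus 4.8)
The plan is to specialize the general high-SNR SER formula \eqref{general_SER} to the $M$-PAM constellation. First I would recall from Sec.~IV.B that for $M$-PAM the received noiseless constellation $\myomegagennew$ lies in the I-P plane (since $y_i=0$ for all $i$), with I-coordinates $\breve{x}_i = k_1\sqrt{\Theta_1\myP}\,x_i$ and P-coordinates $\breve{z}_i = k_2\sqrt{\Theta_2}\myP\, x_i^2$. The crucial structural observation is that $M$-PAM places all its symbols at distinct amplitude levels $|x_i|$, but with \emph{pairs} of symbols sharing the same magnitude (namely $x_i$ and $-x_i$), hence sharing the same power tier $\breve{z}_i$. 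So unlike $M$-IM, not every symbol occupies its own tier: the two lowest-magnitude symbols, at $x = \pm 1$ (i.e.\ $\breve{x}_1 = k_1\sqrt{\Theta_1\myP}$ and $\breve{x}_{M/2+1} = -\breve{x}_1$), form the single pair that is both at minimum mutual distance and within the same power tier.

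Next I would argue that this pair is the dominant error event. Symbols in different power tiers are separated in the P-direction by a gap of order $\myP$ (since $\breve{z}$ scales linearly in $\myP$), while symbols in the same tier are separated in the I-direction by a gap of order $\sqrt{\myP}$ only; this is exactly the dichotomy already established in the paragraph preceding \eqref{general_SER}. Among same-tier pairs, the closest are $\pm x_i$ with the smallest $|x_i|$, which is $|x_i|=1$, giving Euclidean distance $d_{\min} = 2\breve{x}_1$ on the I-axis. Hence $M$-PAM falls into case~1) with $W=1$ dominant pair and noise standard deviation $\sigma = \sqrt{\sigmaone/2}$ along the relevant (in-phase) coordinate.

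Then I would just substitute $W=1$, $d_{\min} = 2\breve{x}_1$, and $\sigma = \sigmaonesqrt/\sqrt{2}$ into \eqref{general_SER}:
\begin{equation}
P_e \approx \frac{1}{M}\cdot 2\, Q\!\left(\frac{2\breve{x}_1}{2\sqrt{\sigmaone/2}}\right) = \frac{2}{M}\, Q\!\left(\frac{\breve{x}_1}{\sqrt{\sigmaone/2}}\right) = \frac{2}{M}\, Q\!\left(\frac{\sqrt{2}\,\breve{x}_1}{\sigmaonesqrt}\right),
\end{equation}
which is exactly the claimed expression. I would also remark (or verify in a footnote) that the factor $2$ counts both the transmitted symbol $+x_1$ possibly being decoded as $-x_1$ and vice versa, consistent with the $\sum_{i=1}^W 2Q(\cdot)$ form.

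The main obstacle is not the algebra but the justification that the one same-tier minimum-distance pair truly dominates: one must confirm that (i) inter-tier errors are asymptotically negligible because their exponents scale with $\myP^2/\sigmatwo$ (or at least with a distance growing like $\myP$) versus $\myP/\sigmaone$ for the dominant term, and (ii) within the lowest tier the nearest-neighbor distance is $2\breve{x}_1$ and not something smaller coming from an adjacent tier's projection — this is where the decision-region geometry of \eqref{QAM_region}, i.e.\ that the separating planes $\mathcal{A}_{i\text{-}j}$ are tilted but still leave an $\Theta(\sqrt{\myP})$-wide in-phase corridor around each low-tier symbol, has to be invoked. Once that dominance is granted, the result is immediate from \eqref{general_SER}. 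I expect the author's proof to lean on the general argument already laid out before \eqref{general_SER} and simply plug in the $M$-PAM parameters.
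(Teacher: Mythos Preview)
Your proposal is correct and follows essentially the same approach as the paper: the paper states immediately before Proposition~\ref{lemma_opti_rho} that for $M$-PAM one has $W=1$ and $d_{\text{min}}=2\breve{x}_1$, and then simply asserts the result follows from \eqref{general_SER}. Your argument supplies more detail (explicitly identifying the $\pm\breve{x}_1$ pair as the unique same-tier dominant pair and checking the $\sigma=\sqrt{\sigmaone/2}$ substitution), but the logical route is identical.
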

	Based on Proposition~\ref{lemma_opti_rho}, 
	we can see that when $\vec{\rho} \neq \vec{1}$ and $\vec{\rho} \rightarrow \vec{1}$, $\breve{x}_1 \approx \sqrt{3 H_2 \myP /(M^2-1)}$, and 
	$P_e \approx \frac{2}{M}  Q\left(
	\sqrt{\frac{6 H_2 \myP}{\sigmaone (M^2-1)}}
	\right)
	$,
	which is smaller than the SER of the $\vec{\rho} = \vec{1}$ case, i.e., $\frac{2(M-1)}{M} \times\\ Q\left(
	\sqrt{\frac{6 H_2 \myP}{\sigmaone (M^2-1)}}\right)$, and is also smaller than the SER of the $\vec{\rho} = \vec{0}$ case in which the SER can be as large as $0.5$.
	Thus, we have the following proposition:
	\begin{proposition}\label{asym_splitting_gain}
	For $M$-PAM, the asymptotic joint processing gain in the high SNR regime~is 
	\begin{equation}
	G_{\mathrm{PAM}} = \frac{\min\left\lbrace 0.5, \frac{2 (M-1)}{M}Q\left(\sqrt{\frac{6 H_2 \myP}{\sigmaone (M^2-1)}}\right) \right\rbrace}{\frac{2}{M}Q\left(\sqrt{\frac{6 H_2 \myP}{\sigmaone (M^2-1)}}\right)}
	= M-1.
	\end{equation}
	\end{proposition}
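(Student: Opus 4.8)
The plan is to evaluate the numerator and denominator of $G_{\mathrm{PAM}}$ in Definition~\ref{def:splitting_gain} separately, using the high-SNR SER expressions already in hand, and then simplify the resulting ratio. Throughout, ``$\approx$'' is the high-SNR asymptotic equality, so the final claim is to be read as a limit $\myP\to\infty$.

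First I would handle the denominator $\inf\{P_e:\vec{\rho}\in[0,1]^K\}$. For $\vec{\rho}\in[0,1]^K\backslash\{\vec{0},\vec{1}\}$, Proposition~\ref{lemma_opti_rho} gives $P_e\approx\frac{2}{M}Q(\sqrt{2}\,\breve{x}_1/\sigmaonesqrt)$; substituting $\breve{x}_1=k_1\sqrt{\Theta_1\myP}\,x_1$ with $x_1=1$ and $k_1=\sqrt{3/(M^2-1)}$ from \eqref{geo_parameters} turns this into $P_e\approx\frac{2}{M}Q\!\left(\sqrt{6\Theta_1\myP/(\sigmaone(M^2-1))}\right)$. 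Since $Q(\cdot)$ is strictly decreasing and $\Theta_1=\sum_{k}\rho_k|\tilde{h}_k|^2$ attains its largest value $H_2$ on $[0,1]^K$ only at $\vec{\rho}=\vec{1}$, the infimum of this expression over $\vec{\rho}\notin\{\vec{0},\vec{1}\}$ equals $\frac{2}{M}Q\!\left(\sqrt{6H_2\myP/(\sigmaone(M^2-1))}\right)$ and is approached, but not attained, as $\vec{\rho}\to\vec{1}$. I would then check the two excluded points: at $\vec{\rho}=\vec{1}$ the splitting receiver degrades to the conventional coherent $M$-PAM receiver, whose SER is the classical $\frac{2(M-1)}{M}Q\!\left(\sqrt{6H_2\myP/(\sigmaone(M^2-1))}\right)$ (equivalently, \eqref{general_SER} with $W=M-1$ adjacent pairs, $d_{\text{min}}=2\breve{x}_1|_{\vec{\rho}=\vec{1}}$ and $\sigma=\sqrt{\sigmaone/2}$); at $\vec{\rho}=\vec{0}$ the PD receiver cannot separate $\breve{x}_i$ from $-\breve{x}_i$ since they carry the same power, so $P_e|_{\vec{\rho}=\vec{0}}\to 1/2$. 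Both exceed the limiting value, so $\inf\{P_e:\vec{\rho}\in[0,1]^K\}=\frac{2}{M}Q\!\left(\sqrt{6H_2\myP/(\sigmaone(M^2-1))}\right)$.

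For the numerator, $\min_{\vec{\rho}\in\{\vec{0},\vec{1}\}}P_e=\min\{0.5,\ \frac{2(M-1)}{M}Q(\sqrt{6H_2\myP/(\sigmaone(M^2-1))})\}$ from the two evaluations above. Forming the ratio gives $G_{\mathrm{PAM}}=\min\{0.5,\ \frac{2(M-1)}{M}Q(\cdot)\}\big/\big(\frac{2}{M}Q(\cdot)\big)$. In the high-SNR regime $\myP\to\infty$, so $Q(\sqrt{6H_2\myP/(\sigmaone(M^2-1))})\to 0$ and hence $\frac{2(M-1)}{M}Q(\cdot)<0.5$ for all sufficiently large $\myP$; the minimum is then the coherent term, the $Q$-factors cancel, and $G_{\mathrm{PAM}}=\frac{2(M-1)/M}{2/M}=M-1$.

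The step I expect to be the main obstacle is justifying the denominator: the high-SNR SER expression is discontinuous as $\vec{\rho}\to\vec{1}$, because the number $W$ of dominant symbol pairs jumps from $1$ (the single lowest-power pair $\{\breve{x}_1,-\breve{x}_1\}$) to $M-1$ (all adjacent pairs, once $\Theta_2\to 0$ flattens the paraboloid onto the I-axis and the power tiers no longer separate them). Consequently the infimum is not attained, and one must argue carefully that the limit along $\vec{\rho}\to\vec{1}$ is genuinely the infimum and that no interior $\vec{\rho}$ does better — which is exactly where monotonicity of $Q$ in $\Theta_1$ together with $\Theta_1<H_2$ for $\vec{\rho}\neq\vec{1}$ is used. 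A secondary point needing care is that each $P_e\approx(\cdot)$ is a high-SNR asymptotic, so the equalities and the ratio must be interpreted in the limit $\myP\to\infty$ rather than for fixed finite $\myP$.
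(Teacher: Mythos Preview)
Your proposal is correct and follows essentially the same approach as the paper: use Proposition~\ref{lemma_opti_rho} for the splitting case, let $\vec{\rho}\to\vec{1}$ to obtain the infimum $\frac{2}{M}Q(\cdot)$, compare against the coherent SER $\frac{2(M-1)}{M}Q(\cdot)$ at $\vec{\rho}=\vec{1}$ and the non-coherent SER $0.5$ at $\vec{\rho}=\vec{0}$, and cancel the $Q$-factors. The paper carries out exactly this computation in the paragraph immediately preceding the proposition, though much more tersely; your explicit discussion of the jump in $W$ at $\vec{\rho}=\vec{1}$ and of the infimum being approached but not attained is a useful clarification that the paper leaves implicit.
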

	Note that although the joint processing gain depends on $\myP$, $\sigmaone$ and $\sigmatwo$, the asymptotic joint processing gain is independent of the specific noise variance at the CD and PD circuits in the high SNR regime.

\subsection{$M$-QAM}
\begin{proposition}\label{QAM_high_SNR}
	For $M$-QAM and $\vec{\rho} \in \left[0,1 \right]^K\backslash\{\vec{0},\vec{1}\}$, the SER in the high SNR regime is given by
	\begin{equation} \label{QAM_approx}
	P_e \approx \frac{4}{\sqrt{M}} Q\left( \frac{\sqrt{2}\breve{x}_1}{\sigmaonesqrt} \right).
	\end{equation}
\end{proposition}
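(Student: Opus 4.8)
The plan is to specialize the high-SNR nearest-neighbor SER formula \eqref{general_SER} to the $M$-QAM constellation; the only real work is to pin down the two inputs to that formula, namely the number $W$ of dominant symbol pairs and their common minimum distance $d_{\min}$, and then to check that \eqref{general_SER} is being applied in the regime where it is asymptotically exact.

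First I would confirm that $M$-QAM falls into case~1), so that the dominant error events are \emph{intra-tier} and $\sigma=\sqrt{\sigmaone/2}$ in \eqref{general_SER}. By the mapping of Sec.~IV.B a symbol $(x_i,y_i)$ has power coordinate $\breve{z}_i=k_2\sqrt{\Theta_2}\myP(x_i^2+y_i^2)$; since $x_i,y_i$ are odd integers, $x_i^2+y_i^2\equiv 2\pmod{8}$, so two distinct power tiers are separated by at least $8k_2\sqrt{\Theta_2}\myP$ in the $z$-coordinate, which grows like $\myP$, whereas symbols within a tier are separated only on the scaled I-Q plane at distances of order $\sqrt{\myP}$. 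Because $Q(\cdot)$ decays super-polynomially, for $\vec{\rho}\in\left[0,1\right]^K\backslash\{\vec{0},\vec{1}\}$ the inter-tier pairwise error probabilities are negligible compared with the intra-tier ones as $\snr\to\infty$, and among intra-tier pairs only those at the minimum I-Q distance contribute to leading order.

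Next I would count those minimum-distance intra-tier pairs. Two constellation points with odd integer coordinates are at squared distance $4$ only if their coordinate differences are $(\pm 2,0)$ or $(0,\pm 2)$; imposing in addition the equal-power condition $x_i^2+y_i^2=x_j^2+y_j^2$ forces the pair to be either $\{(1,t),(-1,t)\}$ (straddling the Q-axis) or $\{(t,1),(t,-1)\}$ (straddling the I-axis). There are $\sqrt{M}$ pairs of each kind, one per row and one per column respectively, hence $W=2\sqrt{M}$ in total, matching Fig.~\ref{fig:general_constellation}(c). For each such pair $\breve{z}_i=\breve{z}_j$, so the $z$-terms in \eqref{first_distance} cancel and the decision boundary $\mathcal{A}_{i-j}$ is the perpendicular bisector plane of the segment joining the two points in the scaled I-Q plane, at half-distance $\breve{x}_1=k_1\sqrt{\Theta_1\myP}$; thus $d_{\min}=2\breve{x}_1$. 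Substituting $W=2\sqrt{M}$, $d_{\min}=2\breve{x}_1$ and $\sigma=\sqrt{\sigmaone/2}$ into \eqref{general_SER} gives $P_e\approx\frac{1}{M}\cdot 2\sqrt{M}\cdot 2\,Q\!\left(\frac{2\breve{x}_1}{2\sqrt{\sigmaone/2}}\right)=\frac{4}{\sqrt{M}}Q\!\left(\frac{\sqrt{2}\,\breve{x}_1}{\sigmaonesqrt}\right)$, which is \eqref{QAM_approx}.

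The hard part is making the ``$\approx$'' precise: one must show that the nearest-neighbor union bound is asymptotically \emph{tight}, i.e., that the pairwise over-counting among the $2\sqrt{M}$ plane-bounded regions $\mathcal{V}_i$, the contribution of second-nearest intra-tier neighbours, and the aggregate of all inter-tier pairs are each $o\!\left(Q(\sqrt{2}\breve{x}_1/\sigmaonesqrt)\right)$ as $\myP\to\infty$, and that the edge and corner symbols of the QAM grid, which have fewer neighbours, do not spawn any further leading-order terms. This is the same estimate that underlies Proposition~\ref{lemma_opti_rho} for $M$-PAM, so it can be carried out along the same lines, the only genuinely new ingredient being the combinatorial count $W=2\sqrt{M}$ above.
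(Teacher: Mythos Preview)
Your proposal is correct and follows exactly the paper's approach: the paper simply states (just below \eqref{general_SER}) that for $M$-QAM one has $W=2\sqrt{M}$ and $d_{\min}=2\breve{x}_1$, then substitutes into the nearest-neighbour formula \eqref{general_SER} with $\sigma=\sqrt{\sigmaone/2}$. You supply the combinatorial justification for $W=2\sqrt{M}$ that the paper only conveys via Fig.~\ref{fig:general_constellation}(c), so your write-up is in fact more detailed than the original.
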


Letting $\vec{\rho} \rightarrow \vec{1}$, i.e., $\breve{x}_1 \rightarrow \sqrt{\frac{3 H_2 \myP}{2(M-1)}}$, the approximated SER in Proposition~\ref{QAM_high_SNR} is minimized as $P_e \approx \frac{4}{\sqrt{M}} Q \left(\sqrt{\frac{3 H_2 \myP}{(M-1) \sigmaone}} \right)$, which is smaller than the SER obtained by setting $\vec{\rho} = \vec{0}$ or $\vec{1}$.
Thus, we have the following result:
\begin{proposition}\label{Splitting_gain_QAM}
	For $M$-QAM, the asymptotic joint processing gain in the high SNR regime is 
	\begin{equation}
	\begin{aligned}
	G_{\mathrm{QAM}} 
	&= \lim\limits_{\snr \rightarrow \infty}\frac
	{4 \left(1 -\frac{1}{\sqrt{M}}\right) Q\left( \sqrt{\frac{3 H_2 \myP}{(M-1) \sigmaone}} \right)
		-4 \left(1 -\frac{1}{\sqrt{M}}\right)^2 Q\left(\sqrt{ \frac{3 H_2 \myP}{(M-1) \sigmaone} }\right)^2
	}
	{\frac{4}{\sqrt{M}}   Q\left(\sqrt{\frac{3 H_2 \myP}{(M-1) \sigmaone}}\right) }\\
	&= \lim\limits_{\snr \rightarrow \infty} \frac
	{4 \left(1 -\frac{1}{\sqrt{M}}\right) Q\left( \sqrt{\frac{3 H_2 \myP}{(M-1) \sigmaone}} \right)}
	{\frac{4}{\sqrt{M}}   Q\left(\sqrt{\frac{3 H_2 \myP}{(M-1) \sigmaone}}\right) }
	= \sqrt{M}-1.
	\end{aligned}
	\end{equation}
\end{proposition}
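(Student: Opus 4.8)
The plan is to evaluate the numerator $\min_{\vec\rho=\vec0,\vec1}P_e$ and the denominator $\inf\{P_e:\vec\rho\in[0,1]^K\}$ of $G_{\mathrm{QAM}}$ in Definition~\ref{def:splitting_gain} separately in the high-SNR limit, then form the ratio and let $\snr\to\infty$. Throughout I use Proposition~\ref{QAM_high_SNR} on the interior of the splitting-ratio cube and the well-known coherent/non-coherent SER expressions at the two extreme corners $\vec\rho=\vec1$ and $\vec\rho=\vec0$.

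For the numerator: at $\vec\rho=\vec1$ the splitting receiver degrades to the conventional coherent $M$-QAM receiver, so $P_e|_{\vec\rho=\vec1}$ is the standard $4\big(1-\tfrac1{\sqrt M}\big)Q\big(\sqrt{3H_2\myP/((M-1)\sigmaone)}\big)-4\big(1-\tfrac1{\sqrt M}\big)^2Q(\cdot)^2$; this is consistent with \eqref{general_SER} using $W=2\sqrt M$, $d_{\text{min}}=2\breve x_1$ and the limiting value $\breve x_1\to\sqrt{3H_2\myP/(2(M-1))}$, plus the standard second-order union-bound correction. At $\vec\rho=\vec0$ the receiver is purely power-detecting, so $Y_2$ depends on $\tilde X$ only through $|\tilde X|^2$; since $M$-QAM with $M\ge4$ has distinct constellation points sharing a common amplitude, those points stay indistinguishable for every $\myP$, hence $P_e|_{\vec\rho=\vec0}$ is bounded below by a positive constant independent of $\myP$. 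Consequently, for all sufficiently large $\snr$ the minimum is attained at $\vec\rho=\vec1$.

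For the denominator: on the interior, Proposition~\ref{QAM_high_SNR} gives $P_e\approx\frac{4}{\sqrt M}Q\big(\sqrt2\,\breve x_1/\sigmaonesqrt\big)$ with $\breve x_1=k_1\sqrt{\Theta_1\myP}$, $k_1=\sqrt{3/(2(M-1))}$ and $\Theta_1=\sum_k\rho_k|\tilde h_k|^2\le H_2$. As $Q$ is strictly decreasing, this is monotone decreasing in $\Theta_1$, so its infimum over the interior is the limiting value $\frac{4}{\sqrt M}Q\big(\sqrt{3H_2\myP/((M-1)\sigmaone)}\big)$, approached as $\vec\rho\to\vec1$ along interior points (e.g.\ $\rho_k\uparrow1$) but not attained. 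Since $\tfrac{4}{\sqrt M}\le4\big(1-\tfrac1{\sqrt M}\big)$ for $M\ge4$, this limiting value does not exceed $P_e|_{\vec\rho=\vec1}$, while $P_e|_{\vec\rho=\vec0}$ is far larger; hence $\inf\{P_e:\vec\rho\in[0,1]^K\}=\frac{4}{\sqrt M}Q\big(\sqrt{3H_2\myP/((M-1)\sigmaone)}\big)$ for large $\snr$.

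Finally, write $q\triangleq Q\big(\sqrt{3H_2\myP/((M-1)\sigmaone)}\big)$, with $q\to0$ as $\snr\to\infty$. Then
\begin{equation*}
G_{\mathrm{QAM}}=\lim_{\snr\to\infty}\frac{4\big(1-\tfrac1{\sqrt M}\big)q-4\big(1-\tfrac1{\sqrt M}\big)^2q^2}{\tfrac{4}{\sqrt M}q}=\lim_{q\to0}\Big[(\sqrt M-1)-\sqrt M\big(1-\tfrac1{\sqrt M}\big)^2q\Big]=\sqrt M-1,
\end{equation*}
because the $O(q)$ contribution from the union-bound correction vanishes. I expect the main obstacle to be the denominator step: making rigorous that the infimum over the whole cube equals the \emph{interior} limiting value --- i.e.\ that $\breve x_1$ attains its supremum $k_1\sqrt{H_2\myP}$ only in the limit $\vec\rho\to\vec1$, that no lower-dimensional face of the cube (some but not all $\rho_k\in\{0,1\}$) does better, and that the approximation of Proposition~\ref{QAM_high_SNR} is uniform enough in $\vec\rho$ to justify interchanging $\inf_{\vec\rho}$ with $\lim_{\snr\to\infty}$. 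A secondary point is the uniform positive lower bound on $P_e|_{\vec\rho=\vec0}$, ensuring the non-coherent corner is never the relevant minimum.
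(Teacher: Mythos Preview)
Your proposal is correct and follows essentially the same route as the paper: the paper also obtains the denominator by letting $\vec{\rho}\to\vec{1}$ in Proposition~\ref{QAM_high_SNR} to get $\frac{4}{\sqrt{M}}Q\!\left(\sqrt{3H_2\myP/((M-1)\sigmaone)}\right)$, takes the standard coherent $M$-QAM SER as the numerator (implicitly dismissing $\vec{\rho}=\vec{0}$), and forms the ratio exactly as in the displayed equation of the proposition. If anything, you are more explicit than the paper about why $\vec{\rho}=\vec{0}$ is irrelevant and about the uniformity caveats in passing from the interior approximation to the infimum; the paper simply asserts these steps without further justification.
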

Therefore, in the high SNR regime, for $M$-PAM and $M$-QAM, there always exists a non-trivial $\vec{\rho}$ that $\vec{\rho}\in \left[0,1 \right]^K \backslash \{\vec{0},\vec{1}\}$ and achieves a lower SER than the conventional receivers, i.e., $\vec{\rho} = \vec{0} \text{ or }\vec{1}$, no matter what values $\sigmaone$ and $\sigmatwosqrt$ take.

\subsection{$M$-IM}
\begin{proposition} \label{PPM_asyn_SER}
	For $M$-IM, the SER in the high SNR regime is given by
	\begin{equation} 
	P_e \approx  \frac{2 (M-1)}{M} Q\left(\frac{\sqrt{\Theta_2}\myP}{(M-1) \sigmatwosqrt}\right).
	\end{equation}
\end{proposition}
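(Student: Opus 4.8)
The plan is to specialise the general high-SNR SER formula \eqref{general_SER} to the $M$-IM constellation, for which (as noted in the discussion preceding Proposition~\ref{lemma_opti_rho}) every symbol sits on its own power tier, so all error events are inter-tier and the dominant ones are between power-adjacent symbols. First I would write down the noiseless received constellation $\myomegagennew$ for $M$-IM: since $y_i=0$ and $x_i^2=2(i-1)$, and since $k_1=1/\sqrt{M-1}$, $k_2=1/(M-1)$ from \eqref{geo_parameters}, the $i$th received symbol is $\breve{x}_i=\sqrt{2(i-1)\Theta_1\myP/(M-1)}$, $\breve{y}_i=0$, $\breve{z}_i=2(i-1)\sqrt{\Theta_2}\myP/(M-1)$. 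The structural point is that the P-coordinates $\breve{z}_1<\breve{z}_2<\cdots<\breve{z}_M$ are uniformly spaced with common spacing $\delta\triangleq\breve{z}_2-\breve{z}_1=2\sqrt{\Theta_2}\myP/(M-1)$, i.e. along the P-axis the constellation is an $M$-level PAM, whereas the I-coordinates only span an $O(\sqrt{\myP})$ interval.

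Second, I would verify that in the high-SNR regime the dominant error events are exactly the $M-1$ power-adjacent pairs, with effective distance $\delta$ in the $\sigmatwosqrt$-scaled P-axis metric. Using the noise-matched quadratic form $d_j(\cdot)$ of \eqref{first_distance}, the pairwise error probability between symbols $i$ and $j$ is $Q\!\left(\tfrac12 D_{ij}\right)$ with $D_{ij}^2=\tfrac{2(\breve{x}_i-\breve{x}_j)^2}{\sigmaone}+\tfrac{(\breve{z}_i-\breve{z}_j)^2}{\sigmatwo}$. The first term is $O(\myP)$ and the second is $O(\myP^2)$, so $D_{ij}=(|\breve{z}_i-\breve{z}_j|/\sigmatwosqrt)(1+o(1))$ as $\myP\to\infty$; hence $\min_{j\neq i}D_{ij}$ is attained at $j=i\pm1$ and equals $\delta/\sigmatwosqrt$ up to a vanishing relative correction, while all non-adjacent pairs contribute exponentially smaller $Q$-terms. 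This is precisely ``case~2'' of \eqref{general_SER} with $W=M-1$, $\sigma=\sigmatwosqrt$, and $d_{\mathrm{min}}$ equal to the common P-axis spacing $\delta$.

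Third, substituting into \eqref{general_SER} gives $P_e\approx\tfrac1M\cdot 2(M-1)\,Q\!\left(\tfrac{\delta}{2\sigmatwosqrt}\right)$, and $\tfrac{\delta}{2\sigmatwosqrt}=\tfrac{\sqrt{\Theta_2}\myP}{(M-1)\sigmatwosqrt}$ yields the claimed expression. The main obstacle is the rigorous step in which the tilted decision-boundary planes $\mathcal{A}_{i-j}$ are replaced by P-axis slabs $z\approx(\breve{z}_i+\breve{z}_{i\pm1})/2$: because the received constellation lies on a paraboloid and the $\breve{x}_i$ genuinely grow like $\sqrt{\myP}$, one must show that integrating the Gaussian tail over the true curved/tilted region differs from the slab approximation only by $o(1)$ relative error. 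The argument is the same high-SNR nearest-neighbour bounding used for $M$-PAM and $M$-QAM in Propositions~\ref{lemma_opti_rho} and~\ref{QAM_high_SNR}, but carried out along the P-axis: the competing boundary distances grow like $\myP$ in the whitened metric while the tilt/curvature corrections grow only like $\sqrt{\myP}$, so the corrections are swamped and the leading term is as stated.
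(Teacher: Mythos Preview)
Your proposal is correct and follows essentially the same route as the paper: the paper's argument is simply to invoke the general nearest-neighbour formula \eqref{general_SER} in ``case~2'' with $W=M-1$, $\sigma=\sigmatwosqrt$, and the minimum P-axis spacing $d_{\mathrm{min}}=\breve{z}_2-\breve{z}_1=\tfrac{2\sqrt{\Theta_2}\myP}{M-1}$, which after substitution gives the stated expression. Your write-up is in fact more detailed than the paper's, since you make explicit the $O(\myP^2)$ versus $O(\myP)$ comparison that justifies why the P-axis term dominates $D_{ij}^2$ and hence why the tilted decision planes reduce to P-axis slabs at high SNR.
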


From Proposition \ref{PPM_asyn_SER}, as $\vec{\rho} \rightarrow \vec{0}$, the minimum approximated SER is obtained as $P_e = \frac{2 (M-1)}{M} \times \\ Q\left(\frac{\sqrt{H_4} \myP}{(M-1) \sigmatwosqrt}\right)$, which equal to the SER when $\vec{\rho} = \vec{0}$. Thus, the splitting receiver cannot improve the SER performance compared with the conventional receivers, and we have the result:
\begin{proposition} \label{PPM}
	For $M$-IM, the asymptotic joint processing gain in the high SNR regime is equal to one.
\end{proposition}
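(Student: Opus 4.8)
The plan is to show that, in the high SNR regime, the best splitting strategy for $M$-IM is not to split at all but to route the entire received signal to the PD circuits, i.e.\ $\vec{\rho}=\vec{0}$, so that the infimum of the SER over all admissible splitting ratios coincides with $P_e\big|_{\vec{\rho}=\vec{0}}$ and hence with $\min_{\vec{\rho}=\vec{0},\vec{1}}P_e$. Definition~\ref{def:splitting_gain} then yields $G=1$ immediately.

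First I would invoke Proposition~\ref{PPM_asyn_SER}: for every interior choice $\vec{\rho}\in[0,1]^K\setminus\{\vec{0},\vec{1}\}$ the SER behaves asymptotically as $P_e\approx\frac{2(M-1)}{M}\,Q\big(\frac{\sqrt{\Theta_2}\,\myP}{(M-1)\,\sigmatwosqrt}\big)$, which depends on $\vec{\rho}$ only through $\Theta_2=\sum_{k=1}^{K}(1-\rho_k)^2|\tilde{h}_k|^4$. Since $0\le(1-\rho_k)^2\le 1$ for every $k$, we have $\Theta_2\le\sum_{k=1}^{K}|\tilde{h}_k|^4=H_4$, with equality precisely when $\rho_k=0$ for all $k$, i.e.\ $\vec{\rho}=\vec{0}$. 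Because $Q(\cdot)$ is strictly decreasing and its argument here is increasing in $\Theta_2$, the asymptotic SER is a decreasing function of $\Theta_2$; hence for any interior $\vec{\rho}$ the SER is, asymptotically, at least $\frac{2(M-1)}{M}Q\big(\frac{\sqrt{H_4}\,\myP}{(M-1)\,\sigmatwosqrt}\big)$, and this lowest value is approached as $\vec{\rho}\to\vec{0}$ and equals $P_e\big|_{\vec{\rho}=\vec{0}}$, the SER of the conventional non-coherent receiver.

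Second, I would combine these facts: the infimum of $P_e$ over $[0,1]^K$ is the smallest among the interior infimum (which, by the previous step, equals $P_e\big|_{\vec{\rho}=\vec{0}}$ asymptotically) and the two boundary values $P_e\big|_{\vec{\rho}=\vec{0}}$ and $P_e\big|_{\vec{\rho}=\vec{1}}$. Therefore $\inf\{P_e:\vec{\rho}\in[0,1]^K\}=\min\{P_e\big|_{\vec{\rho}=\vec{0}},P_e\big|_{\vec{\rho}=\vec{1}}\}=\min_{\vec{\rho}=\vec{0},\vec{1}}P_e$. Substituting this into the definition of the joint processing gain in Definition~\ref{def:splitting_gain} gives $G=1$, which is the claim.

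The main obstacle I anticipate is making the boundary argument rigorous: Proposition~\ref{PPM_asyn_SER} is an asymptotic ($\myP\to\infty$) statement for a \emph{fixed} interior $\vec{\rho}$, so I must be careful about the order of limits ($\myP\to\infty$ first, then $\vec{\rho}\to\vec{0}$), and I must note that near $\vec{\rho}=\vec{1}$ the dominant error event is no longer the inter-tier (power-domain) one assumed in that proposition but the intra-tier (I-axis) one; however, this can only make $P_e$ near $\vec{\rho}=\vec{1}$ larger, so it cannot reduce the infimum, and the conclusion $\inf=\min_{\vec{\rho}=\vec{0},\vec{1}}P_e$ stands. I would also state the intuition explicitly: for $M$-IM the information is already carried entirely by the received signal power, so diverting any nonzero fraction $\rho_k$ to the coherent branch strictly shrinks the useful power-domain signal $\Theta_2$ without opening up a genuinely new usable dimension, and hence the splitting receiver cannot do better than the pure PD receiver.
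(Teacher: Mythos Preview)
Your proposal is correct and follows essentially the same approach as the paper: the paper's justification (the sentence immediately preceding the proposition) simply notes that the asymptotic SER in Proposition~\ref{PPM_asyn_SER} is minimized as $\vec{\rho}\to\vec{0}$ and that this minimum coincides with $P_e\vert_{\vec{\rho}=\vec{0}}$, so the splitting receiver cannot beat the conventional non-coherent receiver. Your write-up is more careful than the paper's---you make the monotonicity in $\Theta_2$ explicit, handle the boundary cases separately, and flag the order-of-limits and $\vec{\rho}\to\vec{1}$ subtleties---but the underlying argument is the same.
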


\subsection{Numerical Results}
We present the numerical results using $M$-QAM for (i) the splitting receiver with single receiver antenna assuming $\vert \tilde{h}_1 \vert^2 = 1$, and (ii) the simplified receiver with multiple receiver antennas.
The SER results for $M$-QAM are plotted based on Monte Carlo simulation with $10^9$ points using the detection rule~\eqref{QAM_region}.
The results for $M$-PAM and $M$-IM are omitted due to space limitation.
\subsubsection{Splitting receiver with single receiver antenna}
Fig. \ref{fig:QAM_diff_M_P} plots the SER versus the splitting ratio $\rho$ for different $M$ and different $\myP$, where the approximation results are plotted using Proposition~\ref{QAM_high_SNR}.
It shows that the SER first decreases and then increases as $\rho$ increases.
We can see that the optimal $\rho$ that minimizes the SER, increases with $\myP$ and approaches $1$ but decreases with the increasing of the order of constellation $M$.
{\color{black}Also we can see that when $\snr$ is sufficiently large, e.g., $23$~dB (i.e., $\myP = 200$, and $\sigmaone = \sigmatwo = 1$), the approximation of the SER is very close with the accurate SER for the value of $\rho$ in the range $(0,\rho^\star)$, where $\rho^\star$ is the optimal splitting ratio. 
	Note that $\rho^\star$ approaches $1$ as $\snr$ increases. This means the mismatch around $\rho =1$ is minimized as $\snr$ increases.
	\emph{Therefore, the approximation in Proposition 6 is accurate for the values of $\rho \in (0,1)$ when $\snr$ is sufficiently large.}
	}


Fig. \ref{fig:QAM_splitting_gain} shows the joint processing gain versus $\myP$ using Definition~\ref{def:splitting_gain}.
We can see that the joint processing gain increase with $\myP$ and approaches $3$ and $5$ for $16$-QAM and $36$-QAM, respectively, when $\myP = 100$, $\sigmaone = 1$ and $\sigmatwo$ is sufficiently small, e.g., $10^{-3}$. These results approach the asymptotic joint processing gain in Proposition~\ref{Splitting_gain_QAM}.
Also we see that only half the joint processing gain is achieved when $\myP = 100$ and $\sigmatwo$ is large, e.g., $\sigmatwo=1$.
However, the increasing trend of the joint processing gain in Fig.~15 suggest that the asymptotic joint processing gain can be eventually achieved, when $\myP$ is much larger than $100$.
%
{\color{black}\emph{Therefore, the asymptotic joint processing gain in Proposition~7 may not be approached in a normal range of the received signal power and the noise variance, but half of the joint processing gain is achievable.}}

\begin{figure*}[t]
	\small
	\renewcommand{\captionlabeldelim}{ }	
	\renewcommand{\captionfont}{\small} \renewcommand{\captionlabelfont}{\small}
	\minipage{0.47\textwidth}
	\vspace*{-0.7cm}
	\includegraphics[width=\linewidth]{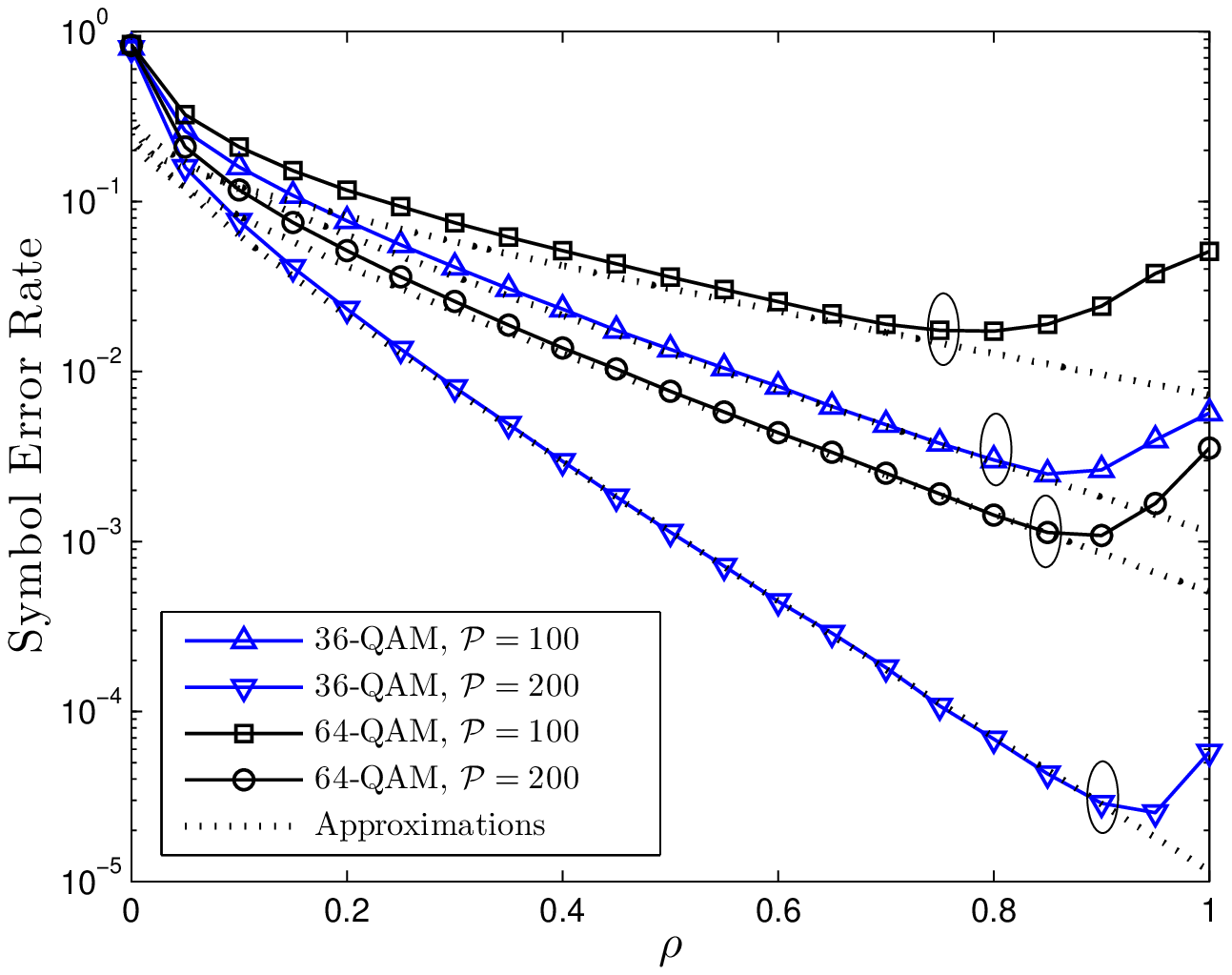}	
	\vspace*{-1.3cm}
	\caption{\small SER versus $\rho$, $\sigmaone = 1$ and $\sigmatwo = 1$.}	
	\label{fig:QAM_diff_M_P}
	\endminipage
	\hspace{0.32cm}
	\minipage{0.47\textwidth}
	\vspace*{-0.7cm}
	\includegraphics[width=\linewidth]{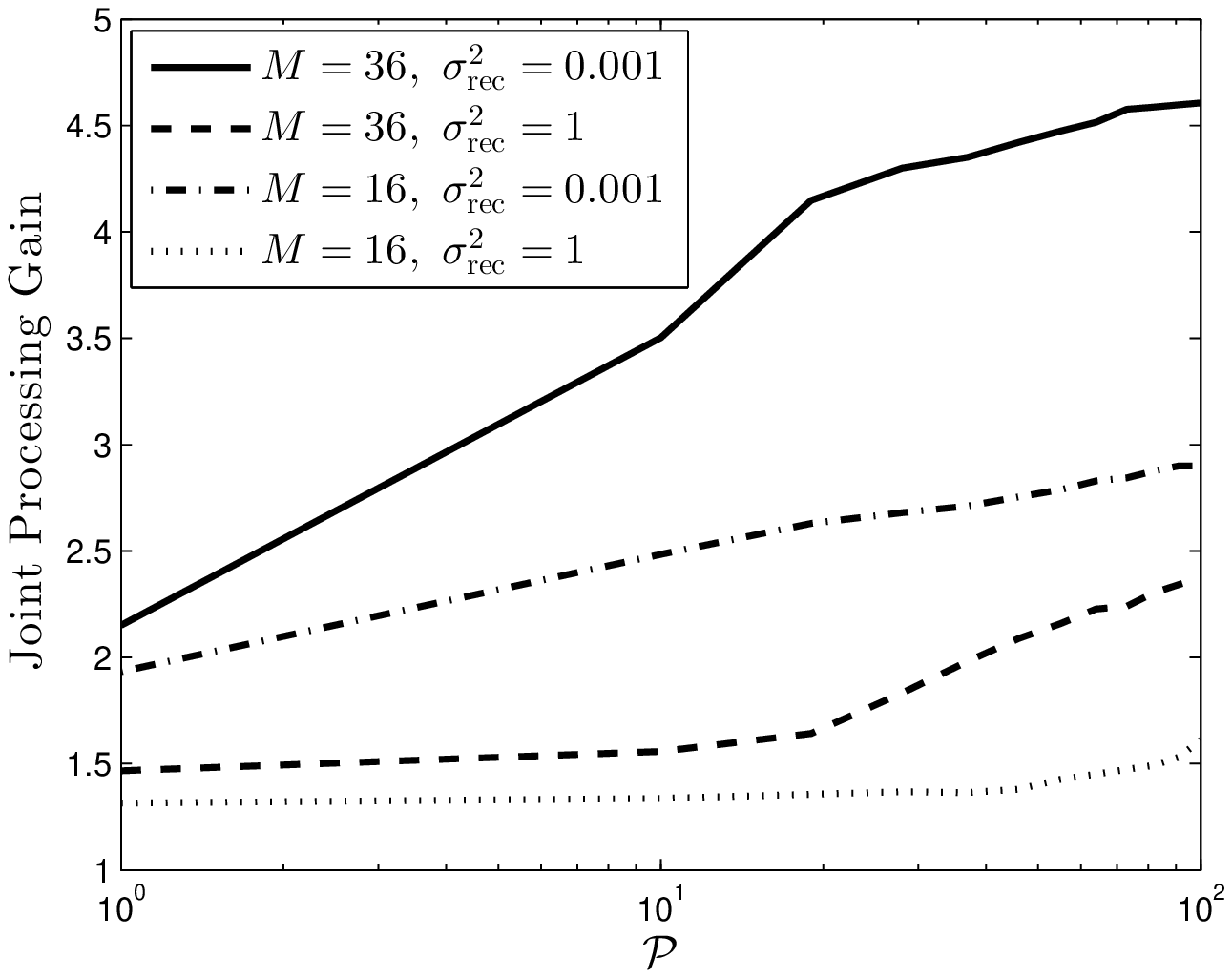}	
	\vspace*{-1.3cm}
	\caption{\small $G_{\text{QAM}}$ versus $\myP$, $\sigmaone = 1$.}	
	\label{fig:QAM_splitting_gain}
	\endminipage
	\vspace*{-0.5cm}
\end{figure*}


\subsubsection{Simplified receiver with multiple receiver antennas}
For the simplified receiver, we assume that the channel power gain at each antenna is independent and follows an exponential distribution with the mean of $1$.
Fig.~\ref{fig:modu_large_K} plots the optimal number of antennas allocated for coherent processing, i.e., $K^{\star}_1$, versus the total number of antennas for the $36$-QAM scheme obtained by using $10^3$ random channel realizations.
It shows that $K^{\star}_1$ increases with $K$, and approaches to $K-1$ in the high SNR regime,
e.g, $K^{\star}_1 \approx K-5,\ K-2$ and $K-1$ when $\myP = 2,\ 10$ and $200$, respectively.
This is because that the optimal ratio $\vec{\rho} \rightarrow \vec{1}$ but never reaches $\vec{1}$ based on Proposition~\ref{lemma_opti_rho} in the high SNR regime. 
In other words, for the simplified receiver (where $\rho_k \in\{0,1\}$), most of the antennas should be connected to the CD circuits and at least one antenna should be connected to a PD circuit to achieve the highest joint processing gain.

{\color{black}Note that in practice, the degradation of ADC noise is usually modeled by the signal-to-quantization-noise ratio (SQNR), approximately given by $6K$~dB, where $K$ is the number of quantization bits~\cite{Xunzhou13}. Here, by assuming $\myP = 2$ and the noise variance of the ADC equals to $0.1$ (i.e., less than $\sigmaone$ and $\sigmatwo$), the SQNR equals to $13$~dB, which implies
	$K \approx 2$~bits. Similarly, by assuming $\myP = 200$, the SQNR equals to $33$~dB, which implies $K \approx 5$~bits. 
	Therefore, the parameter settings are practical.}

\section{Conclusions}
In this paper, we have proposed a splitting receiver, which fundamentally changes the way in which the signal is processed. 
With the same received signal power, the analytical results show that the splitting receiver provides excellent performance gain in the sense of both the mutual information (Gaussian input) and the SER (practical modulation), compared with the conventional coherent and non-coherent receivers.
Future research may focus on the topics such as the MIMO system with a multi-antenna splitting receiver, and the design of constellations and coding schemes for the communication systems with splitting receiver. 
Moreover, some practical issues can also be taken into account, such as the effects of the antenna noise, the power splitter losses and the different receive sensitivity level at the CD and PD circuits of the splitting receiver.

\begin{figure}[t]
	\small
	\renewcommand{\captionlabeldelim}{ }
	\renewcommand{\captionfont}{\small} \renewcommand{\captionlabelfont}{\small}	
	\centering 
	\vspace*{-0.7cm}	
	\includegraphics[scale=0.6]{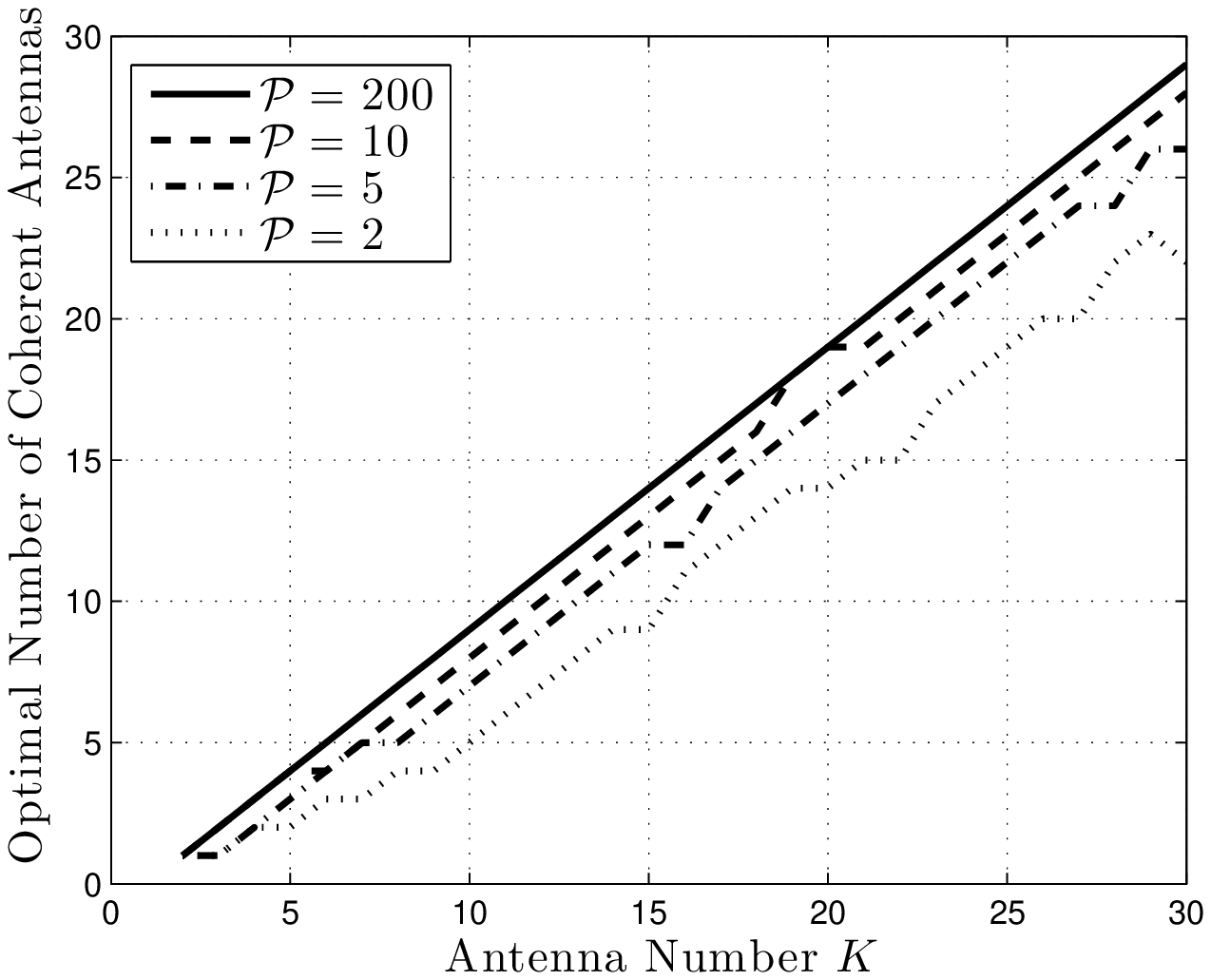}	
	\vspace*{-0.5cm}
	\caption{\small Optimal number of antennas allocated for coherent processing versus $K$ for $16$-QAM, $\sigmaone = \sigmatwo = 1$.}	
	\label{fig:modu_large_K}
	\vspace*{-0.5cm}
\end{figure}

\begin{appendices}
	\renewcommand{\theequation}{\thesection.\arabic{equation}}
	\numberwithin{equation}{section}

\section{Proof of Lemma~1}
We assume that $\vec{\rho} \in \left[0,1\right]{\color{black}^K} \backslash \{\vec{0},\vec{1}\}$. Thus, based on \eqref{my_theta}, $\Theta_1 >0$ and $\Theta_2 >0$.
\subsection{Proof of \eqref{theory1_1}}
Based on the property of mutual information invariance under scaling of random variables~\cite{Invariance},
a scaled received signal expression based on \eqref{receive_signal} is given by
\begin{equation} \label{rewritten_signal}
\begin{aligned}
\tilde{Y}_1 & = \sqrt{\Theta_1} \tilde{X} + \frac{\tilde{Z}}{\sqrt{\myP}},\\
Y_2 &= \sqrt{\Theta_2} k \sqrt{\myP} \vert \tilde{X} \vert^2 + k \frac{N}{\sqrt{\myP}},
\end{aligned}
\end{equation}
where $k \triangleq \frac{\sigmaonesqrt}{\sqrt{2}\sigmatwosqrt}$.
Thus, it is easy to verify that the real and imaginary parts of $\frac{\tilde{Z}}{\sqrt{\myP}}$ and $k \frac{{\color{black}N}}{\sqrt{\myP}}$ are independent with each other and follow the same distribution $\mathcal{N}(0,\frac{\sigmaone}{2 \myP})$.

We define two random variables as 
\begin{equation} \label{new_X1_X2}
\tilde{X}_1 \triangleq \sqrt{\Theta_1} \tilde{X} \text{, and } X_2 \triangleq \sqrt{\Theta_2}k \sqrt{\myP} \vert \tilde{X} \vert^2.
\end{equation}
Because of the Markov chain $\sqrt{\myP}{\color{black}\tilde{X}} \rightarrow (\tilde{X}_1,X_2) \rightarrow (\tilde{Y}_1, Y_2)$ and 
the smooth and uniquely invertible map from $\sqrt{\myP}\tilde{X}$ to $(\tilde{X}_1,X_2)$,
we have 
\begin{equation}
\mathcal{I}(\sqrt{\myP} X; \tilde{Y}_1, Y_2) = \mathcal{I}(\tilde{X}_1,X_2; \tilde{Y}_1, Y_2).
\end{equation}

Before the analysis of $ \mathcal{I}(\tilde{X}_1,X_2; \tilde{Y}_1, Y_2)$, we first define a new coordinate system named as paraboloid-normal (PN) coordinate system which is based on a paraboloid $\mathcal{U}$. The paraboloid $\mathcal{U}$ is defined by the equation
\begin{equation}
c_P =k \sqrt{\myP} \frac{\sqrt{\Theta_2}}{\Theta_1} (c^2_I+c^2_Q),
\end{equation}
where $c_I$, $c_Q$ and $c_P$ are the three axes of Cartesian coordinate system of the I-Q-P space.  
By changing coordinate system, the point $(c_1,c_2,c_3)$ is represented as $(\tilde{a},l)$ in the PN coordinate system, where $\tilde{a}$ is the nearest point on the paraboloid $\mathcal{U}$ to the point $(c_1,c_2,c_3)$, and $\vert l\vert$ is the distance.
In other words, the point $(c_1,c_2,c_3)$ is on the normal line at the point $\tilde{a}$ on the paraboloid. Specifically, the sign of $l$ is positive when the point $(c_1,c_2,c_3)$ is above the parabolic, otherwise, it is negative.

Based on the property of mutual information invariance under a change of coordinates~\cite{Invariance},
representing Cartesian coordinate based random variables $(\tilde{X}_1,X_2)$ and $(\tilde{Y}_1, Y_2)$ under the PN coordinate system as $(\tilde{A}_X, L_X)$ and $(\tilde{A}_X + \tilde{A}_{\tilde{Z},N},L_{\tilde{Z},N})$, respectively, gives
\begin{equation}
\mathcal{I}(\tilde{X}_1,X_2; \tilde{Y}_1, Y_2) = \mathcal{I}(\tilde{A}_X, L_X; \tilde{A}_X + \tilde{A}_{\tilde{Z},N},L_{\tilde{Z},N}),
\end{equation}
where the noise-related random variables $\tilde{A}_{\tilde{Z},N}$ and $L_{\tilde{Z},N}$, which are generated by $\tilde{Z}$ and $N$, are correlated with the random variable $\tilde{A}_X$.
%
%
Since $X_2= k\sqrt{\myP}\frac{\sqrt{\Theta_2}}{\Theta_1} \vert \tilde{X}_1 \vert^2$, $(\tilde{X}_1,X_2)$ lies on the paraboloid $\mathcal{U}$, i.e., $L_X$ is a constant which is equal to zero, $\tilde{A}_{X}$ can be represented by $(\tilde{X}_1,X_2)$ for brevity. 
Thus, we have
\begin{equation} \label{MI_coordinates}
\begin{aligned}
\mathcal{I}(\tilde{X}_1,X_2; \tilde{Y}_1, Y_2) 
&= \mathcal{I}(\tilde{A}_X; \tilde{A}_X + \tilde{A}_{\tilde{Z},N},L_{\tilde{Z},N})\\
&= h(\tilde{A}_X + \tilde{A}_{\tilde{Z},N},L_{\tilde{Z},N}) - h(\tilde{A}_X + \tilde{A}_{\tilde{Z},N},L_{\tilde{Z},N} \vert \tilde{A}_X)\\
&= h(\tilde{A}_X + \tilde{A}_{\tilde{Z},N}) + h(L_{\tilde{Z},N} \vert \tilde{A}_X + \tilde{A}_{\tilde{Z},N})\\
&-\left(
h(\tilde{A}_X + \tilde{A}_{\tilde{Z},N}\vert \tilde{A}_X) + 
h(L_{\tilde{Z},N} \vert \tilde{A}_X, \tilde{A}_X + \tilde{A}_{\tilde{Z},N})
\right).
\end{aligned}
\end{equation}
{\color{black}Since the expectations $\myexpect{\tilde{Z}}=(0,0)$ and $\myexpect{N}=0$, and the variances $\mathrm{Var}(\frac{\tilde{Z}}{\sqrt{\myP}}) \rightarrow 0$ and $\mathrm{Var}(k \frac{N}{\sqrt{\myP}}) \rightarrow 0$ as $\myP \rightarrow \infty$,
it is easy to see that the noise variable $\tilde{A}_{\tilde{Z},N}$ converges in probability towards $(0,0)$.
Thus, $\tilde{A}_X + \tilde{A}_{\tilde{Z},N}$ converges in probability towards $\tilde{A}_X$.

Furthermore, since convergence in probability implies convergence in distribution and the entropy function $h(\cdot)$ is continuous and defined based on the probability distribution of the input random variable~\cite{BookInfo}, 
we have $h(\tilde{A}_X + \tilde{A}_{\tilde{Z},N}) \rightarrow h(\tilde{A}_X)$ as $\myP \rightarrow \infty$.
Similarly, we have the convergence of the random variable, i.e., $\left(L_{\tilde{Z},N} , \tilde{A}_X + \tilde{A}_{\tilde{Z},N}\right) \rightarrow \left(L_{\tilde{Z},N} , \tilde{A}_X\right)$, hence the convergence of entropy, i.e.,
$h(L_{\tilde{Z},N} , \tilde{A}_X + \tilde{A}_{\tilde{Z},N}) \rightarrow h(L_{\tilde{Z},N} , \tilde{A}_X)$.

Therefore, the conditional entropy  
$
h(L_{\tilde{Z},N} \vert \tilde{A}_X + \tilde{A}_{\tilde{Z},N}) \triangleq h(L_{\tilde{Z},N} , \tilde{A}_X + \tilde{A}_{\tilde{Z},N}) - h(\tilde{A}_X + \tilde{A}_{\tilde{Z},N}) 
$
converges to $h(L_{\tilde{Z},N} \vert \tilde{A}_X) \triangleq h(L_{\tilde{Z},N} , \tilde{A}_X) - h(\tilde{A}_X)$, i.e., $h(L_{\tilde{Z},N} \vert \tilde{A}_X + \tilde{A}_{\tilde{Z},N})
\rightarrow 
h(L_{\tilde{Z},N} \vert \tilde{A}_X)$. 
Similarly, we have $h(L_{\tilde{Z},N} \vert \tilde{A}_X, \tilde{A}_X + \tilde{A}_{\tilde{Z},N}) \rightarrow h(L_{\tilde{Z},N} \vert \tilde{A}_X, \tilde{A}_X)$.
Together with the fact that $h(L_{\tilde{Z},N} \vert \tilde{A}_X)= h(L_{\tilde{Z},N} \vert \tilde{A}_X, \tilde{A}_X)$~\cite{BookInfo}, we have
$h(L_{\tilde{Z},N} \vert \tilde{A}_X + \tilde{A}_{\tilde{Z},N}) - h(L_{\tilde{Z},N} \vert \tilde{A}_X, \tilde{A}_X + \tilde{A}_{\tilde{Z},N})  \rightarrow 0$ as $\myP \rightarrow \infty$. }
Thus, the mutual information in \eqref{MI_coordinates} can {\color{black}asymptotically} be rewritten as
\begin{equation} \label{MI_asymptotic_appen}
\mathcal{I}(\tilde{X}_1,X_2; \tilde{Y}_1, Y_2) = h(\tilde{A}_X) - h(\tilde{A}_X + \tilde{A}_{\tilde{Z},N}\vert \tilde{A}_X).
\end{equation}

Then we calculate $h(\tilde{A}_X)$ and $h(\tilde{A}_X + \tilde{A}_{\tilde{Z},N}\vert \tilde{A}_X)$ as follows.

\subsubsection{$h(\tilde{A}_X)$}
Due to the fact that the probability contained in a differential area should not alter under a change of variables, we have 
\begin{equation} \label{first_derri_eq}
\vert f_{\tilde{X}}(\tilde{x}) \mathrm{d} S \vert = \vert f_{\tilde{A}_{X}}(\tilde{a}) \mathrm{d} \Sigma \vert ,
\end{equation}
where 
$\mathrm{d} S = \mathrm{d} u \mathrm{d} v$, 
$u = \mathrm{Real}\{\tilde{x}\}$, $v = \mathrm{Imag}\{\tilde{x}\}$, 
$\tilde{a}$ is the PN coordinate system representation of the point $(\tilde{x}_1,x_2)$,
$\mathrm{d} \Sigma$ is the differential area on the paraboloid $\mathcal{U}$, 
$f_{\tilde{A}_{X}}(\tilde{a})$ and $f_{\tilde{X}}(\tilde{x})$ are the pdfs of $\tilde{A}_{X}$ and $\tilde{X}$, respectively,
and
\begin{equation} \label{normal_pdf}
f_{\tilde{X}}(\tilde{x}) = \frac{1}{\pi} \exp\left( - \vert \tilde{x} \vert^2\right).
\end{equation}

Assuming that ${\bf r} = (\mathrm{Real}\{\tilde{x}_1\},\mathrm{Imag}\{\tilde{x}_1\}, x_2)$, which is a point on the paraboloid $\mathcal{U}$, and thus, based on \eqref{new_X1_X2}, we have 
\begin{equation} \label{my_Sigma}
\begin{aligned}
\frac{\partial {\bf r}}{\partial u} &= (\sqrt{\Theta_1 },0,2 k\sqrt{\myP}\sqrt{\Theta_2} u),\ 
\frac{\partial {\bf r}}{\partial v} = (0,\sqrt{\Theta_1 },2 k\sqrt{\myP} \sqrt{\Theta_2} v),\\
\mathrm{d} \Sigma &= \left\vert \frac{\partial {\bf r}}{\partial u} \times  \frac{\partial {\bf r}}{\partial v} \right\vert \mathrm{d} u \mathrm{d}v
=\Theta_1  \sqrt{4 \frac{k^2 \myP \Theta_2}{\Theta^2_1} \vert \tilde{x}_1\vert^2 + 1} \ \mathrm{d} u \mathrm{d}v,
\end{aligned}
\end{equation}
where $\times$ is the cross product operator.
Taking \eqref{my_Sigma}, \eqref{normal_pdf} and $\tilde{x} = \frac{\tilde{x}_1}{\sqrt{\Theta_1}}$ into \eqref{first_derri_eq}, after simplification, we have
\begin{equation} \label{f_Y1_Y2}
f_{\tilde{A}_{X}}(\tilde{a})  = f_{\tilde{A}_{X}}(\tilde{x}_1,x_2)  = \frac{1}{\pi \Theta_1  \sqrt{4 \frac{k^2 \myP \Theta_2}{\Theta^2_1} \vert \tilde{x}_1\vert^2 + 1}} \exp\left( - \frac{\vert \tilde{x}_1 \vert^2}{\Theta_1 }\right).
\end{equation}

The differential entropy of $\tilde{A}_{X}$ is derived as
\begin{equation} \label{joint_entropy}
\begin{aligned}
h(\tilde{A}_{X})
&= \iint - f_{\tilde{A}_{X}}(\tilde{a}) \log_2\left(f_{\tilde{A}_{X}}(\tilde{a})\right) \mathrm{d}\Sigma.
\end{aligned}
\end{equation}

Taking \eqref{f_Y1_Y2} and \eqref{my_Sigma} into \eqref{joint_entropy}, we have
\begin{equation} \label{highSNR_entropy_2}
\begin{aligned}
h(\tilde{A}_{X}) &= 
\int_{-\infty}^{\infty} \int_{-\infty}^{\infty} -
\frac{\exp\left( - \frac{\vert \tilde{x}_1 \vert^2}{\Theta_1 }\right)}{\pi} 
\log_2
\left(
\frac{\exp\left( - \frac{\vert \tilde{x}_1 \vert^2}{\Theta_1 }\right)}{\pi \Theta_1  \sqrt{4 \frac{k^2 \myP\Theta_2}{\Theta^2_1} \vert \tilde{x}_1\vert^2 + 1}} 
\right)\mathrm{d}u \mathrm{d}v\\
&\stackrel{(a)}{=}
2 \pi 
\int_{0}^{\infty} 
-
\frac{\exp\left( - \frac{r^2}{\Theta_1 }\right)}{\pi \Theta_1 } 
\log_2
\left(
\frac{\exp\left( - \frac{r^2}{\Theta_1 }\right)}{\pi \Theta_1  \sqrt{4 \frac{k^2 \myP \Theta_2}{\Theta^2_1} r^2 + 1}} 
\right)r \mathrm{d} r\\
&= {\log_2 (\pi e \Theta_1)}+\frac{1}{2 \log(2)} \exp\left(\frac{\Theta_1 }{4 k^2 \myP \Theta_2}\right) \Ei \left(\frac{\Theta_1 }{4  k^2 \myP \Theta_2}\right),
\end{aligned}
\end{equation}
where $(a)$ is because of the polar transformation.

\subsubsection{Asymptotic $h(\tilde{A}_X + \tilde{A}_{\tilde{Z},N}\vert \tilde{A}_X)$}
For a given value of $\tilde{A}_X$, based on the definition of the PN coordinate system, the random variable $\tilde{A}_X + \tilde{A}_{\tilde{Z},N}$ is treated as the projection of the three-dimensional circular symmetric Gaussian noise, i.e., $(\frac{\tilde{Z}}{\sqrt{\myP}},k \frac{N}{\sqrt{\myP}})$ shifted by $\tilde{A}_X$, on the parabolic~$\mathcal{U}$ by the normal vectors of it.

As $\myP \rightarrow \infty$, $\tilde{A}_X +\tilde{A}_{\tilde{Z},N}$ converges in probability toward $\tilde{A}_X$, 
thus, for a given value of $\tilde{A}_X$, the effective range of the random variable $\tilde{A}_{X}+\tilde{A}_{\tilde{Z},N}$ on the paraboloid $\mathcal{U}$, is very small, which is close to the tangent plane of $\mathcal{U}$ at the point $\tilde{A}_{X}$. 
Therefore, 
the random variable $\tilde{A}_X +\tilde{A}_{\tilde{Z},N}$ converges in probability toward the random variable generated by the
projection of the three-dimensional circular symmetric Gaussian noise on the tangent plane of $\mathcal{U}$ at the point $\tilde{A}_{X}$ by the normal vector of the point $\tilde{A}_{X}$,
which is the well-known two-dimensional {\color{black}complex} Gaussian random variable with variance $\sigmaone/\myP$.
Therefore, given $\tilde{A}_X$, the entropy $h(\tilde{A}_X + \tilde{A}_{\tilde{Z},N})$ is approaching to $\log_2 (\pi e \sigmaone/\myP)$ which does not rely on $\tilde{A}_X$. Thus, as $\myP \rightarrow \infty$, the asymptotic conditional entropy is
\begin{equation} \label{part2}
h(\tilde{A}_X + \tilde{A}_{\tilde{Z},N} \vert \tilde{A}_X ) 
= \mathbb{E}_{\tilde{A}_X}\left[ h(\tilde{A}_X + \tilde{A}_{\tilde{Z},N} \vert \tilde{A}_X =\tilde{a}_X)  \right]
\approx \log_2 \left(\frac{\pi e \sigmaone}{\myP}\right).
\end{equation}

\subsubsection{Asymptotic $\mathcal{I}(\sqrt{\myP} X; \tilde{Y}_1, Y_2)$}	
Taking \eqref{highSNR_entropy_2} and \eqref{part2} into \eqref{MI_asymptotic_appen}, \eqref{theory1_1} is obtained.

\subsection{Proof of \eqref{theory1_2}}
Based on the power series expansion of the exponential integral function \cite{Handbook}
\begin{equation}
\Ei(x) = - \gamma - \ln x - \sum_{n=1}^{\infty} \frac{(-x)^n}{n\ n!},\ x >0,
\end{equation}
where $\gamma \approx 0.5772$ is Euler's constant,
as $\myP$ is sufficiently large, we have
\begin{equation} \label{asymp_expEi}
\lim\limits_{\myP \rightarrow \infty}  \exp\left(\frac{\Theta_1 \sigmatwo}{2 \sigmaone \myP \Theta_2}\right) \Ei\left(\frac{\Theta_1 \sigmatwo}{2 \sigmaone \myP \Theta_2}\right) = 
- \gamma + \ln \left(\frac{2 \sigmaone \myP \Theta_2}{\Theta_1 \sigmatwo}\right).
\end{equation}
Substituting \eqref{asymp_expEi} into \eqref{theory1_1}, \eqref{theory1_2} is obtained.

%
%
%

\end{appendices}

\ifCLASSOPTIONcaptionsoff
\fi
\bibliographystyle{IEEEtran}

\end{document}